\newtheorem{Theorem}{Theorem}[section]
\newtheorem{Lemma}[Theorem]{Lemma}
\newtheorem{Corollary}[Theorem]{Corollary}
\newtheorem{Conjecture}[Theorem]{Conjecture}
\newtheorem*{Theorem*}{Theorem}
\theoremstyle{definition}
\newtheorem{Definition}[Theorem]{Definition}
\newtheorem*{Definition*}{Definition}
\newtheorem{Example}[Theorem]{Example}
\newcommand{\R}{\mathbb{R}}
\newcommand{\ra}{\rightarrow}
\newcommand{\T}{\mathcal{T}}
\title[Tree-metrizable HGT networks]{Tree-metrizable HGT networks} 
\author{Michael Hendriksen and Andrew Francis}
\address{Centre for Research in Mathematics, Western Sydney University, Australia.}
\email{m.hendriksen@westernsydney.edu.au}
\email{a.francis@westernsydney.edu.au}
\date{\today}
\begin{document}

\begin{abstract}
Phylogenetic trees are often constructed by using a metric on the set of taxa that label the leaves of the tree.  While there are a number of methods for constructing a tree using a given metric, such trees will only display the metric if it satisfies the so-called ``four point condition'', established by Buneman in 1971.  While this condition guarantees that a unique tree will display the metric, meaning that the distance between any two leaves can be found by adding the distances on arcs in the path between the leaves, it doesn't exclude the possibility that a phylogenetic network might also display the metric.  This possibility was recently pointed out and ``tree-metrized'' networks --- that display a tree metric --- with a single reticulation were characterized.  In this paper, we show that in the case of HGT (horizontal gene transfer) networks, in fact there are tree-metrized networks containing many reticulations.
\end{abstract}

\maketitle

\section{Introduction}

Phylogenetic trees have been used to represent the relationships among a set of taxa labelling the leaves since the days of Darwin \cite{Felsenstein2004}.  Especially in the case of trees drawn with a root, the arcs of such rooted trees represent an evolutionary process proceeding over time away from the root and towards the leaves, and vertices in the tree represent divergence, or speciation, events.   Likewise, phylogenetic \emph{networks} have come to prominence recently as a way to represent evolutionary processes in which branches of the tree interact with each other.  Two key examples of such interactions are \emph{hybridization}, in which two species interbreed, resulting in a third, and \emph{horizontal gene transfer}, in which genetic material from one species is acquired by a second \cite{Huson2005}.  

In particular, horizontal gene transfer is highly relevant for studies of evolutionary history --- it is thought to be the primary driver of early cellular evolution \cite{Woese2002}, and still is relevant to ongoing evolution, with over half of total genes in the genomes of human-associated microbiota involved in horizontal gene transfer \cite{Jeong2019}. We will therefore focus in particular on HGT networks throughout this article.

While phylogenetic trees and networks can be constructed in many ways, current approaches often involve a metric on the set of taxa.  That is, a matrix giving the pairwise distances between each pair of leaves of the tree or network.  While such distances are natural to define on a tree, there are different ways one may define the distance between leaves in a network; we will give more details of the approach we take to this, below.  

In this paper we are concerned with metrics on a set of taxa that are able to be placed on a tree --- ``tree metrics'' --- but that can also be placed on a network.  It was recently observed that some tree metrics have this property, and the resulting networks that have a single ``reticulation''  were characterized~\cite{francis2015tree}.  The present paper extends this by investigating networks with \emph{more} than one reticulation that can nevertheless carry tree metrics.  We call such networks ``tree-metrizable''.  

Fortunately, there is an explicit characterization of when a metric can be placed on a tree. The famous ``four-point condition'' (Theorem \ref{t:4PC}), due to Buneman~\cite{buneman1971recovery}, says that if a metric $d$ on a set $X$ satisfies the condition then there exists a unique weighted phylogenetic tree with leaf-set $X$ whose induced metric is $d$.  This, incidentally, provides a characterisation of weighted phylogenetic trees on $X$: a pair of trees are isomorphic as weighted graphs if and only if their induced metrics are identical.   

Metrics aside, reticulation events such as horizontal gene transfer (HGT) and hybridization are not able to be represented on a tree.  For these, various forms of phylogenetic network can be defined, generalizing phylogenetic trees.

In particular, reticulation arcs can be regarded as instantaneous events which carry weights representing the proportion of genetic material carried along them.  This model considers the phylogenetic network to be a linear combination of the trees that it displays, and as such, a metric can be defined from the network by taking a convex linear combination of the metrics corresponding to the displayed trees, and whose coefficients are taken from the weights on the reticulation arcs of the network.   

Surprisingly, it has recently been shown that it is possible for both hybridization networks and HGT networks (defined in Section~\ref{s:HGTnetworks}) to produce metrics that satisfy the four-point condition~\cite{francis2015tree}.  That is, they may carry tree metrics.  The implication is that a metric being a tree metric cannot rule out the evolutionary history of the taxa $X$ being explained by a network.  In fact, \emph{any} tree metric can be displayed by a network (\cite[Theorem 2]{francis2015tree}), although it may be a very simple one.  

A natural question then, is what phylogenetic networks might possibly carry tree metrics?  We call such networks \emph{tree-metrizable networks}.  This question, for (binary) hybridization networks, was answered in \cite{francis2015tree}: the answer was ``not many''.  There are tight restrictions on where hybridizations can occur for the network to carry a tree metric.  The case of HGT networks, however, was left open.  Conditions on networks with a single HGT arc were established, but an example of an HGT network with two HGT arcs was given that carries a tree metric, and what's more, the tree metric corresponded to a tree that was not a base-tree of the network (in the sense of~\cite{francis2015phylogenetic})!

This paper seeks to explore this phenomenon.  
That is, we are interested in the situation in which the inferred metric from a weighted rooted binary HGT network might satisfy the four-point condition, and so be indistinguishable from a tree.  
The key approach in this paper is to graft structures on to the leaf of a tree or network, while maintaining the existence of a metric on the leaves.  With such tools, complicated tree-metrizable networks can be built up from a base tree or network.  

The paper begins with background definitions and results on metrics on trees and HGT networks (Section~\ref{s:background}).  We then begin our exploration of tree-metrizable HGT networks in Section~\ref{s:tree-metrizability}, by first extending the four-point condition to the HGT network context, and then deriving some natural extensions to the results of~\cite{francis2015tree}.  These results effectively show that tree-metrizable networks can be constructed with any number of HGT arcs at all, by adding certain HGT arcs (Lemma~\ref{OmitAdjacent}). 

The final two sections show how complicated tree-metrizable networks can be constructed by grafting trees onto small tree-metrizable networks (Section~\ref{s:network.stock}), and then the reverse (Section~\ref{s:tree.stock}).   The paper concludes with a discussion of the results and some further questions, in Section~\ref{s:discussion}.

\section{Background}\label{s:background} 

\subsection{Trees and tree metrics}

Unless otherwise stated, all trees in this paper are rooted binary phylogenetic $X$-trees, as defined below: 

\begin{Definition}
A \textit{rooted binary phylogenetic $X$-tree} on a set $X$ is a rooted acyclic digraph with the following properties:
\begin{enumerate}
\setlength\itemsep{0em}
\item the \textit{root} vertex has in-degree $0$ and out-degree $2$;
\item $X$ labels the set of vertices with out-degree $0$ and in-degree $1$, called \emph{leaves}; and
\item all remaining vertices have in-degree $1$ and out-degree $2$.
\end{enumerate}
\end{Definition}

The vertices of degree 3 --- all other than the root and the leaves --- are called \emph{internal} vertices.
 We write $V(T)$ for the set of vertices of $T$, and $E(T)$ for the set of arcs.  In a rooted tree, arcs are directed away from the root, and so are ordered pairs of vertices: we will denote the (directed) arc from $u$ to $v$ by $(u,v)$, for $u,v\in V(T)$.  

Metrics, however, are not affected by root placement, and so we will frequently need to consider the \emph{unrooted} tree obtained from the rooted tree by treating arcs as unordered pairs of vertices (referred to as edges to avoid confusion), and ``suppressing'' the root.  By \emph{suppressing} a vertex $v$ of degree 2, we mean deleting the vertex $v$ and the edges incident to it, $\{u_1,v\}$ and $\{u_2,v\}$, and adding a new edge $\{u_1,u_2\}$ (note that an edge in an unrooted tree is simply a pair of distinct vertices).  The reverse operation --- deleting an edge $\{u_1,u_2\}$ and replacing it with a new vertex $v$ and a pair of edges  $\{u_1,v\}$ and $\{u_2,v\}$ --- is called \emph{subdividing} the edge $\{u_1,u_2\}$.  Analogous operations can be defined for directed graphs, so long as when suppressing a vertex of degree two it has indegree 1 and outdegree 1.

We will say two trees $T_1$ and $T_2$ are isomorphic if there is a one-to-one correspondence between their vertices $\phi:V(T_1)\to V(T_2)$ that also maps their edges: $E(T_2)=\{(\phi(u),\phi(v))\mid (u,v)\in E(T_1)\}$ and preserves leaf labels.  Note that if isomorphic trees are rooted, their roots must map to each other.

In particular, there are exactly $3$ isomorphism classes of unrooted trees on a set of taxa $X$ for $|X|=4$ (referred to as a \textit{quartet}). If $X=\{a,b,c,d\}$, and we denote the tree in which the unique paths from $a$ to $b$ and $c$ to $d$ do not intersect by $ab|cd$, then these classes correspond to $ab|cd$, $ac|bd$ and $ad|bc$.

A \emph{weight function} on a rooted binary phylogenetic $X$-tree $T=(V,E)$ is a map $w: E \rightarrow \R^{>0}$ that assigns strictly positive weights to the arcs of a tree. We denote a tree $T$ with associated weight function $w$ by $T^w$. This allows us to define the distance $d(x,y)$ between two leaves $x$ and $y$ in $X$ to be the sum of the weights on the arcs in the unique path between $x$ and $y$. This distance is referred to as the $\textit{tree distance}$ between $x$ and $y$, and any set of pairwise distances between elements of $X$ that can be represented on a tree in this way is referred to as a \textit{tree metric}. If there are two trees $T_1^{w_1}$ and $T_2^{w_2}$ such that $T_1$ and $T_2$ are isomorphic as unrooted trees (but not necessarily $w_1 = w_2$), we denote this as $T_1 \cong T_2$. If $w_1=w_2$ as well, we will refer to $T_1^{w_1}$ and $T_2^{w_2}$ as \textit{isomorphic as weighted trees}, denoted $T_1^{w_1} \cong_w T_2^{w_2}$, or $T_1 \cong_w T_2$ when the corresponding weight functions are clear from context.

A fundamental characterisation of tree metrics is the `four point condition'.

\begin{Theorem}[Four point condition~\cite{buneman1971recovery}]\label{t:4PC}
A set of distances $d$ on a set $X$ is a tree metric on $X$ if and only if for any $x_1,x_2,x_3,x_4 \in X$, two of the three sums 
\[
d(x_1,x_2)+d(x_3,x_4);\quad d(x_1,x_3)+d(x_2,x_4);\quad d(x_1,x_4)+d(x_2,x_3)
\] 
are equal, and are greater than or equal to the third sum.
\end{Theorem}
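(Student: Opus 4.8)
The plan is to prove the two implications separately, disposing of the ``only if'' direction by a direct computation on an induced quartet and devoting the real work to the ``if'' direction, which I would handle by induction on $|X|$ together with a leaf-grafting argument.

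For necessity, suppose $d$ is the metric induced by a weighted tree $T^w$ and fix four leaves $x_1,x_2,x_3,x_4$. I would pass to the minimal subtree of $T$ spanning these leaves; after suppressing degree-two vertices this is an unrooted quartet, say with topology $x_1x_2|x_3x_4$, consisting of two internal vertices $u,v$ joined by a path of total weight $m\ge 0$, with pendant paths of weights $\alpha,\beta$ from $x_1,x_2$ to $u$ and $\gamma,\delta$ from $x_3,x_4$ to $v$. Substituting the resulting expressions for the six pairwise distances into the three sums gives $d(x_1,x_2)+d(x_3,x_4)=\alpha+\beta+\gamma+\delta$, while the other two sums both equal $\alpha+\beta+\gamma+\delta+2m$. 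Hence two of the three sums are equal and dominate the third, which is exactly the four-point condition; the three possible quartet topologies correspond to the three choices of which sum is the smallest.

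For sufficiency I would argue by induction on $n=|X|$. After the base cases (any dissimilarity on at most three points is realised by a star tree, using that the four-point condition applied to quadruples with a repeated entry forces the triangle inequality, and hence non-negative pendant lengths), assume the result for sets of size $n-1$. Given $X$ with $|X|=n$, choose $x\in X$ and observe that the restriction $d'$ of $d$ to $X'=X\setminus\{x\}$ again satisfies the four-point condition, so by induction there is a weighted tree $T'$ on $X'$ realising $d'$. It remains to graft $x$ back onto $T'$. For each pair $a,b\in X'$ the quantity
\[
g_a(b)=\tfrac12\bigl(d(a,x)+d(a,b)-d(b,x)\bigr)
\]
records how far along the $a$--$b$ path in $T'$ the attachment point of $x$ should lie, and the four-point condition applied to quadruples containing $x$ is precisely what forces these measurements to be mutually consistent: they single out a well-defined point $p$ (possibly subdividing an edge of $T'$) at which to attach $x$ by a pendant edge of length $\tfrac12(d(a,x)+d(b,x)-d(a,b))$ for any pair $a,b$ separated by $p$. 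Attaching $x$ there yields a tree $T$ on $X$, and one checks that its induced metric agrees with $d$ on every pair.

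The main obstacle is the consistency argument in the inductive step: showing that the attachment data $g_a(b)$ determined by the four-point condition really do locate a single point of $T'$, independent of the reference pair, and that the pendant length is well defined. Concretely this reduces to a case analysis of how $x$ together with three leaves of $T'$ can be arranged, using the equality-of-two-sums part of the condition to rule out contradictory placements; non-negativity of every edge length in the final tree then follows from the same inequalities. (Uniqueness of the realising tree, noted in the surrounding discussion, drops out of this construction as well, since both the topology and every edge length are forced by $d$.)
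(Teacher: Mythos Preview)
The paper does not prove this theorem; it is quoted as a classical background result with a citation to Buneman~\cite{buneman1971recovery}, so there is no ``paper's own proof'' to compare against.

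Your sketch is the standard textbook route and is sound in outline. The necessity computation on an induced quartet is correct and complete. For sufficiency, induction on $|X|$ with a leaf-attachment step driven by the Gromov product $g_a(b)=\tfrac12(d(a,x)+d(a,b)-d(b,x))$ is exactly the usual argument, and you correctly identify the consistency of the attachment point as the crux. Two minor remarks. First, your derivation of the triangle inequality from the four-point condition on a quadruple with a repeated entry is legitimate only if the hypothesis is read as ranging over not-necessarily-distinct $x_1,x_2,x_3,x_4$; the paper's phrasing is ambiguous here, though either reading is standard. Second, the paper's weight functions are strictly positive, whereas the four-point condition with ``greater than or equal to'' naturally produces trees with possibly zero-length internal edges; reconciling this (either by allowing degenerate edges or by noting that equality collapses the quartet to a star) is a detail worth flagging if you write this out in full.
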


\subsection{HGT networks}\label{s:HGTnetworks}

A \emph{horizontal gene transfer} (HGT) network is a generalisation of a binary phylogenetic tree that allows the modelling of certain reticulation events.

\begin{Definition}
\label{HGTDef}
A \textit{HGT network} $N$ on a set $X$ is a rooted acyclic digraph $(V,E)$ with the following properties:
\begin{enumerate}
\setlength\itemsep{0em}
\item the \textit{root} vertex has in-degree $0$ and out-degree $2$;
\item $X$ labels the set of vertices with out-degree $0$ and in-degree $1$ (the \emph{leaves});
\item all remaining vertices are \textit{interior vertices} and have either in-degree $1$ and out-degree $2$ (a \textit{tree vertex}), or in-degree $2$ and out-degree $1$ (a \textit{reticulation vertex});
\item the arc set $E$ of $N$ is the disjoint union of two subsets, the set of ‘reticulation arcs’ $E_R$ and the set of ‘tree arcs’ $E_T$; moreover each reticulation arc ends at a reticulation vertex, and each reticulation vertex has exactly one incoming reticulation arc; 
\item every interior vertex has at least one outgoing tree arc; and
\item there is a time function $t : V \rightarrow \R$ so that (a) if $(u,v)$ is a tree arc then $t(u) < t(v)$ and (b) if $(u,v)$ is a reticulation arc, then $t(u) = t(v)$. 
\end{enumerate}
\end{Definition}

Informally, one can think of an HGT network as a binary phylogenetic $X$-tree for which certain arcs are subdivided and a horizontal arc is placed between the subdivisions. Throughout this paper, unless otherwise stated, all networks are HGT networks. 

Given an HGT network $N$ on $X$, suppose that for each reticulation vertex we delete exactly one of the incoming arcs. If we then delete any unlabelled leaves formed by this process, the resulting graph is a rooted tree on $X$ whose root is the root of $N$. If all of the degree $2$ vertices aside from the root are then suppressed, the resulting graph is a rooted {binary} phylogenetic $X$-tree, $T$. We say that $T$ is displayed by $N$, and $\T_N$ denotes the set of trees displayed by $N$. See Figure \ref{f:unlabelled.leaves} for an example of this process.

\begin{figure}[H]
\centering
\begin{tikzpicture}[xscale=.3,yscale=0.4]
\draw (0,0) --(7,7)--(14,0);
\draw (3,3) --(6,0);
\draw (11,3) --(8,0);

\draw [>=latex, ->, dashed, thick] (4,2) --(10,2); 
\draw [>=latex, ->, dashed, thick] (1,1) --(9,1);

\draw[fill] (3,3) circle [radius=1.5pt];
\draw[fill] (11,3) circle [radius=1.5pt];
\draw[fill] (4,2) circle [radius=1.5pt];
\draw[fill] (10,2) circle [radius=1.5pt];
\draw[fill] (1,1) circle [radius=1.5pt];
\draw[fill] (9,1) circle [radius=1.5pt];

\node[below] at (0,0) {$\mathstrut x_1$};
\node[below] at (6,0) {$\mathstrut x_2$};
\node[below] at (8,0) {$\mathstrut x_3$};
\node[below] at (14,0) {$\mathstrut x_4$};

\node[above] at (7,2) {$\mathstrut a_1$};
\node[below] at (3,1) {$\mathstrut a_2$};

\node[below] at (7,-2) {$\mathstrut (i)$};
\end{tikzpicture}
\begin{tikzpicture}[xscale=.3,yscale=0.4]
\draw (0,0) --(7,7)--(14,0);
\draw (3,3) --(6,0);
\draw (9,1) --(8,0);

\draw [>=latex, ->, dashed, thick] (4,2) --(10,2); 
\draw [>=latex, ->, dashed, thick] (1,1) --(9,1);

\draw[fill] (3,3) circle [radius=1.5pt];
\draw[fill] (11,3) circle [radius=1.5pt];
\draw[fill] (4,2) circle [radius=1.5pt];
\draw[fill] (10,2) circle [radius=1.5pt];
\draw[fill] (1,1) circle [radius=1.5pt];
\draw[fill] (9,1) circle [radius=1.5pt];

\node[below] at (0,0) {$\mathstrut x_1$};
\node[below] at (6,0) {$\mathstrut x_2$};
\node[below] at (8,0) {$\mathstrut x_3$};
\node[below] at (14,0) {$\mathstrut x_4$};

\node[above] at (7,2) {$\mathstrut a_1$};
\node[below] at (3,1) {$\mathstrut a_2$};

\node[below] at (7,-2) {$\mathstrut (ii)$};
\end{tikzpicture}
\begin{tikzpicture}[xscale=.3,yscale=0.4]
\draw (0,0) --(7,7)--(14,0);
\draw (3,3) --(6,0);
\draw (1,1) --(2,0);

\draw[fill] (3,3) circle [radius=1.5pt];
\draw[fill] (1,1) circle [radius=1.5pt];

\node[below] at (0,0) {$\mathstrut x_1$};
\node[below] at (6,0) {$\mathstrut x_2$};
\node[below] at (2,0) {$\mathstrut x_3$};
\node[below] at (14,0) {$\mathstrut x_4$};

\node[below] at (7,-2) {$\mathstrut (iii)$};
\end{tikzpicture}
\caption{(i) an HGT network $N$ with HGT arcs $a_1,a_2$. Denote the other parent arcs of the reticulation vertices by $a_1',a_2'$ respectively; (ii) The resulting graph after deleting $a_1',a_2'$; (iii) The resulting display tree after deletion of unlabelled leaves and suppression of degree $2$ nodes.}
\label{f:unlabelled.leaves}
\end{figure}
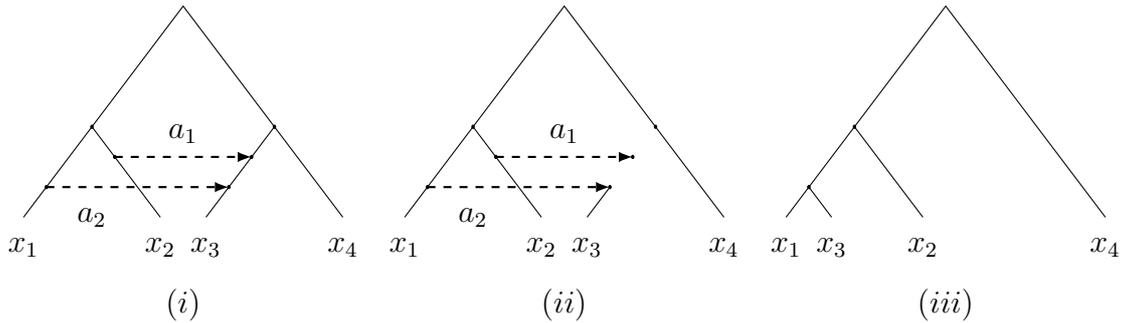

HGT networks have the particularly useful property of having a `canonical' display tree, obtained by deleting all of the reticulation arcs. This tree is referred to as the \textit{underlying tree} of $N$, and is denoted $T_N$.  Note that the underlying tree of an HGT network is a \emph{base} tree in the sense of~\cite{francis2015phylogenetic}, but is not necessarily the only base tree of the network.

\begin{figure}[H]
\centering
\begin{tikzpicture}[scale=0.3]
\draw (0,0) --(5,5)--(10,0);
\draw (5,5) --(7,7)--(14,0);
\draw (9,1) --(8,0);
\draw (2,2) --(4,0);
\draw (8,2) --(6,0);
\draw [>=latex, ->, dashed, thick] (1.5,1.5) --(7.5,1.5);
\draw [>=latex, <-, dashed, thick] (6,4) --(10,4);
\draw [>=latex, ->, dashed, thick] (3.5,0.5) --(6.5,0.5);
\draw [>=latex, ->, dashed, thick] (8.5,1.5) --(12.5,1.5);

\draw[fill] (2,2) circle [radius=1.5pt];
\draw[fill] (8,2) circle [radius=1.5pt];
\draw[fill] (1.5,1.5) circle [radius=1.5pt];
\draw[fill] (7.5,1.5) circle [radius=1.5pt];
\draw[fill] (3.5,0.5) circle [radius=1.5pt];
\draw[fill] (6.5,0.5) circle [radius=1.5pt];
\draw[fill] (6,4) circle [radius=1.5pt];
\draw[fill] (10,4) circle [radius=1.5pt];
\draw[fill] (8.5,1.5) circle [radius=1.5pt];
\draw[fill] (12.5,1.5) circle [radius=1.5pt];
\draw[fill] (5,5) circle [radius=1.5pt];
\draw[fill] (9,1) circle [radius=1.5pt];

\node[below] at (0,0) {$\mathstrut x_1$};
\node[below] at (4,0) {$\mathstrut x_2$};
\node[below] at (6,0) {$\mathstrut x_3$};
\node[below] at (8,0) {$\mathstrut x_4$};
\node[below] at (10,0) {$\mathstrut x_5$};
\node[below] at (14,0) {$\mathstrut x_6$};
\node at (7,-4) {$\mathstrut{(i)}$};
\end{tikzpicture}
\begin{tikzpicture}[scale=0.3]
\draw (0,0) --(5,5)--(10,0);
\draw (5,5) --(7,7)--(14,0);
\draw (9,1) --(8,0);
\draw (2,2) --(4,0);
\draw (8,2) --(6,0);

\draw[fill] (2,2) circle [radius=1.5pt];
\draw[fill] (8,2) circle [radius=1.5pt];
\draw[fill] (5,5) circle [radius=1.5pt];
\draw[fill] (9,1) circle [radius=1.5pt];

\node[below] at (0,0) {$\mathstrut x_1$};
\node[below] at (4,0) {$\mathstrut x_2$};
\node[below] at (6,0) {$\mathstrut x_3$};
\node[below] at (8,0) {$\mathstrut x_4$};
\node[below] at (10,0) {$\mathstrut x_5$};
\node[below] at (14,0) {$\mathstrut x_6$};
\node at (7,-4) {$\mathstrut{(ii)}$};
\end{tikzpicture}
\caption{(i) an HGT network $N$, with reticulation arcs shown dashed; (ii) the underlying tree $T_N$ of $N$.}
\end{figure}
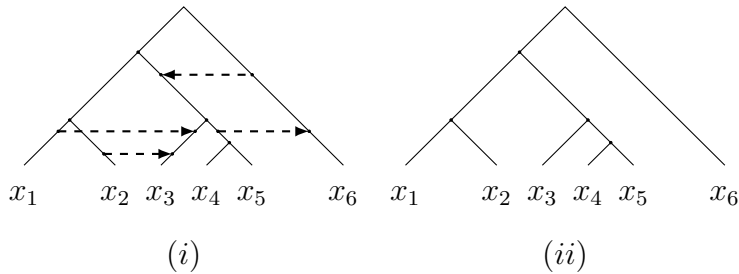

\subsection{HGT network distances}

Following~\cite{francis2015tree}, we define distances on a network $N$ by treating it as a weighted union of the set of $X$-trees obtained by making choices at each reticulation.  For each vertex $v$ in the set $V_R$ of reticulation vertices of $N$, let $R(v)$ denote the two arcs that end at $v$. 
We write $N^w$ for a network $N$ with $w$ a weight function on the tree arcs, $w : E_T \rightarrow \R^{>0}$ and let $\beta$ be a strictly positive probability distribution on the set $F_N$ of functions $f : V_R \rightarrow E$ for which $f(v) \in R(v)$. Each function $f$ describes a weighted tree $T_f$ with induced weight function $w_f$, by specifying a parent for each reticulation vertex. This function $f$, and its tree $T_f$, can then be given an associated probability $\beta_f$, which we construct as follows. 

For each reticulation vertex $v$ with incoming arcs $R(v)=\{a,a'\}$, we associate a function $\alpha$ that gives a number between 0 and 1 to each such reticulation arc,  $\alpha : R(v) \rightarrow (0,1)$, that satisfies $\alpha(a)+\alpha(a') = 1$.  We refer to $\alpha(a)$ and $\alpha(a')$ as the \textit{reticulation probabilities} of $a$ and $a'$ respectively. 
Then let
\begin{equation} 
\beta_f = \prod_{v \in V_R} \alpha(f(v)).
\label{eq:beta.def}
\end{equation}
That is, $\beta_f$ is the product of the weights on the arcs chosen by $f$ for each reticulation vertex.

A distance function 
\[d = d_{\left(N^w,\beta\right)} : X \times X \rightarrow \R^{\ge 0}\]
on $X$ can then be defined for $N^w$ by setting
\begin{equation} d(x,y) = \sum_{f \in F_N} \beta_f d_{\left(T_f^{w_f}\right)} (x,y),
\label{eq:dist.on.N}
\end{equation}
where $w_f$ is the weight function induced by $N^w$ on $T_f$. If there are no reticulation vertices in $N^w$, $d$ is the tree metric $d_{T^w}$. As noted in~\cite[\S 2.3]{francis2015tree}, since $d_{(N^w,\beta)}$ is a convex combination of metrics on $X$, $d$ is also a metric. 

We denote the set of weighted trees obtained in this way from $N^w$ by $\T_N^w$.

The probability distribution on functions $f$ naturally corresponds to a probability distribution on the associated trees $T^w_f\in \T_N^w$, by setting $\beta(T^w_f)=\beta(f)$.  We will drop the reference to $w$ and $f$ where this is not explicitly needed, writing $\beta(T)$.

\begin{Example}
Consider the  weighted network $N^w$ with horizontal reticulation arcs $A,B,C$ shown in Figure~\ref{DisplayExample}. We intend to calculate the associated probability $\beta_f$ of a particular weighted display tree. We have omitted the weights from the diagram for ease of interpretation, but note that we are calculating $\beta_f$ for a particular \textit{weighted} display tree.
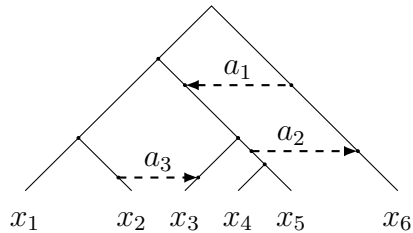
\begin{figure}[H]
\centering
\begin{tikzpicture}[scale=0.35]
\draw (0,0) --(5,5)--(10,0);
\draw (5,5) --(7,7)--(14,0);
\draw (9,1) --(8,0);
\draw (2,2) --(4,0);
\draw (8,2) --(6,0);
\draw [>=latex,<-, dashed, thick] (6,4) --(10,4);
\draw [>=latex, ->, dashed, thick] (3.5,0.5) --(6.5,0.5);
\draw [>=latex, ->, dashed, thick] (8.5,1.5) --(12.5,1.5);

\draw[fill] (2,2) circle [radius=1.5pt];
\draw[fill] (8,2) circle [radius=1.5pt];
\draw[fill] (3.5,0.5) circle [radius=1.5pt];
\draw[fill] (6.5,0.5) circle [radius=1.5pt];
\draw[fill] (6,4) circle [radius=1.5pt];
\draw[fill] (10,4) circle [radius=1.5pt];
\draw[fill] (8.5,1.5) circle [radius=1.5pt];
\draw[fill] (12.5,1.5) circle [radius=1.5pt];
\draw[fill] (5,5) circle [radius=1.5pt];
\draw[fill] (9,1) circle [radius=1.5pt];

\node[below] at (0,0) {$\mathstrut x_1$};
\node[below] at (4,0) {$\mathstrut x_2$};
\node[below] at (6,0) {$\mathstrut x_3$};
\node[below] at (8,0) {$\mathstrut x_4$};
\node[below] at (10,0) {$\mathstrut x_5$};
\node[below] at (14,0) {$\mathstrut x_6$};
\node[above=-0.1cm] at (5,0.5) {$\mathstrut a_3$};
\node[above=-0.1cm] at (10,1.5) {$\mathstrut a_2$};
\node[above=-0.1cm] at (8,4) {$\mathstrut a_1$}; 
\end{tikzpicture}
\caption{A weighted network $N^w$ (with weights omitted) on $6$ leaves.}
\label{DisplayExample}
\end{figure}

Denote the second arc ending at the same vertex as $a_1,a_2$ and $a_3$ respectively by $a_1', a_2'$ and $a_3'$. Then by making the selection $a_1,a_2',a_3$ (and thus deleting $a_1',a_2$ and $a_3'$), we obtain the following display tree $T^w$.
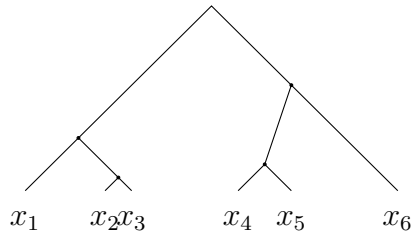
\begin{figure}[H]
\centering
\begin{tikzpicture}[scale=0.35]
\draw (9,1)--(10,0);
\draw (0,0) --(7,7)--(14,0);
\draw (9,1) --(8,0);
\draw (2,2) --(4,0);
\draw (3.5,0.5) --(3,0);
\draw (9,1) --(10,4);

\draw[fill] (2,2) circle [radius=1.5pt];
\draw[fill] (3.5,0.5) circle [radius=1.5pt];
\draw[fill] (10,4) circle [radius=1.5pt];
\draw[fill] (9,1) circle [radius=1.5pt];

\node[below] at (0,0) {$\mathstrut x_1$};
\node[below] at (3,0) {$\mathstrut x_2$};
\node[below] at (4,0) {$\mathstrut x_3$};
\node[below] at (8,0) {$\mathstrut x_4$};
\node[below] at (10,0) {$\mathstrut x_5$};
\node[below] at (14,0) {$\mathstrut x_6$};
\end{tikzpicture}
\caption{The weighted display tree $T^w$ (with weights omitted) obtained from $N^w$ in Figure \ref{DisplayExample} by deleting $a_1',a_2$ and $a_3'$.}
\end{figure}

Then if the reticulation probabilities are $\alpha(a_1)=0.6, \alpha(a_2)=0.2, \alpha(a_3)=0.1$, then the probability assigned to $T^w$ is
\begin{align*}
\beta(T) & = \alpha(a_1)\alpha(a_2')\alpha(a_3) \\
& = \alpha(a_1)(1-\alpha(a_2))\alpha(a_3) \\
& = 0.6 \times 0.8 \times 0.1 \\
& = 0.048.
\end{align*} 
\end{Example}
Somewhat surprisingly, it has recently been shown that HGT networks under this weighted average distance model can obey the four-point condition~\cite{francis2015tree}. That is, the distances represented by some HGT networks can also be represented on a unique tree.  We call such networks ``tree-metrizable'':

\begin{Definition}
Let $N$ be an HGT network on $X$. If there exist arc weights and reticulation probabilities that can be placed on $N$ so that $d_N$ is a tree metric that can be placed on some unweighted tree $T$, we say that $N$ is \textit{tree-metrizable}, or specifically \textit{$T$-metrizable}.
\end{Definition}

We now provide some of the background results on tree-metrizable networks, with wording changed to use the language of tree-metrizability. Throughout this paper, we say that two arcs are adjacent if they are adjacent in the unrooted tree obtained by suppressing the root vertex.

\begin{Lemma}[\cite{francis2015tree}, Lemma 4]
\label{AdjacentLemma}
For any unweighted HGT network $N$, if each reticulation arc is between adjacent tree arcs of $T_N$, then $N$ is $T$-metrizable if and only if $T \cong T_N$.
\end{Lemma}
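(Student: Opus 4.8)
The plan is to reduce the entire biconditional to a single topological claim: when every reticulation arc of $N$ joins adjacent tree arcs of $T_N$, every displayed tree $T_f \in \T_N$ is isomorphic, as an unrooted tree, to $T_N$. Granting this, both directions follow quickly. Since the leaves are labelled, the unrooted isomorphism $T_f \cong T_N$ is canonical, so I would use it to identify the edges of each $T_f$ with the edges (equivalently, the splits of $X$) of $T_N$, writing $d_{T_f^{w_f}}(x,y) = \sum_e w_f(e)\,\chi_e(x,y)$, where $\chi_e(x,y)$ equals $1$ exactly when $e$ lies on the $x$--$y$ path in $T_N$. Substituting into \eqref{eq:dist.on.N} and interchanging the two sums shows that $d_N$ is the tree metric on $T_N$ with edge weights $W(e) = \sum_f \beta_f\, w_f(e)$, which are strictly positive because $\beta$ is a strictly positive distribution and each $w_f(e)>0$. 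Hence, for \emph{any} choice of arc weights and reticulation probabilities, $d_N$ is a tree metric realised on $T_N$. The ``if'' direction is then immediate (take any positive weights), and the ``only if'' direction follows from the uniqueness half of the four-point condition (Theorem~\ref{t:4PC}): if $d_N$ is also realised on $T$, then $T \cong T_N$.

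The core of the argument is therefore the topological claim, which I would establish first for a single reticulation. Let a reticulation vertex $v$ have incoming tree arc and incoming HGT arc $(s,v)$, the latter horizontal in the time function, so $t(s)=t(v)$. The two subdivided arcs of $T_N$ share a common vertex $m$ (adjacency), and I claim the only feasible configuration is that both are \emph{child} arcs of $m$. Indeed, distinct tree arcs of $T_N$ cannot both be directed into $m$, since $T_N$ is a tree; and if the two adjacent arcs were in a head-to-tail (ancestor--descendant) relationship through $m$, then one subdivision point would lie strictly above $m$ in time and the other strictly below, contradicting $t(s)=t(v)$. With both arcs being children of $m$, choosing the HGT arc (deleting the tree arc into $v$) simply regrafts the subtree below $v$ from one child-edge of $m$ onto the other, after which $m$ becomes a degree-two vertex and is suppressed; a direct check shows the two subtrees hanging at $m$ remain joined at a single internal vertex, so the displayed tree has exactly the split system of $T_N$ and is unrooted-isomorphic to it.

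To finish the claim for a general $T_f$, specified by an independent binary choice at every reticulation, I would argue that the rerouting effects are local and split-preserving, so that flipping the reticulations one at a time carries $T_N$ through a sequence of trees all sharing its split system. I expect the main obstacle to lie precisely here: controlling the interaction of several reticulations, in particular when multiple HGT arcs subdivide the \emph{same} tree arc of $T_N$, together with the bookkeeping for reticulations incident to the root, where suppressing the root merges the two root arcs into a single unrooted edge. The cleanest way around this is likely to phrase the single-reticulation step purely in terms of the induced split system of $X$ and to verify that each admissible (child--child) choice leaves this split system unchanged, rather than tracking rooted structure through each regraft.
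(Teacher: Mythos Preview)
The paper does not give its own proof of this statement: Lemma~\ref{AdjacentLemma} is quoted verbatim from \cite{francis2015tree} (their Lemma~4) as a background result, with no argument supplied here. So there is no in-paper proof to compare your attempt against.

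That said, your strategy is sound and is essentially the standard one: show that every displayed tree of $N$ is unrooted-isomorphic to $T_N$, whence $d_N$ is a positive combination of tree metrics on the \emph{same} topology and hence is itself the tree metric on $T_N$; both directions of the biconditional then follow from uniqueness in Theorem~\ref{t:4PC}. Your time-function argument correctly eliminates the head-to-tail configuration at a shared vertex $m$, forcing the two subdivided arcs to be sibling child-arcs of $m$ and making the local regraft split-preserving.

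The one genuine gap you should close is the root case you flag but do not resolve. Because adjacency is defined \emph{after} suppressing the root, two rooted arcs of $T_N$ can be ``adjacent'' without sharing a vertex in the rooted tree: specifically, a root arc $(\rho,u_1)$ and an arc $(u_2,c)$ with $u_2$ the other child of $\rho$ become adjacent at $u_2$ once $\rho$ is suppressed, and the time constraint $t(s)=t(v)$ does \emph{not} forbid this configuration. Your dichotomy ``both into $m$ / head-to-tail through $m$ / both children of $m$'' therefore misses a case. The conclusion still holds there (the regraft again preserves the split system, as one checks directly), but your argument as written does not cover it. The fix you yourself suggest---arguing at the level of the split system of $X$ rather than through rooted bookkeeping---handles this uniformly and also disposes of the ``several HGT arcs on the same tree arc'' worry, since each choice at a reticulation either leaves the splits untouched (tree arc) or swaps the attachment between two edges on the same side of every split (HGT arc).
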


It easily follows as a side note that any weighted network on $3$ leaves is $T$-metrizable for $T$ any $3$-leaf tree, as all arcs in a $3$-leaf tree are adjacent.

In light of this result, we want to focus on networks that can potentially represent tree metrics that are \emph{not} the underlying tree. 

\begin{Definition}
Let $N$ be an HGT network with reticulation arc $A$. If $A$ is between two non-adjacent arcs of the underlying tree, we say that $A$ is a \textit{non-trivial} reticulation arc. If $N$ contains at least one non-trivial reticulation arc, then $N$ is said to be a \textit{non-trivial} HGT network.
\end{Definition}

Of course, this distinction would not be very useful in our context if there were no networks that were tree-metrizable on a tree that was not the underlying tree.  The proof of the following theorem involves constructing such a network on four leaves with two reticulations.

\begin{Theorem}[\cite{francis2015tree}, Theorem 5(b)]
\label{QuartetExample}
There exist 2-reticulated HGT networks $N$ that are $T_N$-metrizable and (for other parameter settings) $T$-metrizable for $T \not \cong T_N$, even when the mixing distribution treats the two reticulations independently.
\end{Theorem}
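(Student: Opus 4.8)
The plan is to prove the theorem by exhibiting a single explicit 2-reticulated HGT network $N$ on the four-leaf set $X=\{x_1,x_2,x_3,x_4\}$ together with two distinct families of parameter settings. First I would fix the combinatorial structure: take $T_N$ to be the quartet $x_1x_2|x_3x_4$ and attach two reticulation arcs, each subdividing a pair of tree arcs of $T_N$ so that at least one is non-trivial (its endpoints subdivide non-adjacent arcs of $T_N$). Deleting one incoming arc at each of the two reticulation vertices yields exactly four display trees $T_{00},T_{01},T_{10},T_{11}$, and I would record each together with the weight function it inherits from $N^w$ via the subdivision. Because $\beta$ is required to factor as in \eqref{eq:beta.def}, the four tree-probabilities are forced to be the products $\alpha_1\alpha_2,\ \alpha_1(1-\alpha_2),\ (1-\alpha_1)\alpha_2,\ (1-\alpha_1)(1-\alpha_2)$ for reticulation probabilities $\alpha_1,\alpha_2\in(0,1)$. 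This product form is precisely the independence asserted in the statement, so establishing the ``even when independent'' clause requires nothing beyond reading off the definition.

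Next I would assemble the induced distance $d_N$ from \eqref{eq:dist.on.N} and form the three four-point sums
\[
S_{12|34}=d(x_1,x_2)+d(x_3,x_4),\quad S_{13|24}=d(x_1,x_3)+d(x_2,x_4),\quad S_{14|23}=d(x_1,x_4)+d(x_2,x_3).
\]
Each $S$ is an explicit function of the arc weights and of the bilinear expressions in $(\alpha_1,\alpha_2)$ above. By Theorem~\ref{t:4PC}, $d_N$ is a tree metric exactly when two of these three sums coincide and are at least the third, and the displayed split is the one whose sum is smallest. The heart of the argument is then to solve, within the admissible region (strictly positive weights, $\alpha_i\in(0,1)$), the two equality conditions: one parameter choice making $S_{13|24}=S_{14|23}$ with $S_{12|34}$ strictly smallest, which gives $T_N$-metrizability; and a second choice making $S_{12|34}=S_{14|23}$ with $S_{13|24}$ strictly smallest, which gives $T$-metrizability for the quartet $T=x_1x_3|x_2x_4\not\cong T_N$.

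I would then verify directly that both regimes are nonempty: fix convenient arc weights, reduce each equality condition to an equation in $\alpha_1,\alpha_2$, exhibit an explicit solution in $(0,1)^2$, and check that the remaining four-point inequality is strict so that the displayed tree is unique with the claimed topology. The main obstacle I anticipate is not any individual calculation but simultaneous feasibility under the independence constraint: since the product form rigidly pins the mixture to a two-parameter bilinear family, I must confirm that this family is still rich enough to reach both a configuration equalising the $\{13|24,\,14|23\}$ pair and one equalising a different pair, all with positive weights. I expect this to reduce to choosing the non-trivial reticulation arc(s) so that the four display trees differ in the relevant pairwise distances, and then checking that a small linear-in-the-weights system admits solutions of the required sign --- a finite verification for which the four-leaf example is engineered precisely to pass.
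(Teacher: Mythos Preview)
Your proposal is correct and follows the same route the paper indicates: the present paper does not itself prove this theorem but cites it from \cite{francis2015tree}, noting only that ``the proof of the following theorem involves constructing such a network on four leaves with two reticulations'' --- exactly your plan of exhibiting an explicit four-leaf, two-reticulation HGT network and checking the four-point condition for two parameter regimes.

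One small sharpening is worth flagging. For the $T\not\cong T_N$ regime to exist at all, Lemma~\ref{NotTwo} forces the display trees to realise all three unrooted quartet topologies, so ``at least one non-trivial arc'' is too weak: both reticulation arcs must be non-trivial and positioned so that one leaf is HGT-connected to each leaf of the opposite cherry (cf.\ Theorem~\ref{3IsoClass} and the network in Figure~\ref{InternalUse}). With that specific placement your computation goes through; and if you want to streamline it, the paper's Lemma~\ref{InequalityLemma} lets you replace the three sums $S_{12|34},S_{13|24},S_{14|23}$ by a single equality-plus-inequality on weighted internal-arc lengths, which makes the feasibility check almost immediate.
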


The following simple (yet surprisingly powerful) result will be our primary tool for showing a network is not tree-metrizable throughout this paper.

\begin{Lemma}[\cite{francis2015tree}, Lemma 6]
\label{NotTwo}
Let $N$ be a network with display trees $\T_N=\{T_1,\dots,T_k\}$. Suppose that there is a quartet $q \subseteq X$, for which $\big| \{T_i\vert_q \}_{i=1,\dots,k}\big|=2$. Then $N$ is not tree-metrizable.
\end{Lemma}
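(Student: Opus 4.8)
The plan is to apply the four-point condition (Theorem~\ref{t:4PC}) directly, exploiting the fact that by~\eqref{eq:dist.on.N} the metric $d=d_{(N^w,\beta)}$ is a convex combination $\sum_{f\in F_N}\beta_f\,d_{T_f^{w_f}}$ of the tree metrics of the display trees, with \emph{all} coefficients $\beta_f$ strictly positive (since $\beta$ is a strictly positive distribution). Writing $q=\{a,b,c,d\}$, I would track the three quantities
\[
S_1 = d(a,b)+d(c,d),\quad S_2=d(a,c)+d(b,d),\quad S_3=d(a,d)+d(b,c),
\]
together with their analogues $S_1^{(f)},S_2^{(f)},S_3^{(f)}$ computed in each weighted display tree $T_f^{w_f}$. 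The elementary computation underlying the whole argument is that for a single weighted quartet with topology $ab|cd$ and internal edge length $m>0$, the aligned sum equals $S_1^{(f)}$ while the two crossing sums satisfy $S_2^{(f)}=S_3^{(f)}=S_1^{(f)}+2m$; that is, the topology of $T_f|_q$ determines \emph{which} of the three sums is strictly smallest, the other two being equal and larger by $2m>0$.

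First I would observe that the four-point condition on $q$ is equivalent to demanding that the largest of $S_1,S_2,S_3$ be attained by at least two of them (two equal, the third no larger). Hence, to show $N$ is not tree-metrizable it suffices to prove that for \emph{every} choice of arc weights and reticulation probabilities one of the three sums is the unique strict maximum. I would then record the effect of each display tree in the language above: a tree with $T_f|_q=ab|cd$ holds its own sum $S_1^{(f)}$ down and pushes the other two up by the gap $\delta_f:=2m_f>0$, and cyclically for the other two topologies. Here $\delta_f>0$ always, since restricting a binary tree with strictly positive weights to four leaves yields a resolved quartet whose internal edge has strictly positive length; this guarantees that every topology which occurs contributes a strictly positive gap.

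The crux is a counting observation enabled by the strict positivity of $\beta$: because $\beta_f>0$ for all $f$, each of the two topologies occurring among $\{T_i|_q\}$ has strictly positive total probability, while the third, absent topology has probability $0$. Relabelling $q$ if necessary so that the present topologies are $ab|cd$ and $ac|bd$ (the absent one being $ad|bc$, whose aligned sum is $S_3$), I would split $F_N$ into the groups $A$ and $B$ realising these two topologies and compute, using $S_i-S_j=\sum_f\beta_f\,(S_i^{(f)}-S_j^{(f)})$,
\[
S_3-S_1=\sum_{f\in A}\beta_f\,\delta_f>0,\qquad S_3-S_2=\sum_{f\in B}\beta_f\,\delta_f>0,
\]
since in group $B$ the values $S_1^{(f)},S_3^{(f)}$ coincide while in group $A$ the values $S_2^{(f)},S_3^{(f)}$ coincide. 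Thus $S_3$ is the unique strict maximum: the sum attached to the \emph{absent} topology is pushed up by both present groups and held down by neither. The four-point condition therefore fails on $q$ for all parameter values, and $N$ is not tree-metrizable.

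I expect the main obstacle to be making the ``held-down / pushed-up'' bookkeeping precise and verifying that $\delta_f$ is genuinely positive for every display tree, so that the strict inequalities above cannot degenerate; once those are in hand the conclusion follows from pure linearity of $S_1,S_2,S_3$ in the $\beta_f$ combined with the strict positivity of $\beta$.
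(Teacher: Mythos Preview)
Your argument is correct. Note, however, that the paper does not supply its own proof of this lemma: it is quoted verbatim from \cite{francis2015tree} (Lemma~6) and used as a tool, so there is no in-paper proof to compare against. That said, your approach --- reducing the four-point condition on $q$ to the three sums $S_1,S_2,S_3$, decomposing each as a $\beta$-weighted combination over $F_N$, and observing that the sum aligned with the \emph{absent} topology is pushed strictly above the other two --- is exactly the mechanism that the paper later formalises (for the three-topology case) in Lemma~\ref{InequalityLemma}. So your proof is entirely in the spirit of the paper's own toolkit.

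One small point worth tightening in a final write-up: the hypothesis is phrased in terms of the unweighted display trees $T_1,\dots,T_k$, whereas the metric $d$ is summed over all $f\in F_N$, and distinct $f$ can yield the same unweighted $T_i$ with different induced weights $w_f$. Your computation already handles this correctly (you only use the \emph{topology} of $T_f|_q$ to place each $f$ in group $A$ or $B$, and the strict positivity of $\beta_f$ and of each internal edge length $m_f$), but it would be cleaner to state explicitly that the partition $F_N=A\cup B$ is by topology of $T_f|_q$ and that both $A$ and $B$ are nonempty precisely because $|\{T_i|_q\}|=2$.
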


An immediate consequence of this lemma is that if a network contains a single non-trivial reticulation arc, it is certainly not tree-metrizable, as it will have exactly two non-isomorphic display trees (in both the rooted and unrooted senses). One can then find a quartet upon which the two trees do not agree and then apply the theorem.

\section{Tree-metrizability: first results}\label{s:tree-metrizability}

The following lemma will make our calculations involving the four-point condition easier by  phrasing the four-point condition in terms of the lengths of the internal arcs of a quartet, instead of the tree distances between leaves. 
We defer the proof to the Appendix.

\begin{Definition}
For $T$ a rooted tree, let $T^U$ be the unrooted tree obtained by suppressing the root vertex. That is, if $r$ is the root vertex, we delete the vertex $r$ and edges $(r,u)$ and $(r,v)$, then add $(u,v)$. All edges are then interpreted as undirected.
\end{Definition}

 \begin{Lemma}
 \label{InequalityLemma}
 Let $N$ be a four-leaf network with exactly three display trees, $\T_N = \{T_r, T_s, T_t \}$, and let $\{T_1^{w_1},\dots,T_k^{w_k} \}$ be the weighted trees obtained from $N$ by choices of reticulations. Let $\alpha_j$ be the probability assigned to $T_j$, and $p_j$ be the length of the internal arc of $T_j^U$. 

 Then $N$ is tree-metrizable on $T_r$ if and only if
 \begin{center}
 $\displaystyle \sum_{T_j \cong T_r} p_j \alpha_j > \sum_{T_j \cong T_s} p_j \alpha_j = \sum_{T_j \cong T_t} p_j \alpha_j$.
 \end{center}
 \end{Lemma}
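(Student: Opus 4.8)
The plan is to reduce everything to the four-point condition (Theorem~\ref{t:4PC}) applied to the single quartet $X$, and then to evaluate the three Buneman sums on the convex combination $d = \sum_{j} \alpha_j\, d_{T_j^{w_j}}$. First I would observe that, since $|X|=4$ and there are exactly three unrooted quartet topologies, the three display trees must realise all three topologies; relabel $X=\{a,b,c,d\}$ so that $T_r^U = ab|cd$, $T_s^U = ac|bd$ and $T_t^U = ad|bc$. Write
\[
S_r = d(a,b)+d(c,d),\qquad S_s = d(a,c)+d(b,d),\qquad S_t = d(a,d)+d(b,c),
\]
so that the four-point condition for $X$ becomes a statement purely about $S_r, S_s, S_t$.

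The core computation is to evaluate each $S_\bullet$ on a single weighted quartet. For a weighted tree $T_j^{w_j}$ of topology $ab|cd$, with pendant-edge lengths summing to $E_j$ and internal-arc length $p_j$, a direct check gives $d_{T_j}(a,b)+d_{T_j}(c,d)=E_j$ while each of the other two sums equals $E_j+2p_j$; in general the internal arc of a tree of a given topology contributes $2p_j$ to each of the two \emph{non-matching} sums and $0$ to the matching one, while every pendant edge contributes exactly once to all three sums. Taking the convex combination and setting $\mathcal{E}=\sum_j \alpha_j E_j$, $P_r=\sum_{T_j\cong T_r}p_j\alpha_j$, and $P_s, P_t$ analogously, I obtain
\[
S_r = \mathcal{E}+2(P_s+P_t),\qquad S_s = \mathcal{E}+2(P_r+P_t),\qquad S_t = \mathcal{E}+2(P_r+P_s).
\]
The common term $\mathcal{E}$ cancels in every comparison, which is exactly what makes the statement depend only on the $P_\bullet$.

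Finally I would invoke Buneman. By Theorem~\ref{t:4PC}, $d$ is a tree metric exactly when two of $S_r, S_s, S_t$ are equal and at least as large as the third; and since the smallest of the three sums names the displayed split, $d$ is realised on the topology $ab|cd$ precisely when $S_r$ is the strict minimum, i.e.\ $S_r < S_s = S_t$. Cancelling $\mathcal{E}$ and $2P_r$ from $S_s=S_t$ gives $P_s=P_t$, and cancelling $\mathcal{E}+2P_t$ from $S_r<S_s$ gives $P_s<P_r$; together these are exactly $P_r>P_s=P_t$, the claimed inequality, and each step of the argument is reversible.

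The step requiring the most care is the passage from ``tree metric'' to ``metric realised on $T_r$''. The four-point condition as stated permits the degenerate case $S_r=S_s=S_t$, in which $d$ is still a tree metric but is carried by the $4$-leaf star rather than by any binary quartet. Since $T_r$ is binary and its weight function must be strictly positive, $T_r$-metrizability forces the internal-arc length $(S_s-S_r)/2$ to be strictly positive; this is precisely why the inequality in the statement is strict. Checking that $S_r<S_s=S_t$ does place $d$ on $T_r^U=ab|cd$ — and, by the uniqueness clause of Theorem~\ref{t:4PC}, on no other topology — is the one point that must be argued rather than merely computed.
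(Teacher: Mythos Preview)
Your argument is correct and is essentially the paper's own proof: both reduce to the four-point condition on the single quartet, use that the internal edge of a weighted quartet contributes $2p_j$ to exactly the two non-matching Buneman sums, and then cancel to obtain $P_r > P_s = P_t$. Your presentation via the explicit closed forms $S_\bullet = \mathcal{E} + 2(P_\star + P_{\star'})$ is slightly cleaner than the paper's pairwise-difference bookkeeping, and your explicit discussion of why the inequality is strict (ruling out the star) is a point the paper leaves implicit, but the underlying route is the same.
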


\begin{Example}
Lemma \ref{InequalityLemma} somewhat surprisingly reveals that tree-metrizability is dependant only on the internal arcs of the display trees. For example, let $N$ be the network shown in Figure \ref{InternalUse}. Then,  in order for $N$ to be $T_1$-metrizable, for example, by Lemma \ref{InequalityLemma} we require that 
\[(1-\alpha_1)(1-\alpha_2)(a_1+a_4) \ge \alpha_1(1-\alpha_2)a_3 = \alpha_2a_1.\]  

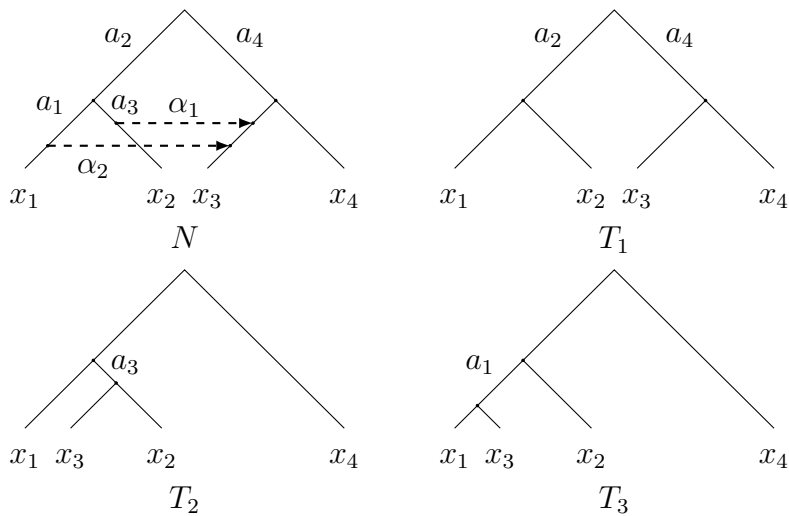
\begin{figure}[h]
\centering
\begin{tikzpicture}[scale=0.3]
\draw (0,0) --(7,7)--(14,0);
\draw (3,3) --(6,0);
\draw (11,3) --(8,0);

\draw [>=latex, ->, dashed, thick] (4,2) --(10,2); 
\draw [>=latex, ->, dashed, thick] (1,1) -- (9,1);

\draw[fill] (3,3) circle [radius=1.5pt];
\draw[fill] (11,3) circle [radius=1.5pt];
\draw[fill] (4,2) circle [radius=1.5pt];
\draw[fill] (10,2) circle [radius=1.5pt];
\draw[fill] (1,1) circle [radius=1.5pt];
\draw[fill] (9,1) circle [radius=1.5pt];

\node[below] at (0,0) {$\mathstrut x_1$};
\node[below] at (6,0) {$\mathstrut x_2$};
\node[below] at (8,0) {$\mathstrut x_3$};
\node[below] at (14,0) {$\mathstrut x_4$};

\node[above=-1mm] at (7,2) {$\mathstrut \alpha_1$};
\node[below=-1mm] at (3,1) {$\mathstrut \alpha_2$};

\node[above left=-0.1cm] at (2,2) {$\mathstrut a_1$};
\node[above left=-0.1cm] at (5,5) {$\mathstrut a_2$};
\node[above right=-0.2cm] at (3.75,2.25) {$\mathstrut a_3$};
\node[above right=-0.1cm] at (9,5) {$\mathstrut a_4$};

\node[below] at (7,-2) {$\mathstrut N$};
\end{tikzpicture}
\hspace{0.5cm}
\begin{tikzpicture}[scale=0.3]
\draw (0,0) --(7,7)--(14,0);
\draw (3,3) --(6,0);
\draw (11,3) --(8,0);

\draw[fill] (3,3) circle [radius=1.5pt];
\draw[fill] (11,3) circle [radius=1.5pt];

\node[below] at (0,0) {$\mathstrut x_1$};
\node[below] at (6,0) {$\mathstrut x_2$};
\node[below] at (8,0) {$\mathstrut x_3$};
\node[below] at (14,0) {$\mathstrut x_4$};

\node[above left=-0.1cm] at (5,5) {$\mathstrut a_2$};
\node[above right=-0.1cm] at (9,5) {$\mathstrut a_4$};

\node[below] at (7,-2) {$\mathstrut T_1$};
\end{tikzpicture}

\begin{tikzpicture}[scale=0.3]
\draw (0,0) --(7,7)--(14,0);
\draw (3,3) --(6,0);
\draw (4,2) --(2,0);

\draw[fill] (3,3) circle [radius=1.5pt];
\draw[fill] (4,2) circle [radius=1.5pt];

\node[below] at (0,0) {$\mathstrut x_1$};
\node[below] at (6,0) {$\mathstrut x_2$};
\node[below] at (2,0) {$\mathstrut x_3$};
\node[below] at (14,0) {$\mathstrut x_4$};

\node[above right=-0.2cm] at (3.75,2.25) {$\mathstrut a_3$};

\node[below] at (7,-2) {$\mathstrut T_2$};
\end{tikzpicture}
\hspace{0.5cm}
\begin{tikzpicture}[scale=0.3]
\draw (0,0) --(7,7)--(14,0);
\draw (3,3) --(6,0);
\draw (1,1) --(2,0);

\draw[fill] (3,3) circle [radius=1.5pt];
\draw[fill] (1,1) circle [radius=1.5pt];

\node[below] at (0,0) {$\mathstrut x_1$};
\node[below] at (6,0) {$\mathstrut x_2$};
\node[below] at (2,0) {$\mathstrut x_3$};
\node[below] at (14,0) {$\mathstrut x_4$};

\node[above left=-0.1cm] at (2,2) {$\mathstrut a_1$};

\node[below] at (7,-2) {$\mathstrut T_3$};
\end{tikzpicture}
\caption{A network $N$ with its three display trees, $T_1, T_2$ and $T_3$.}
\label{InternalUse}
\end{figure}

\end{Example}

The next lemma shows that certain substructures in a weighted network can be interchanged without changing the metric on the leaves. The proof is also deferred to the Appendix.

Recall that Lemma \ref{AdjacentLemma} says that if all of the HGT arcs in an HGT network are between adjacent tree arcs, then the network is tree-metrizable.  In Lemma~\ref{OmitAdjacent}, we generalise this to show how an HGT arc can be placed between any adjacent pair of tree arcs in a network, and retain tree distances (and hence tree-metrizability, if our network is tree-metrizable). In particular, this implies that tree-like distances are preserved with the addition of an arc between adjacent arcs, even if the network contains other arcs between non-adjacent arcs.
 
\begin{Lemma}
\label{OmitAdjacent}
Let $N^w$ be an HGT network with reticulation probabilities $\beta$, and an HGT arc $a$ between a pair of siblings such that there is no other vertex between the ends of $a$ and their common parent vertex. Let $\widehat{N}$ be the HGT network obtained by deleting $a$ from $N$ and suppressing the vertices at each end. Then there exist arc weights $\widehat{w}$ and reticulation probabilities $\widehat{\beta}$ on $N_1$ so that
\begin{center}
$d_{(\widehat{N}^{\widehat{w}},\widehat{\beta})} = d_{(N^w,\beta)}$.
\end{center}
\end{Lemma}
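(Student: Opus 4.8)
The plan is to compute everything on the induced (unrooted) metrics and to group the display trees of $N$ in pairs according to the single choice made at the reticulation vertex at the head of $a$. Write $a=(u,v)$ with $u$ the tree vertex at the tail and $v$ the reticulation vertex at the head, and let $a'=(p,v)$ be the other (tree) arc into $v$. The hypothesis on $a$ forces $p$ to be the common parent of both $u$ and $v$, with $(p,u)$ and $(p,v)$ containing no internal vertices. Let $u_1$ be the remaining child of $u$, let $v_1$ be the child of $v$, let $g$ be the parent of $p$ (when $p$ is not the root), and let $U$, $V$, $W$ denote the parts of the network hanging below $u_1$, below $v_1$, and above $g$, respectively. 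By construction $\widehat N$ replaces this gadget by a single tree vertex $p$ with children $u_1$ and $v_1$.

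The key observation is that the two choices at $v$ yield display trees of the same unrooted topology. If we keep $a'$ and delete $a$, then $u$ and $v$ both become degree two and are suppressed, leaving $p$ adjacent to $u_1$, $v_1$, and $g$. If instead we keep $a$ and delete $a'$, then $v$ is suppressed (so $v_1$ hangs off $u$ through the instantaneous arc $a$) and $p$ is suppressed (so $g$ and $u$ become adjacent), leaving $u$ adjacent to $u_1$, $v_1$, and $g$. In both cases the tree is a single junction joined to the three fixed subtrees $U$, $V$, $W$, which is exactly the topology of the corresponding display tree of $\widehat N$; the three trees differ only in the weights of the three arcs incident with the junction. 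Keeping $a'$, these weights are $w(p,u)+w(u,u_1)$, $w(p,v)+w(v,v_1)$, $w(g,p)$; keeping $a$, they are $w(u,u_1)$, $w(v,v_1)$, $w(g,p)+w(p,u)$ (the change in the last reflecting the suppression of $p$, which merges $(g,p)$ with $(p,u)$).

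I would therefore define $\widehat\beta$ to retain the reticulation probabilities of all reticulations other than the one at $v$, and define $\widehat w$ to agree with $w$ on every arc except the three arcs at $p$, where I set the $\alpha$-weighted averages of the two cases:
\[
\widehat w(p,u_1)=w(u,u_1)+\alpha(a')\,w(p,u),\quad
\widehat w(p,v_1)=w(v,v_1)+\alpha(a')\,w(p,v),\quad
\widehat w(g,p)=w(g,p)+\alpha(a)\,w(p,u),
\]
with $\alpha(a')=1-\alpha(a)$; these are strictly positive since $w>0$ and $\alpha\in(0,1)$. Because the five arcs defining these weights are fixed tree arcs untouched by any other reticulation, the same $\widehat w$ works for every choice $f'$ at the remaining reticulations. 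To verify $d_{(\widehat N^{\widehat w},\widehat\beta)}=d_{(N^w,\beta)}$, I would fix $f'$ and compare: for two leaves in a common subtree the connecting path avoids the junction entirely, so its length is unchanged; for two leaves in different subtrees the length is the sum of the two relevant junction-arc weights plus terms internal to the subtrees, so the $\alpha(a)$-weighted average of the two display-tree lengths equals the $\widehat N$ length by linearity. Summing these per-$f'$ identities against the common probabilities $\beta'_{f'}=\prod_{v'\neq v}\alpha(f'(v'))$ gives the equality of metrics.

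The main obstacle is the topological bookkeeping of the second paragraph, especially noticing that keeping the reticulation arc suppresses $p$ and thereby lengthens the arc towards $W$ by $w(p,u)$ --- this is the only nonobvious weight change, and the one that makes the three averages mutually consistent. A secondary point to dispatch is the degenerate case where $p$ is the root: then $W$ is empty and the junction reduces to a single $U$--$V$ edge, but the same averaging of $\widehat w(p,u_1)$ and $\widehat w(p,v_1)$ works verbatim.
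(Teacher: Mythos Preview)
Your proposal is correct and follows essentially the same approach as the paper: both pair the display trees of $N$ by the choice at the reticulation vertex $v$, observe that the two choices yield the same local unrooted topology (a single junction joined to three fixed pieces) differing only in the three incident arc weights, and then set the new weights in $\widehat N$ to the $\alpha$-weighted averages. Your explicit values $\widehat w(p,u_1)=w(u,u_1)+\alpha(a')w(p,u)$, $\widehat w(p,v_1)=w(v,v_1)+\alpha(a')w(p,v)$, $\widehat w(g,p)=w(g,p)+\alpha(a)w(p,u)$ coincide exactly with the paper's $A_2,A_3,A_1$, and your handling of the root case is a small addition the paper omits.
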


\section{Leaf-Grafting}\label{s:network.stock}

Theorem \ref{QuartetExample} provides an example of a network on four leaves that has two non-trivial reticulation arcs but still is tree-metrizable. 
In the previous section, we addressed the question of whether tree-metrizable networks with more reticulation arcs exist; in this section we address the analogous question for the number of leaves.  In particular, we will show how ``leaf-grafting'' trees onto the leaves of a tree-metrizable network can create a non-trivial tree-metrizable network on any base tree at all.

We begin by defining the notion of \emph{leaf-grafting}.

\begin{Definition}
Let $N$ and $N'$ be two HGT networks on $X$ and $Y$ respectively, and $a$ a leaf of $N$. If we identify the root of $N'$ with the leaf $x$ of $N$, the resulting network $N \#_x N'$ on $X \cup Y - \{a\}$ is termed a \textit{leaf-graft of $N'$ onto $N$ at leaf $x$}. In particular, $N$ is referred to as the \textit{stock}, $N'$ is referred to as the \textit{scion} and $x$ as the \textit{grafting vertex}. 
\end{Definition}

To begin with, we address the case where the stock is an HGT network and the scion is a tree, that is, grafting a tree onto a network. We will consider the reverse problem later. First, a relevant definition.

\begin{Definition}[\cite{Fischer2010}, Section 3]
Let $N$ be an HGT network and $L \subseteq V(N)$ . Then the \textit{lowest stable ancestor} of $L$, denoted $LSA(L)$ is the lowest vertex that lies on every path from the root to a vertex in $L$.
\end{Definition}

We note that the lowest stable ancestor must always exist, as at the very least the root will fit the criterion.

\begin{Theorem}[Replacement Theorem]
\label{t:LeafReplacement}
Let $N$ be a tree-metrizable network, $T$ be a tree, and $\ell$ a leaf of $N$. 
Then $N \#_\ell T$ is a tree-metrizable network if and only if $N$ is a tree-metrizable network.  
\end{Theorem}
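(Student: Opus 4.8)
The plan is to exploit the fact that the scion $T$ is a \emph{tree}, and hence contributes no reticulation vertices to $N \#_\ell T$: the reticulation vertices of $N \#_\ell T$ are exactly those of $N$, so the set of functions $f$ selecting reticulation parents, and the probabilities $\beta_f$ they determine, are identical for the two networks. Consequently the display trees of $N \#_\ell T$ are precisely the grafts $T_f \#_\ell T$ of the display trees $T_f \in \T_N$, each carrying the same weight $\beta_f$. First I would record the resulting decomposition of the induced metric. Writing $r$ for the root of $T$ (the vertex identified with $\ell$) and $p$ for the parent of $\ell$ in $N$, a path between two leaves of $N \#_\ell T$ either stays inside the $N$-part, stays inside $T$, or crosses through $r=\ell$. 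Summing over $f$ against $\beta_f$ this yields, for any weights on $N \#_\ell T$,
\begin{align*}
d_{N \#_\ell T}(x,y) &= d_N(x,y), && x,y \in X \setminus \{\ell\},\\
d_{N \#_\ell T}(x,y) &= d_T(x,y), && x,y \in Y,\\
d_{N \#_\ell T}(x,y) &= d_N(x,\ell) + d_T(r,y), && x \in X \setminus \{\ell\},\ y \in Y,
\end{align*}
where $d_N$ uses the weights inherited on the $N$-arcs and $d_T$ the weights on $T$. Establishing this identity cleanly (via the display-tree and probability correspondence) is the technical heart of the argument.

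For the forward direction ($N$ tree-metrizable $\Rightarrow N \#_\ell T$ tree-metrizable), suppose weights and probabilities on $N$ make $d_N$ equal to the tree metric of a weighted tree $S$ with leaf $\ell$. Keep these weights and probabilities on the $N$-part of $N \#_\ell T$ and assign $T$ any positive weights. Grafting the weighted $T$ onto the leaf $\ell$ of $S$ produces a weighted binary phylogenetic tree $S \#_\ell T$ (the vertex $\ell$, of in-degree $1$ and out-degree $0$, merges with $r$, of in-degree $0$ and out-degree $2$, into a legitimate internal vertex). Comparing its tree distances case-by-case with the three lines above, and using $d_N=d_S$ throughout $X$ (in particular $d_N(x,\ell)=d_S(x,\ell)$), shows $d_{N \#_\ell T} = d_{S \#_\ell T}$, so $N \#_\ell T$ is $(S \#_\ell T)$-metrizable.

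For the converse, suppose weights $w$ and probabilities $\beta$ on $N \#_\ell T$ make $d_{N \#_\ell T}$ a tree metric, and fix any leaf $y_0 \in Y$. Transport the probabilities $\beta$ to $N$ unchanged and define weights $w'$ on $N$ by copying $w$ on every arc not incident to $\ell$ and setting the weight of the arc $(p,\ell)$ to be $w(p,r) + d_T(r,y_0)$, which is positive since $y_0 \neq r$ and all arc weights are positive. With these choices the decomposition shows that the restriction of $d_{N \#_\ell T}$ to $(X \setminus \{\ell\}) \cup \{y_0\}$ coincides with $d_N$ computed from $(w',\beta)$, under the identification of $y_0$ with $\ell$. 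Since a restriction of a tree metric to a subset of taxa is again a tree metric --- every quartet of the subset is a quartet of the whole, so the four-point condition (Theorem~\ref{t:4PC}) is inherited --- the metric $d_N$ built from $(w',\beta)$ satisfies the four-point condition, whence $N$ is tree-metrizable.

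The main obstacle is the bookkeeping in the converse: one must verify that augmenting the length of the arc into $\ell$ by exactly $d_T(r,y_0)$ reproduces the mixed distances $d_N(x,\ell)$ as the restricted distances $d_{N \#_\ell T}(x,y_0)$, and that this single modification is simultaneously consistent for every $x$ --- which it is, precisely because every path from an $x \in X \setminus \{\ell\}$ to $y_0$ factors through $r=\ell$ and picks up the same tail $d_T(r,y_0)$ regardless of $x$ and of the reticulation choice $f$. The forward direction and the decomposition itself are then routine once the display-tree and probability correspondence is in hand.
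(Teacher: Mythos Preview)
Your argument is correct, and it differs substantively from the paper's. The paper proves the forward direction by directly verifying the four-point condition for every quartet of leaves of $N \#_\ell T$, splitting into five cases according to how many of the four leaves lie in the scion $T$; for the converse it argues contrapositively that a quartet witnessing failure of the four-point condition in $N$ produces a quartet witnessing failure in $N \#_\ell T$ (replacing $\ell$, if present, by any leaf $k$ of $T$ and noting the three sums shift by the common constant $d(\ell,k)$). Your route instead first isolates the metric decomposition coming from the observation that $T$ contributes no reticulations, then for the forward direction \emph{exhibits} the realizing tree $S \#_\ell T$ rather than checking quartets, and for the converse transports weights to $N$ by padding the arc into $\ell$ by $d_T(r,y_0)$ and invokes the hereditary nature of the four-point condition under restriction. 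Your approach is more structural and yields the extra information that the tree realizing $d_{N \#_\ell T}$ is the leaf-graft of the tree realizing $d_N$; the paper's case analysis is more elementary in that it never needs to name $S$, but is correspondingly longer and less reusable.
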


\begin{proof}
Suppose $N$ is tree-metrizable. Consider a quartet of leaves $q=\{x_1,x_2,x_3,x_4 \}$ of $N \#_\ell T$.  We will show that the distances between these leaves satisfy the inequality in the four point condition.

Let the leaves of $T$ be denoted by $Y$ and the leaves of $N$ denoted by $X$, so that the leaf set of $N \#_\ell T$ is $\left( X \backslash \{\ell\} \right) \cup Y$. We consider cases according to how many of the leaves of $q$ are in $Y$.

Case (4): All four leaves  of $q$ are in $Y$.  In this case the distances between the leaves are determined by their distances in $T$, and so satisfy the four point condition.

Case (3): Three leaves (say $x_1,x_2,x_3$) are in $Y$ while the fourth, $x_4$, is in $X$.  Let $x=LSA\{x_1,x_2,x_3\}$, and suppose without loss of generality that $x_1x_2|x_3$ forms a rooted triple with root $\rho$.  Then in the network $N \#_\ell T$, the distances between $x_1,x_2,x_3$ and $\rho$ are all determined by the tree $T$.  All distances between $x_1,x_2$ or $x_3$ and $x_4$ go through $\rho$, so that for instance $d(x_1,x_4)=d(x_1,\rho)+d(\rho,x_4)$.  It is immediate that the inequality holds in this case (namely $d(x_1,x_2)+d(x_3,x_4) \le d(x_1,x_3)+d(x_2,x_4)=d(x_1,x_4)+d(x_2,x_3)$).

Case (2): Suppose $x_1,x_2$ are leaves of $Y$ and $x_3,x_4$ are leaves of $X$.  
Then 
\begin{equation}\label{eq:distance.thru.N}
d(x_1,x_3)=d_T(x_1,\ell)+d_N(\ell,x_3),
\end{equation}
and likewise for other cross-pairs $d(x_1,x_4), d(x_2,x_3)$ and $d(x_2,x_4)$.  It is easy to check that $d(x_1,x_3)+d(x_2,x_4)=d(x_1,x_4)+d(x_2,x_3)$.  The inequality $d(x_1,x_2)+d(x_3,x_4)<d(x_1,x_3)+d(x_2,x_4)$ also follows using the observation that $d_T(x_1,x_2)<d_T(x_1,\ell)+d_T(x_2,\ell)$ (and similarly for $d(x_3,x_4)$), by the triangle inequality.

Case (1): Suppose that $x_1$ is in $Y$ and $x_2.x_3,x_4$ are in $X$.  Then the pairwise distances between $x_2,x_3$ and $x_4$ are determined by their distances in $N$, while the distance $d(x_1,x_2)= d_T(x_1,\ell)+d_N(x_2,\ell)$ and similarly for the distances from $x_1$ to $x_3$ and $x_4$. Then as $\ell$ was a leaf of a tree-metrizable network, it follows that the pairwise distances obey the four-point condition.

Case (0): If all leaves in $q$ are in $X$, then their pairwise distances are determined by their distances in $N$, and so satisfy the four-point condition as $N$ is tree-metrizable. 

It follows that $N \#_\ell T$ is tree-metrizable.

Now suppose that $N$ is not tree-metrizable. It follows that there exists a quartet of leaves $q'=\{x_1,x_2,x_3,x_4\}$ of $N$ that does not obey the four-point condition. If $q'$ does not contain $\ell$, then the same quartet in $N \#_\ell T$ does not obey the four-point condition, and therefore $N \#_\ell T$ is not tree-metrizable.

If $q'$ does contain $\ell$, suppose $q'=\{x_1,x_2,x_3,\ell \}$. Then we can select some leaf $k$ of $T$, and observe that $d(x_1,k)=d_N(x_1,\ell)+d(\ell,k)$, with similar forms for $x_2$ and $x_3$. We now consider the distances arising from the quartet $\{x_1,x_2,x_3,k\}$:
\[
d(x_1,x_2)+d(x_3,k);\quad d(x_1,k)+d(x_2,x_3);\quad d(x_1,x_3)+d(x_2,k).
\] 
We can see that 
\begin{align*}
d(x_1,x_2)+d(x_3,k)&= d_N(x_1,x_2)+d_N(x_3,\ell)+d(\ell,k),\\
d(x_1,k)+d(x_2,x_3)&= d_N(x_1,\ell)+d_N(x_2,x_3)+d(\ell,k),\\
d(x_1,x_3)+d(x_2,k)&= d_N(x_1,x_3)+d_N(x_2,\ell)+d(\ell,l).
\end{align*}
 In particular, these are just the corresponding distances of the four-point condition applied to $q'=\{x_1,x_2,x_3,\ell \}$, each with the same distance $d(\ell,k)$ added. It follows that if $q'$ does not obey the four-point condition, neither does $\{x_1,x_2,x_3,k\}$. Therefore $N \#_\ell T$ is not tree-metrizable.
\end{proof}

\begin{Corollary}
There exist non-trivial level-$2$ tree-metrizable networks on $n$ leaves for $n \ge 4$. Furthermore, there exist networks $N$ on $n$ leaves that are $T$-metrizable for some tree $T$ that is not the underlying tree $T_N$.
\end{Corollary}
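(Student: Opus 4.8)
The plan is to combine the four-leaf example guaranteed by Theorem~\ref{QuartetExample} with the Replacement Theorem (Theorem~\ref{t:LeafReplacement}) to produce examples on arbitrarily many leaves. I would take $N_0$ to be a $2$-reticulated network on four leaves that is tree-metrizable, as furnished by Theorem~\ref{QuartetExample}; this already settles the case $n=4$. Since $N_0$ is $T$-metrizable for some $T \not\cong T_{N_0}$ under suitable parameters, Lemma~\ref{AdjacentLemma} forces at least one of its reticulation arcs to lie between non-adjacent tree arcs of $T_{N_0}$ --- otherwise $N_0$ could only be $T_{N_0}$-metrizable --- so $N_0$ is non-trivial as a matter of structure, independently of the chosen parameters. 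Having exactly two reticulations, it is level-$2$.

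For $n \ge 5$ I would fix a rooted binary tree $T'$ on $n-3$ leaves and form the leaf-graft $N_0 \#_\ell T'$ at some leaf $\ell$ of $N_0$. This network has $3 + (n-3) = n$ leaves. Grafting a tree introduces no reticulation vertices, so $N_0 \#_\ell T'$ still has exactly two reticulations and remains level-$2$; and since grafting only adds structure below $\ell$, the two tree arcs spanned by each non-trivial reticulation arc remain non-adjacent in the underlying tree $T_{N_0 \#_\ell T'} = T_{N_0} \#_\ell T'$, so the grafted network is still non-trivial. By the Replacement Theorem it is tree-metrizable, which establishes the first claim for all $n \ge 4$.

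For the second claim I would instead run the construction with parameters chosen so that $N_0$ is $T$-metrizable for some $T \not\cong T_{N_0}$, and then show that this disagreement survives grafting. The Replacement Theorem guarantees $N_0 \#_\ell T'$ carries a tree metric, hence is $T''$-metrizable for a unique $T''$, so the only thing to verify is $T'' \not\cong T_{N_0 \#_\ell T'}$. Writing $X = \{x_1,x_2,x_3,\ell\}$ for the leaves of $N_0$, the quartet topology induced by $T$ on these four leaves differs from that induced by $T_{N_0}$. I would pick any leaf $k$ of the scion $T'$ and invoke the identity $d(x_i,k)=d_{N_0}(x_i,\ell)+d(\ell,k)$ --- exactly the computation carried out in the final case of the proof of Theorem~\ref{t:LeafReplacement} --- to show that in $N_0 \#_\ell T'$ the quartet $\{x_1,x_2,x_3,k\}$ realises the same split as $\{x_1,x_2,x_3,\ell\}$ does in $N_0$. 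Hence $T''$ resolves $\{x_1,x_2,x_3,k\}$ according to $T$, whereas $T_{N_0}\#_\ell T'$ resolves it according to $T_{N_0}$; as these disagree, $T'' \not\cong T_{N_0 \#_\ell T'}$.

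The main obstacle is precisely this last transfer of the topological disagreement. Because the base example has only four leaves, every quartet witnessing $T \not\cong T_{N_0}$ necessarily involves the grafting leaf $\ell$, so one cannot simply restrict to a quartet of $N_0 \#_\ell T'$ that avoids $\ell$. The proxy-leaf identity $d(x_i,k)=d_{N_0}(x_i,\ell)+d(\ell,k)$ is what circumvents this, letting a leaf $k$ of the scion stand in for $\ell$ while preserving the relevant quartet split. Everything else --- the leaf count, the invariance of the reticulation number under grafting, and the persistence of non-adjacency --- is routine bookkeeping.
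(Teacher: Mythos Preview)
Your proposal is correct and follows the paper's approach exactly: start from the four-leaf network of Theorem~\ref{QuartetExample} and leaf-graft a tree on $n-3$ leaves via Theorem~\ref{t:LeafReplacement}. You are in fact more careful than the paper on the second claim, explicitly verifying via the proxy-leaf identity that the disagreement $T \not\cong T_{N_0}$ survives grafting, whereas the paper's proof simply asserts that using the same example settles the second part.
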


\begin{proof}
Theorem \ref{QuartetExample} proves this for $n=4$.

If $n > 4$, simply take the level-$2$ network $N$ described in Theorem \ref{QuartetExample} and leaf-graft some tree $T$ with $n -3$ leaves onto any leaf of $N$. Theorem \ref{QuartetExample} also provides an example where a quartet represents a tree that is not its underlying tree. Using this example as our network $N$ provides an example for the second part of this result.
\end{proof}

\section{Caterpillar Networks}

Leaf-grafting provides a neat method for constructing tree-metrizable networks on an arbitrary number of leaves with interesting properties. We will now define a class of tree-metrizable networks on $n$ leaves with $n-2$ non-trivial reticulations, referred to as \textit{caterpillar networks}. In combination with leaf-grafting, this result shows that any tree $T$ of height $h$ can be represented on a tree-metrizable network with $h-1$ reticulation arcs.

We will require the following standard definition.

\begin{Definition}
Let $T$ be a tree, and suppose the leaves $x_1$ and $x_2$ are both children of the same vertex $v$. Then we say that $x_1$ and $x_2$ form a \textit{cherry}, and denote it $\widehat{x_1x_2}$.
\end{Definition}

We can now define caterpillar networks.

\begin{Definition}
Let $C$ be a network with a caterpillar underlying tree $T$ on $n>3$ leaves. Let $C$ be depicted with each internal tree vertex the left child of its parent vertex, and label the leaves $x_1,\dots,x_n$ from left to right, so that $\widehat{x_1x_2}$ is the unique cherry in $T$. For $1\le i <n-1$, let each leaf $x_i$ have a reticulation arc extending to leaf $x_n$, so that the arcs are attached to $x_n$ in numerical order from bottom to top. Then $C$ is referred to as a \textit{caterpillar network}.

For a caterpillar network $C$ on $n$ leaves, let $T_i$ ($1\le i\le n-2$) denote the unique display tree containing the cherry $\widehat{x_ix_n}$, and let $T_{n-1}$ be the underlying tree of $C$.  This uniquely defines $T_i$ because each display tree of $N$ can only contain at most one of the reticulation arcs added to $T$, because they all end on the same tree edge between the root and leaf $x_n$.  Thus $C$ contains exactly $n-1$ display trees (so that $\T_C=\{T_1,\dots,T_{n-1}\}$), with $T_i$ ($1\le i\le i-2$) the tree displayed by $C$ choosing the reticulation arc from leaf $x_i$, and so containing the cherry $\widehat{x_ix_n}$, and $T_{n-1}$ the tree that chooses no reticulation arcs (the underlying tree $T_{n-1}=T_N$).

For each leaf $x_i$ ($1\le i\le n-2$), let the distance from the parent tree vertex to the start of the reticulation arc be $\ell_i$. Label each internal arc from left to right by $m_2,\dots,m_{n-2}$, so that $m_i$ is to the right of $x_i$.
\end{Definition}

We further note here that if $C$ is a caterpillar network, then for any two weighted display trees $T_1^{w_1},T_2^{w_2}$, if $T_1$ and $T_2$ are isomorphic as unweighted trees, they are isomorphic as weighted trees as well, because any weighted display tree of $N$ is uniquely determined by the lowest HGT arc that is not deleted in its formation. In the following lemma, due to this fact we will denote the sum of probabilities assigned across all of the isomorphic weighted copies of $T_i$ by $\beta_{\sum}(T_i)$, noting that this will be a function mapping $\mathcal{T}$ to $(0,1)$.

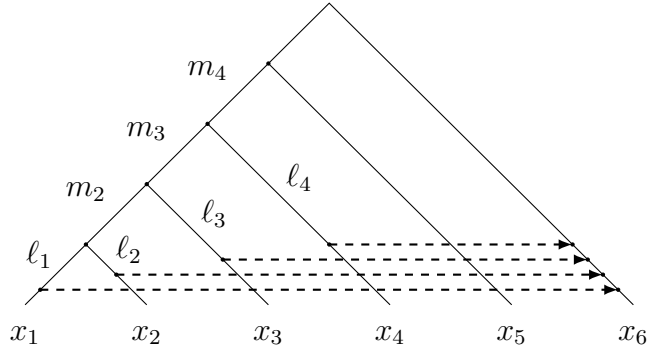
\begin{figure}[H]
\centering
\begin{tikzpicture}[scale=0.4]
\draw (0,0) --(10,10)--(20,0);
\draw (2,2) -- node[above right=-1mm,pos=.4](){$\ell_2$} (4,0);
\draw (4,4) -- node[above right=-1mm,pos=.4](){$\ell_3$} (8,0);
\draw (6,6) -- node[above right=-1mm,pos=.4](){$\ell_4$} (12,0);
\draw (8,8) --(16,0);

\draw [>=latex, ->, dashed, thick] (0.5,0.5) --(19.5,0.5);
\draw [>=latex, ->, dashed, thick] (3,1) --(19,1);
\draw [>=latex, ->, dashed, thick] (6.5,1.5) --(18.5,1.5);
\draw [>=latex, ->, dashed, thick] (10,2) --(18,2);

\draw[fill] (2,2) circle [radius=1.5pt];
\draw[fill] (4,4) circle [radius=1.5pt];
\draw[fill] (6,6) circle [radius=1.5pt];
\draw[fill] (8,8) circle [radius=1.5pt];
\draw[fill] (0.5,0.5) circle [radius=1.5pt];
\draw[fill] (19.5,0.5) circle [radius=1.5pt];
\draw[fill] (6.5,1.5) circle [radius=1.5pt];
\draw[fill] (18.5,1.5) circle [radius=1.5pt];
\draw[fill] (3,1) circle [radius=1.5pt];
\draw[fill] (19,1) circle [radius=1.5pt];
\draw[fill] (10,2) circle [radius=1.5pt];
\draw[fill] (18,2) circle [radius=1.5pt];

\node[above left=-0.01cm] at (3,3) {$\mathstrut m_2$};
\node[above left=-0.01cm] at (5,5) {$\mathstrut m_3$};
\node[above left=-0.01cm] at (7,7) {$\mathstrut m_4$};

\node[above left=-4pt] at (1,1) {$\mathstrut \ell_1$};

\node[below] at (0,0) {$\mathstrut x_1$};
\node[below] at (4,0) {$\mathstrut x_2$};
\node[below] at (8,0) {$\mathstrut x_3$};
\node[below] at (12,0) {$\mathstrut x_4$};
\node[below] at (16,0) {$\mathstrut x_5$};
\node[below] at (20,0) {$\mathstrut x_6$};
\end{tikzpicture}
\caption{A caterpillar network on six leaves,  labelled as required, except for reticulation probabilities.}
\label{f:caterpillar}
\end{figure}

In order to prove that these caterpillar networks are tree-metrizable, we will have to make use of two technical lemmas. The first is Lemma \ref{InequalityLemma}, which we recall is used to reduce the problem of determining whether distances on a network obey the four-point condition to a condition on the sums of the length of the internal arcs of each quartet shape in its display trees.

The second is the next lemma, Lemma \ref{CatDistances}, in which we calculate the sums of the lengths of the internal arcs of each quartet shape in the display trees of a caterpillar network. To this end, we will denote the sum of the length of the internal arcs of each quartet shape $x_ax_b|x_cx_d$ in the display trees of the caterpillar network $C$ by $int(C,x_ax_b|x_cx_d)$. That is, if we denote the set of weighted display trees of $C$ that display $x_ax_b|x_cx_d$ by $\mathcal{T}|_{\{x_ax_b|x_cx_d\}}$, and the length of the internal arc of $T_i|_{\{x_a,x_b,x_c,x_d\}}$ by $p_j$,

\[int(C,x_ax_b|x_cx_d) = \sum_{T_i^w \in \mathcal{T}^w|_{\{x_ax_b|x_cx_n\}}} \beta_{\sum}(T_i) p_i.\]

We are now ready to state and prove the lemma.

\begin{Lemma}
\label{CatDistances}
Let $C$ be a caterpillar network on $X=\{x_1,\dots,x_n\}$, with edge-lengths $\ell_i$ and $m_i$ as shown in Figure~\ref{f:caterpillar}.  Let $\beta_{\sum}: \T \ra (0,1)$ be the function that maps each tree $T_i$ $\in \T_N$ to the total probability assigned to $T_i$ (see Equation~\eqref{eq:beta.def}). For some quartet $q=\{x_a,x_b,x_c,x_n\}$ where $a < b < c < n$, denote the length of the internal arc of $T_i|_{\{x_a,x_b,x_c,x_d\}}$ by $p_i$.
\begin{align*}
int(C,x_ax_b|x_cx_n) & = \beta_{\sum}(T_c)\ell_c +\sum_{s=b+1}^c \left( \sum_{t=s}^{n-1} \beta_{\sum}(T_t) m_{t-1} \right) \\
int(C,x_ax_c|x_bx_n) & = \beta_{\sum}(T_b) \ell_b \\
int(C,x_ax_n|x_bx_c) & = \beta_{\sum}(T_a)\ell_a +\sum_{s=a+1}^b \left( \sum_{t=1}^s \beta_{\sum}(T_t) m_t \right).
\end{align*}
\end{Lemma}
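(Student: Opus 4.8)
The plan is to prove the three formulas in Lemma~\ref{CatDistances} by directly analyzing, for each of the three quartet topologies on $q=\{x_a,x_b,x_c,x_n\}$, precisely which display trees $T_i$ induce that topology on $q$, and what the length of the induced internal arc is. Throughout, I would use the fact noted before the lemma: a display tree $T_i$ ($1\le i\le n-2$) is the one that selects the reticulation arc from $x_i$, giving the cherry $\widehat{x_ix_n}$, while $T_{n-1}=T_N$ is the underlying tree with $x_n$ in its natural caterpillar position at the bottom-right. So the whole argument reduces to bookkeeping: for each $T_i$, restrict to the four leaves of $q$, read off the unrooted quartet topology, and compute the internal arc length as a sum of the relevant $\ell$ and $m$ edge-weights.

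First I would set up the geometry of the restriction $T_i|_q$. Since $a<b<c<n$, the key observation is where the cherry $\widehat{x_ix_n}$ sits relative to the positions $a,b,c$. If $i<a$ then $x_n$ attaches far to the left of all of $x_a,x_b,x_c$, so on the quartet $x_n$ groups with none of them individually and the induced topology places $x_a,x_b,x_c$ in their original caterpillar order with $x_n$ pulled out at the top; if $a\le i$ the cherry $\widehat{x_ix_n}$ interacts with one of $x_a,x_b,x_c$ directly. I would carefully enumerate the ranges of $i$ that produce each of the three topologies $x_ax_b|x_cx_n$, $x_ax_c|x_bx_n$, and $x_ax_n|x_bx_c$. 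The cleanest way is to note that the caterpillar structure means that on any four leaves the induced topology is determined by which leaf is ``least related'' to the rest; the reticulation choice only changes where $x_n$ lands. I expect to find that $x_ax_c|x_bx_n$ arises from exactly one display tree (namely $T_b$, where $x_n$ forms a cherry with $x_b$), which immediately gives the single-term middle formula $\beta_{\sum}(T_b)\ell_b$ — here the internal arc is just the pendant stub $\ell_b$ feeding into the cherry $\widehat{x_bx_n}$.

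For the other two topologies the internal-arc lengths are sums over ranges, and this is where the real work lies. For $x_ax_b|x_cx_n$ I would identify the contributing trees as $T_c$ (giving the term $\beta_{\sum}(T_c)\ell_c$, the cherry $\widehat{x_cx_n}$) together with all trees $T_t$ for $t$ in an appropriate range where the internal arc of the induced quartet is a union of consecutive $m$-edges; the nested double sum $\sum_{s=b+1}^{c}\big(\sum_{t=s}^{n-1}\beta_{\sum}(T_t)m_{t-1}\big)$ should emerge because each internal edge $m_{t-1}$ lies on the induced internal path of $T_t|_q$ exactly for the $s$-values with $b+1\le s\le c$ that place $m_{t-1}$ between the $x_b$- and $x_c$-attachment points. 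Symmetrically, for $x_ax_n|x_bx_c$ the term $\beta_{\sum}(T_a)\ell_a$ comes from the cherry $\widehat{x_ax_n}$ and the double sum $\sum_{s=a+1}^{b}\big(\sum_{t=1}^{s}\beta_{\sum}(T_t)m_t\big)$ accumulates the $m$-edges on the side separating $x_a$ (with $x_n$) from $x_b,x_c$. In each case I would verify the summation limits by checking the two or three smallest boundary cases explicitly against the caterpillar picture in Figure~\ref{f:caterpillar}.

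The main obstacle, and the step demanding the most care, is correctly matching each display tree to its induced quartet topology and then reading off exactly which edge-weights constitute the internal arc — in particular getting the index ranges of the double sums right, including the off-by-one shift between $m_{t-1}$ and $m_t$ in the first versus third formulas. This is fundamentally a combinatorial bookkeeping argument rather than a conceptually deep one, so the risk is an indexing error rather than a missing idea. I would guard against this by treating the two generic regimes ($i$ small, cherry left of the quartet; $i$ large, cherry interacting with $x_c$) separately and sanity-checking the resulting formulas on the explicit six-leaf network of Figure~\ref{f:caterpillar} with a small quartet such as $q=\{x_1,x_2,x_3,x_6\}$ before asserting the general identities.
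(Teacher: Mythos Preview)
Your plan is essentially the paper's own proof: partition the display trees $T_j$ according to the position of $j$ relative to $b$ (namely $j<b$, $j=b$, $j>b$), read off the induced quartet topology and internal-arc length in each case, and sum. The paper finds exactly that $j=b$ alone gives $x_ax_c|x_bx_n$ with internal arc $\ell_b$, while $j>b$ gives $x_ax_b|x_cx_n$ and $j<b$ gives $x_ax_n|x_bx_c$, with the internal-arc lengths then assembled into the stated double sums.

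One correction to your outline: your description of the case $i<a$ is backwards. When $i<a$, the leaf $x_n$ attaches to the caterpillar spine \emph{below} the attachment point of $x_a$, so upon restriction to $q$ the leaf $x_n$ still groups with $x_a$, giving the topology $x_ax_n|x_bx_c$ --- not a tree with ``$x_n$ pulled out at the top''. In fact the entire range $1\le j<b$ (including $j<a$, $j=a$, and $a<j<b$) yields the single topology $x_ax_n|x_bx_c$; what varies across these sub-ranges is only the internal-arc length (the $\ell_a$ contribution appears precisely when $j=a$, and the lower limit of the $m$-sum is $\max(a,j)$). Your planned sanity check on the six-leaf example would catch this, but it is worth getting the case split right from the start so that the double-sum limits fall out cleanly.
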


\begin{proof}

We note that $T_i$ will display the quartet $x_ax_b|x_cx_n$ if $i>b$, $x_ax_c|x_bx_n$ if $i=b$, and $x_ax_n|x_bx_c$ if $i < b$, recalling that for a caterpillar network $C$, $T_i$ is a specific tree in $\T_C$, and contains the cherry $\widehat{x_ix_n}$.  

We now consider the internal arcs of each $T_j$. 

If $j < b$, $T_j$ will display $x_ax_n|x_bx_c$ and the internal arc will extend from the least common ancestor of $x_a$ and $x_n$, as depicted in Figure \ref{f:jlessb} (as a dotted line), 
to the parent vertex of $x_b$, so will have a length of $\ell_a + \sum_{k=2}^{b-1} m_k$ in the case that $j=a$, and $\sum_{k=max(a,j)}^{b-1} m_k$ otherwise.

\begin{figure}[ht]
\centering
\begin{tikzpicture}[xscale=.18,yscale=0.27]
\draw[dashed] (0,0) --(2,2);
\draw (8,8) --(10,10)--(20,0);
\draw[thick,dotted] (4,4)--(8,8);
\draw (4,4) --(8,0);
\draw (2,2) --(4,0);
\draw (2,2)--(4,4);
\draw (8,8) --(16,0);
\draw[dashed] (6,6) --(12,0);

\draw[fill] (6,2) circle [radius=1.5pt];
\draw[fill] (4,4) circle [radius=1.5pt];
\draw[fill] (6,6) circle [radius=1.5pt];
\draw[fill] (8,8) circle [radius=1.5pt];

\node[below] at (10,-2) {$\mathstrut T_1$};

\node[below] at (0,0) {$\mathstrut x_1$};
\node[below] at (4,0) {$\mathstrut x_6$};
\node[below] at (8,0) {$\mathstrut x_2$};
\node[below] at (12,0) {$\mathstrut x_3$};
\node[below] at (16,0) {$\mathstrut x_4$};
\node[below] at (20,0) {$\mathstrut x_5$};
\end{tikzpicture}
\quad
\begin{tikzpicture}[xscale=.18,yscale=0.27]
\draw[dashed] (0,0) --(4,4);
\draw (8,8) --(10,10)--(20,0);
\draw[thick,dotted] (6,2)--(4,4)--(8,8);
\draw (6,2) --(4,0);
\draw (6,2) --(8,0);
\draw[dashed] (6,6) --(12,0);
\draw (8,8) --(16,0);

\draw[fill] (2,2) circle [radius=1.5pt];
\draw[fill] (6,6) circle [radius=1.5pt];
\draw[fill] (8,8) circle [radius=1.5pt];
\draw[fill] (6,2) circle [radius=1.5pt];

\node[below] at (10,-2) {$\mathstrut T_2$};

\node[below] at (0,0) {$\mathstrut x_1$};
\node[below] at (4,0) {$\mathstrut x_2$};
\node[below] at (8,0) {$\mathstrut x_6$};
\node[below] at (12,0) {$\mathstrut x_3$};
\node[below] at (16,0) {$\mathstrut x_4$};
\node[below] at (20,0) {$\mathstrut x_5$};
\end{tikzpicture}
\quad
\begin{tikzpicture}[xscale=.18,yscale=0.27]
\draw[dashed] (0,0) --(2,2);
\draw (2,2)--(6,6) (8,8)--(10,10)--(20,0);
\draw[thick,dotted] (6,6)--(8,8);
\draw (2,2) --(4,0);
\draw (6,6) --(12,0);
\draw[dashed] (10,2) --(8,0);
\draw (8,8) --(16,0);

\draw[fill] (2,2) circle [radius=1.5pt];
\draw[fill] (6,6) circle [radius=1.5pt];
\draw[fill] (8,8) circle [radius=1.5pt];
\draw[fill] (10,2) circle [radius=1.5pt];

\node[below] at (10,-2) {$\mathstrut T_3$};

\node[below] at (0,0) {$\mathstrut x_1$};
\node[below] at (4,0) {$\mathstrut x_2$};
\node[below] at (8,0) {$\mathstrut x_3$};
\node[below] at (12,0) {$\mathstrut x_6$};
\node[below] at (16,0) {$\mathstrut x_4$};
\node[below] at (20,0) {$\mathstrut x_5$};
\end{tikzpicture}
\caption{The display trees $T_1,T_2,T_3$ of the six-leaf caterpillar network in Figure \ref{f:caterpillar}. Here we have taken $q=\{2,4,5,6\}$, indicated $T_i|_q$ with filled lines, the internal arc with dotted lines and edges not included in $T_i|_q$ with dashed lines. Note in particular that all three display $x_2x_6|x_4x_5$. In $T_1$, $j=1<a=2$, in $T_2$, $j=2=a$, and in $T_3$, $j=3>a=2$.
}
\label{f:jlessb}
\end{figure}
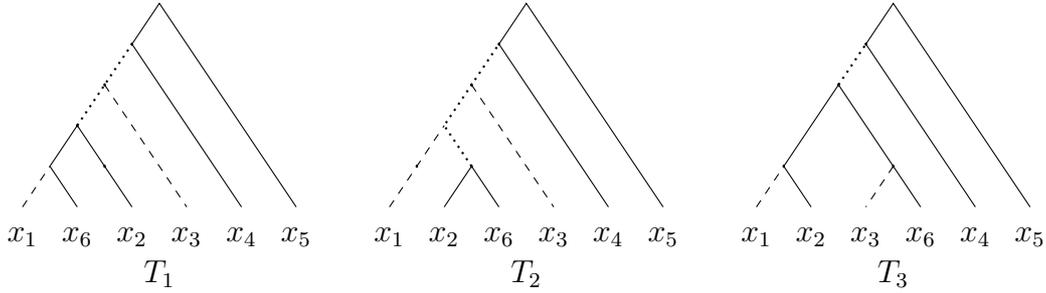

If $j = b$, then $T_j$ will display $x_bx_n|x_ax_c$, and will have an internal arc length of $\ell_b$ (the distance from the least common ancestor of $x_b$ and $x_n$ to its parent vertex).

If $j > b$, then $T_j$ will display $x_ax_b|x_cx_n$ and will have an internal arc extending from the parent vertex of $x_b$ to the parent vertex of whichever of is further left out of $x_c$ and $x_n$, so will have length of $\ell_c + \sum_{k=b}^{c-1} m_k$ in the case that $j=c$ and $\sum_{k=b}^{min(c-1,j-1)} m_k$ otherwise.

If we denote the sum of the contributions from each $T_j$ that displays $x_ax_b|x_cx_n$ by $d_{x_ax_b|x_cx_n}$, it follows that $d_{x_ax_b|x_cx_n}$ will be the sum of the contributions from $T_1$ up to $T_{b-1}$, $d_{x_ax_c|x_bx_n}$ will be the contribution from $T_b$, and $d_{x_ax_n|x_bx_c}$ will be the sum of the contributions from $T_{b+1}$ to $T_{n-1}$. Hence
\begin{align*}
int(C,x_ax_b|x_cx_n)& = \beta_{\sum}(T_c)\ell_c +\sum_{s=b+1}^c \left( \sum_{t=s}^{n-1} \beta_{\sum}(T_t) m_{t-1} \right) \\
int(C,x_ax_c|x_bx_n)& = \beta_{\sum}(T_b) \ell_b \\
int(C,x_ax_n|x_bx_c) & = \beta_{\sum}(T_a)\ell_a +\sum_{s=a+1}^b \left( \sum_{t=1}^s \beta_{\sum}(T_t) m_t \right).
\end{align*}
as required.
\end{proof}

Now that we have proven the technical lemma, the main theorem follows.

\begin{Theorem}
\label{Caterpillar}
Let $C$ be a caterpillar network. Then $C$ is tree-metrizable on every tree it displays.
\end{Theorem}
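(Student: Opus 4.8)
The plan is to reduce $T_r$-metrizability to a purely combinatorial condition on the internal arcs of the display trees, one quartet at a time, and then to exhibit weights and reticulation probabilities meeting all of these conditions at once.

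First I would record the quartet form of the criterion. Fix a target display tree $T_r$. By Theorem~\ref{t:4PC}, $d_C$ is a tree metric realised by a (unique) tree, and it is realised by $T_r$ precisely when, on every quartet $q\subseteq X$, the four-point condition holds \emph{strictly} and the split it resolves equals $T_r|_q$. Repeating the internal-arc bookkeeping behind Lemma~\ref{InequalityLemma} — within each display tree the pendant edges contribute equally to all three distance sums of $q$ and so cancel from their differences — this becomes, for each quartet $q$ with splits $T_r|_q,\sigma,\tau$,
\[
int(C,T_r|_q) \;>\; int(C,\sigma)\;=\;int(C,\tau).
\]
Thus the theorem is: for every $r$ there exist arc-weights $\ell_i,m_i$ and probabilities $\beta_{\sum}(T_i)$ making the split of $T_r$ strictly dominant and the two losing splits tied, simultaneously on all quartets.

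Next I would split the quartets into two families. For $q=\{x_a,x_b,x_c,x_n\}$ containing the sink leaf $x_n$, Lemma~\ref{CatDistances} gives the three $int$-values in closed form, and its proof shows $T_i$ realises $x_ax_b|x_cx_n$, $x_ax_c|x_bx_n$ or $x_ax_n|x_bx_c$ according as $i>b$, $i=b$ or $i<b$; hence $T_r|_q$ is fixed by comparing $r$ with the middle index $b$. For $q=\{x_a,x_b,x_c,x_d\}$ with $a<b<c<d<n$, a direct inspection of the display trees shows that $T_a$ alone realises $x_bx_c|x_ax_d$, $T_b$ alone realises $x_ax_c|x_bx_d$, and every other display tree (including $T_{n-1}$) realises $x_ax_b|x_cx_d$; I would compute the corresponding internal lengths $p^{(q)}_{T_a},p^{(q)}_{T_b}$ as explicit sums of $m_i$ and record the $int$-values exactly as in Lemma~\ref{CatDistances}.

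With these formulas the construction reduces to solving the tie-equalities. They are pleasantly degenerate: for the $x_n$-quartets the two losing $int$-values do not involve $c$, so when (as for the underlying target $r=n-1$) the winning split is $x_ax_b|x_cx_n$, the tie forces the recursion $\beta_{\sum}(T_b)\ell_b=\beta_{\sum}(T_a)\ell_a+\sum_{s=a+1}^{b}\sum_{t=1}^{s}\beta_{\sum}(T_t)m_t$ for all $a<b$, which can be solved greedily, expressing each $\ell_b$ through the $m_i$ and the probabilities; for other $r$ the analogous, case-split recursions arise. I would therefore fix the probabilities first (concentrating mass on $T_r$), solve the recursion for strictly positive $\ell_i$, append the non-$x_n$ tie-equalities, and finally verify the strict winning inequalities for $T_r$.

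The main obstacle is the simultaneous consistency of \emph{all} these constraints, and it bites hardest when $r\le n-2$. There the required winning split on many quartets is a \emph{minority} split, realised by the single tree $T_r$, while the competing majority split $x_ax_b|x_cx_d$ is realised by all the other display trees; one must drive that many-term majority $int$-value down until it merely equals a one-term losing value, yet keep the one-term value $\beta_{\sum}(T_r)p_{T_r}$ strictly on top, all without breaking any $x_n$-quartet equality that shares the same $m_i$. Since the non-$x_n$ equalities are genuinely new (they are not consequences of the $x_n$-quartet relations), the delicate point is to show the two recursions are compatible and leave a strictly positive solution on which the strict inequalities — an open condition — still hold. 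I expect the cleanest way to control this bookkeeping is induction on $n$: deleting the cherry leaf $x_1$ turns $C$ into a caterpillar network on $\{x_2,\dots,x_n\}$, so one hopes to extend a valid parameter family for the smaller network, introducing only the new constraints involving $x_1$ and discharging them with the fresh weights $\ell_1$ and $m_2$.
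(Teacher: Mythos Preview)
Your reduction to quartet-level inequalities via Lemma~\ref{InequalityLemma} and your treatment of the $x_n$-quartets using Lemma~\ref{CatDistances} are on the right track and match the paper's approach. However, your analysis of the quartets \emph{not} containing $x_n$ is mistaken, and this error drives the entire ``main obstacle'' you describe.

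You claim that for $q=\{x_a,x_b,x_c,x_d\}$ with $d<n$, the display tree $T_a$ realises $x_bx_c|x_ax_d$ and $T_b$ realises $x_ax_c|x_bx_d$. This is false. In a caterpillar network the reticulation arcs all point \emph{into} the pendant edge of $x_n$, so every reticulation vertex lies on that one edge; passing from one display tree to another moves only $x_n$, while the caterpillar arrangement of $x_1,\dots,x_{n-1}$ is untouched. Consequently, for any quartet $q$ with $x_n\notin q$, every display tree $T_j$ satisfies $T_j|_q \cong T_C|_q = x_ax_b|x_cx_d$, and in fact the weighted restrictions coincide as well. The paper disposes of these quartets in a single sentence: since all display trees agree on $q$, the network metric restricted to $q$ is exactly a tree metric on $T_C|_q$, and this equals $T_r|_q$ for every target $r$.

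Thus the ``genuinely new'' non-$x_n$ equalities you worry about do not exist, and neither does the minority-split obstacle you describe for $r\le n-2$. Once the non-$x_n$ quartets are set aside as trivial, all that remains is exactly the recursion on the $\ell_j$ (for the $x_n$-quartets) that you already outline; the paper solves it directly via explicit functions $\gamma^{\pm}(j)$, with no need for induction on $n$. Your plan becomes correct---and essentially the paper's proof---as soon as this error is fixed.
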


\begin{proof}
For each leaf $x_i$ $(i<n-1)$, let the distance from the parent tree vertex to the start of the reticulation arc be $\ell_i$, and label each internal arc from left to right by $a_2,\dots,a_{n-2}$, so that $m_i$ is to the right of $x_i$, as in Figure~\ref{f:caterpillar}. Let $\beta_{\sum}: \T \ra (0,1)$ be the function that maps each tree $T_i$ to the total probability assigned to $T_i$. As all weighted display trees of $C$ that are isomorphic as unweighted trees are also isomorphic as weighted trees, this function can be defined unambiguously.

Recall that there are $n-1$ isomorphism classes of display trees of $C$, $\T_C=\{T_1,\dots,T_{n-1}\}$, where $T_i$ is the unique tree for which $x_i$ has $x_n$ as its closest neighbour for each $i \in \{1,\dots,n-2\}$, plus the underlying tree $T_{n-1}$.  We will show that $C$ is $T_i$-metrizable for each $i$.

Consider a quartet $q=\{ x_a,x_b,x_c,x_d \}$, supposing without loss of generality that $a<b<c<d$. We will now find the internal arc weights of the weighted display trees, with a view to invoking Lemma \ref{InequalityLemma}.

We first suppose that $n \ne a,b,c,d$. Any weighted display tree restricted to $q$ is isomorphic to $T_C^{w(T_C)}\vert_q$ , where $w(T_C)$ is the induced weighting on the underlying tree, and furthermore $T_i|_q$ is isomorphic as an unweighted tree to $T_C|_q$, because $x_n$ is not involved in $q$ and so none of the reticulation arcs are involved in either $T_C|_q$ or $T_i|_q$.  Therefore the four-point condition will be obeyed for all such $q$, and $N|_q$ is tree-metrizable on $T_i|_q = T_C|_q$.

Define the functions
\begin{align*}
\gamma^+(j) & = \frac{\beta_{\sum}(T_{j-1}) \ell_{j-1} - \sum_{k=j+1}^{n-1} \beta_{\sum}(T_k) m_{k-1}}{\beta_{\sum}(T_j)}\\
\gamma^-(j) & = \frac{\beta_{\sum}(T_{j+1}) \ell_{j+1} - \sum_{k=1}^{j-1} \beta_{\sum}(T_k) m_k}{\beta_{\sum}(T_j)}, \\
\end{align*}
noting that the function $\gamma^+$ is undefined for $j=n-2$, and similarly $\gamma^-$ is undefined for $j=1$. Note that as $n>3$ for a caterpillar network every $j$ has at least one output between the two functions.

Fix all $a_j$ to be some arbitrary non-zero lengths. Let $\ell_1,\dots,\ell_{i-1},\ell_{i+1},\dots,\ell_{n-2}$ be positive solutions to the following system of linear equations. 
\begin{equation*}
\ell_j =\begin{cases}
\ell_{n-2}, & \text{if $j = n-3$} \\
\gamma^+(j) , & \text{if $i +1 < j < n-3$},\\
\gamma^-(j) , & \text{if $2<j< i-1$},\\
\ell_1, & \text{if $j=2$},
\end{cases}
\end{equation*}
Note that either $\ell_{i+1}$  or $\ell_{i-1}$ will not exist if $i=n-2$ or $i=1$ respectively, but if both exist, we further require that 
\[\beta_{\sum}(T_{i+1}) \ell_{i+1}+\sum_{k=i+1}^{n-1}\beta_{\sum}(T_k)m_k = \beta_{\sum}(T_{i-1})\ell_{i-1}+\sum_{k=1}^{i-1}\beta_{\sum}(T_k) m_k.\]
It is a simple exercise in linear algebra that there exist strictly positive values of $\ell_j$ for all $j \ne i$ that satisfy these equations.

Finally, set $\ell_i$ to be any value larger than $\max\{\gamma^+(i), \gamma^-(i)\}$, where if either $\gamma^+(i)$ or $\gamma^-(i)$ are undefined we just require $\ell_i$ to be larger than the existing expression. We claim that this causes the quartet $x_a,x_b,x_c,x_n$ to satisfy the conditions of Lemma \ref{InequalityLemma}.

In particular, we claim that $C|_q$ displays $x_ax_b|x_cx_n$ if $i>b$, $x_ax_c|x_bx_n$ if $i=b$, and $x_ax_n|x_bx_c$ if $i < b$. To show this, we must show that the appropriate sum of internal arcs is larger than the other two, which must be equal as per Lemma \ref{InequalityLemma}.

First suppose $i>b$. Then to satisfy Lemma \ref{InequalityLemma} we must show
\[\sum_{T_j \cong x_ax_b|x_cx_n} p_j \beta_{\sum}(T_j) > \sum_{T_j \cong x_ax_c|x_bx_n} p_j \beta_{\sum}(T_j) = \sum_{T_j \cong x_ax_n|x_bx_c} p_j \beta_{\sum}(T_j).\]
Therefore we require
\[d_{x_ax_b|x_cx_n} > d_{x_ax_c|x_bx_n} = d_{x_ax_n|x_bx_c}.\]
We first check the equality condition. From Lemma \ref{CatDistances} we know 
\begin{align*}
int(C,x_ax_c|x_bx_n) & = \beta_{\sum}(T_b) \ell_b \\
& = \beta_{\sum}(T_{b-1}) \ell_{b-1} + \sum_{k=1}^b \beta_{\sum}(T_k) m_k \\ 
& = \beta_{\sum}(T_{b-2}) \ell_{b-2} + \sum_{k=1}^{b-1} \beta_{\sum}(T_k) m_k + \sum_{k=1}^b \beta_{\sum}(T_k) m_k \\ 
& = \dots \\
& = \beta_{\sum}(T_a)\ell_a +\sum_{s=a+1}^b \left( \sum_{t=1}^s \beta_{\sum}(T_t) m_t \right) \\ 
& = int(C,x_ax_n|x_bx_c)
\end{align*}
We then check the inequality condition, which must be checked in two parts - for $i \ge c$ or $i < c$. First observe that where both exist, $\beta_{\sum}(T_j) \gamma^+(j) > \beta_{\sum}(T_{j+1}) \ell_{j+1}$, and similarly that $\beta_{\sum}(T_j) \gamma^-(j) > \beta_{\sum}(T_{j-1}) \ell_{j-1}$. Now, if $i \ge c$, from Lemma \ref{CatDistances} we know
\begin{align*}
int(C,x_ax_b|x_cx_n) & = \beta_{\sum}(T_c) \ell_c +\sum_{s=b+1}^c \left( \sum_{t=s}^{n-1} \beta_{\sum}(T_t) m_{t-1} \right) \\
& > \beta_{\sum}(T_c) \ell_c \\
& > \beta_{\sum}(T_{c-1}) \ell_{c-1} \\
& > \dots \\
& > \beta_{\sum}(T_b) \ell_b \\
& = int(C,x_ax_c|x_bx_n). \\
\end{align*}
Otherwise, if $i < c$
\begin{align*}
int(C,x_ax_b|x_cx_n) & = \beta_{\sum}(T_c)\ell_c +\sum_{s=b+1}^c \left( \sum_{t=s}^{n-1} \beta_{\sum}(T_t) m_{t-1} \right) \\
& = \beta_{\sum}(T_{c-1}) \ell_{c-1} +\sum_{s=b+1}^{c-1} \left( \sum_{t=s}^{n-1} \beta_{\sum}(T_t) m_{t-1} \right)\\
& = \dots \\
& = \beta_{\sum}(T_{i+1})\ell_{i+1} +\sum_{s=b+1}^{i+1} \left( \sum_{t=s}^{n-1} \beta_{\sum}(T_t) m_{t-1} \right)\\
& = \beta_{\sum}(T_{i-1})\ell_{i-1}+\sum_{k=1}^{i-1}\beta_{\sum}(T_k)m_k \\
& > \beta_{\sum}(T_{i-1}) \ell_{i-1} \\
& > \dots \\
& > \beta_{\sum}(T_b) \ell_b \\
& = int(C,x_ax_c|x_bx_n). \\
\end{align*}
This proves the case where $i >b$. The cases where $i=b$ and $i<b$ are proved similarly. 
\end{proof}

\begin{Corollary}\label{c:height.h}
Let $T$ be a tree of height $h>2$. Then there exists a $T$-metrizable network with underlying tree $T$ and $h-1$ non-trivial reticulations.
\end{Corollary}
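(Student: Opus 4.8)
The plan is to realize $T$ as the underlying tree of a caterpillar network with subtrees grafted onto its leaves, and then invoke Theorem~\ref{Caterpillar} together with the Replacement Theorem (Theorem~\ref{t:LeafReplacement}).

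First I would fix a longest directed path $P=(v_0,v_1,\dots,v_h)$ from the root $v_0$ to a leaf $v_h=w$; such a path exists and has exactly $h$ arcs because $T$ has height $h$. Each interior vertex $v_j$ (for $0\le j\le h-1$) has out-degree $2$, with one child on $P$ and one off-$P$ child $u_j$ rooting a subtree $B_j$ at depth $j+1$. Since $P$ is a longest path, no leaf of $B_j$ can have depth exceeding $h$, so $B_j$ has height at most $h-(j+1)$; in particular $B_{h-1}$, sitting at depth $h$, must be a single leaf, so the bottom vertex $v_{h-1}$ of the spine carries a cherry $\{w,u_{h-1}\}$. This decomposition exhibits $T$ as a spine of length $h$ with the off-spine subtrees $B_0,\dots,B_{h-1}$ attached.

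Next I would take the caterpillar network $C$ on $h+1$ leaves (which is valid since $h>2$ gives $h+1>3$). Its underlying tree $T_C$ is the caterpillar of height $h$ whose spine has interior vertices at depths $0,\dots,h-1$, a single off-spine leaf at each of the depths $1,\dots,h-1$, and the cherry $\widehat{x_1x_2}$ at depth $h$; moreover $C$ has exactly $(h+1)-2=h-1$ reticulation arcs, each running from some $x_i$ to $x_{h+1}$ between non-adjacent pendant arcs (as $\widehat{x_1x_2}$ is the only cherry, no pair $x_i x_{h+1}$ forms a cherry), hence each reticulation is non-trivial. By Theorem~\ref{Caterpillar}, $C$ is tree-metrizable, in particular on its underlying tree $T_C$. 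I would then graft $B_j$ onto the off-spine leaf of $T_C$ at depth $j+1$ for each $j=0,\dots,h-2$ (relabelling rather than grafting where $B_j$ is a single leaf), while the bottom cherry of $T_C$ is identified with $\{w,u_{h-1}\}$. Because a subtree of height at most $h-(j+1)$ is grafted onto a leaf at depth $j+1$, every leaf stays at depth at most $h$, the spine still carries a leaf at depth $h$, and the resulting underlying tree is exactly $T$.

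Finally, applying Theorem~\ref{t:LeafReplacement} once for each graft (the stock being a tree-metrizable network at every stage) shows that the final network is tree-metrizable, and on its underlying tree, which is $T$. Grafting trees introduces no new reticulation vertices and cannot make two previously non-adjacent arcs adjacent, so the $h-1$ reticulations of $C$ persist as $h-1$ non-trivial reticulations of the final network, which completes the argument. The step requiring the most care --- and the main obstacle --- is the tree-decomposition: one must verify, via the longest-path argument, that \emph{every} height-$h$ tree arises from the height-$h$ caterpillar by grafting off-spine subtrees whose heights exactly fit the room available at each caterpillar leaf, and that the bottom of the spine is forced to be a cherry matching $\widehat{x_1x_2}$; once this is established, the remainder is bookkeeping plus repeated application of the two theorems.
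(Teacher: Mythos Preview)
Your proposal is correct and follows essentially the same approach as the paper: embed the height-$h$ caterpillar along a longest root-to-leaf path in $T$, invoke Theorem~\ref{Caterpillar} to obtain a caterpillar network with $h-1$ non-trivial reticulations that is metrizable on its underlying tree, and then repeatedly apply Theorem~\ref{t:LeafReplacement} to graft the off-spine subtrees. Your write-up is more explicit than the paper's about the decomposition (verifying that the off-spine subtrees fit and that the bottom of the spine is a cherry), but the strategy and the ingredients are the same.
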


\begin{proof}
Let $T^{cat}$ be the caterpillar tree on $h+1$ leaves, with leaves $Y=\{y_1,\dots,y_{h+1} \}$. As $T$ is of height $h$, there exists a graph embedding $\delta$ from $T^{cat}$ into $T$, for instance by mapping the `backbone' of the caterpillar to a path of length $h$ in $T$.

Let $T_i$ be the subtree of $T$ induced by $\delta(y_i)$ for each $i \in \{1,\dots,h+1\}$. Now, by Theorem \ref{Caterpillar}, there exists a network $N$ on $Y$ with caterpillar underlying tree that is $T^{cat}$-metrizable (specifically the network with base tree $T^{cat}$). Note further that $N$ has $h-1$ non-trivial arcs. By repeated application of Theorem \ref{t:LeafReplacement}, we can graft each of $T_1,\dots,T_{h+1}$ to each of $y_1,\dots,y_{h+1}$ respectively in $N$, and the resulting network will be $T$-metrizable.
\end{proof}

With Corollary \ref{c:height.h} we have now shown that for every tree $T$ of height at least $3$ there exists a non-trivial $T$-metrizable HGT network. Additionally, with Theorem \ref{Caterpillar} we have found an infinite class of $T$-metrizable phylogenetic networks where $T$ is not the underlying tree. Together, these results extend the surprising result of Theorem \ref{QuartetExample} in two different directions. Finally, Theorem \ref{t:LeafReplacement} shows us that we can form a tree-metrizable network very easily by grafting trees onto leaves of a network. However, we have not yet considered the opposite case - grafting networks onto the leaves of a tree. We will address this in the next section.

\section{Leaf-Grafts with Network Scions}\label{s:tree.stock}

The question of whether we can form a tree-metrizable network by leaf-grafting a network onto a tree is more complicated. For instance, consider the network $N_1$ shown in Figure \ref{NotTMNetwork}. It is formed by leaf-grafting $N$, a slight modification of the network from \cite{francis2015tree}, onto a $2$-leaf binary tree. The network $N$ is tree-metrizable by a combination of Theorem \ref{QuartetExample} and Lemma \ref{OmitAdjacent}. However, despite the fact that $N_1$ is formed by grafting a tree-metrizable network into a tree, $N_1$ is not itself tree-metrizable - restricting its display trees to $\{x_1,x_3,x_4,x_5\}$ only gives $x_1x_5|x_3x_4$ and $x_1x_4|x_3x_5$, which by Lemma \ref{NotTwo} implies that $N_1$ is not tree-metrizable.

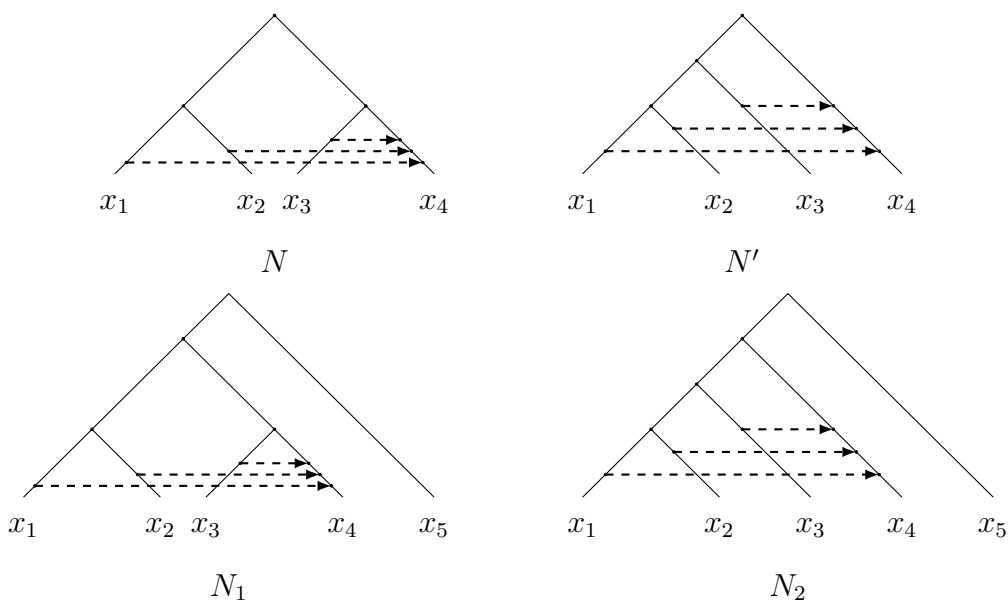
\begin{figure}[ht]
\centering
\begin{tikzpicture}[scale=0.3]
\draw (0,0) --(7,7)--(14,0);
\draw (3,3) --(6,0);
\draw (11,3) --(8,0);

\draw [>=latex, ->, dashed, thick] (5,1) --(13,1);
\draw [>=latex, ->, dashed, thick] (0.5,0.5) --(13.5,0.5);
\draw [>=latex, ->, dashed, thick] (9.5,1.5) --(12.5,1.5);

\draw[fill] (3,3) circle [radius=1.5pt];
\draw[fill] (11,3) circle [radius=1.5pt];
\draw[fill] (7,7) circle [radius=1.5pt];
\draw[fill] (5,1) circle [radius=1.5pt];
\draw[fill] (13,1) circle [radius=1.5pt];
\draw[fill] (0.5,0.5) circle [radius=1.5pt];
\draw[fill] (13.5,0.5) circle [radius=1.5pt];
\draw[fill] (9.5,1.5) circle [radius=1.5pt];
\draw[fill] (12.5,1.5) circle [radius=1.5pt];

\node[below] at (0,0) {$\mathstrut x_1$};
\node[below] at (6,0) {$\mathstrut x_2$};
\node[below] at (8,0) {$\mathstrut x_3$};
\node[below] at (14,0) {$\mathstrut x_4$};

\node at (7,-4) {$\mathstrut N$};
\end{tikzpicture}
\hspace{1cm}
\begin{tikzpicture}[scale=0.3]
\draw (0,0) --(7,7)--(14,0);
\draw (3,3) --(6,0);
\draw (5,5) --(10,0);

\draw [>=latex, ->, dashed, thick] (7,3) --(11,3);
\draw [>=latex, ->, dashed, thick] (4,2) --(12,2);
\draw [>=latex, ->, dashed, thick] (1,1) --(13,1);

\draw[fill] (3,3) circle [radius=1.5pt];
\draw[fill] (1,1) circle [radius=1.5pt];
\draw[fill] (13,1) circle [radius=1.5pt];
\draw[fill] (4,2) circle [radius=1.5pt];
\draw[fill] (12,2) circle [radius=1.5pt];
\draw[fill] (5,5) circle [radius=1.5pt];
\draw[fill] (7,7) circle [radius=1.5pt];
\draw[fill] (7,3) circle [radius=1.5pt];
\draw[fill] (11,3) circle [radius=1.5pt];

\node[below] at (0,0) {$\mathstrut x_1$};
\node[below] at (6,0) {$\mathstrut x_2$};
\node[below] at (10,0) {$\mathstrut x_3$};
\node[below] at (14,0) {$\mathstrut x_4$};

\node at (7,-4) {$\mathstrut N'$};
\end{tikzpicture}
\begin{tikzpicture}[scale=0.3]
\draw (0,0) --(7,7)--(14,0);
\draw (7,7)--(9,9)--(18,0);
\draw (3,3) --(6,0);
\draw (11,3) --(8,0);

\draw [>=latex, ->, dashed, thick] (5,1) --(13,1);
\draw [>=latex, ->, dashed, thick] (0.5,0.5) --(13.5,0.5);
\draw [>=latex, ->, dashed, thick] (9.5,1.5) --(12.5,1.5);

\draw[fill] (3,3) circle [radius=1.5pt];
\draw[fill] (11,3) circle [radius=1.5pt];
\draw[fill] (7,7) circle [radius=1.5pt];
\draw[fill] (5,1) circle [radius=1.5pt];
\draw[fill] (13,1) circle [radius=1.5pt];
\draw[fill] (0.5,0.5) circle [radius=1.5pt];
\draw[fill] (13.5,0.5) circle [radius=1.5pt];
\draw[fill] (9.5,1.5) circle [radius=1.5pt];
\draw[fill] (12.5,1.5) circle [radius=1.5pt];

\node[below] at (0,0) {$\mathstrut x_1$};
\node[below] at (6,0) {$\mathstrut x_2$};
\node[below] at (8,0) {$\mathstrut x_3$};
\node[below] at (14,0) {$\mathstrut x_4$};
\node[below] at (18,0) {$\mathstrut x_5$};

\node at (9,-4) {$\mathstrut N_1$};
\end{tikzpicture}
\hspace{1cm}
\begin{tikzpicture}[scale=0.3]
\draw (0,0) --(7,7)--(14,0);
\draw (7,7)--(9,9)--(18,0);
\draw (3,3) --(6,0);
\draw (5,5) --(10,0);

\draw [>=latex, ->, dashed, thick] (7,3) --(11,3);
\draw [>=latex, ->, dashed, thick] (4,2) --(12,2);
\draw [>=latex, ->, dashed, thick] (1,1) --(13,1);

\draw[fill] (3,3) circle [radius=1.5pt];
\draw[fill] (1,1) circle [radius=1.5pt];
\draw[fill] (13,1) circle [radius=1.5pt];
\draw[fill] (4,2) circle [radius=1.5pt];
\draw[fill] (12,2) circle [radius=1.5pt];
\draw[fill] (5,5) circle [radius=1.5pt];
\draw[fill] (7,7) circle [radius=1.5pt];
\draw[fill] (7,3) circle [radius=1.5pt];
\draw[fill] (11,3) circle [radius=1.5pt];

\node[below] at (0,0) {$\mathstrut x_1$};
\node[below] at (6,0) {$\mathstrut x_2$};
\node[below] at (10,0) {$\mathstrut x_3$};
\node[below] at (14,0) {$\mathstrut x_4$};
\node[below] at (18,0) {$\mathstrut x_5$};

\node at (9,-4) {$\mathstrut N_2$};
\end{tikzpicture}
\caption{Two examples of a networks formed by leaf-grafting a tree-metrizable network onto a tree. The resulting network $N_1$ is not tree-metrizable, but the network $N_2$ is tree-metrizable.}
\label{NotTMNetwork}
\end{figure}

However, if we take the network $N$ and just move the root to leaf $x_4$ to form $N'$, and \textit{then} graft it, we obtain the network in Figure \ref{NotTMNetwork} (ii), which is tree-metrizable by the observation that it is an example from Theorem \ref{Caterpillar} with a relocation of the root for which the pairwise distances for any weighting do not change (for details, see Theorem \ref{RootRel}). 

The remainder of the paper has two aims. Firstly, we classify all possible level-$2$ networks $N$ for which the leaf-graft of $N$ onto a tree $T$ will be tree-metrizable. Subsequently, we will define a class of networks for which the leaf-graft of a network onto a tree $T$ will always produce a tree-metrizable network.

\begin{Definition}
Let $N$ be an HGT network, with underlying tree $T_N$. Suppose that some pair of edges $a_1,a_2$ of $T_N$ are subdivided and an HGT arc placed between them in either direction. Then we say $a_1,a_2$ are \textit{HGT-connected} and denote it $a_1-a_2$. 
\end{Definition}

As we will often be considering the incoming edges of the leaves, call such an edge a \textit{leaf arc}, and if the leaf is denoted $x_i$, denote the leaf arc of $x_i$ by $e_i$.

The following theorem greatly reduces the possibilities for tree-metrizable quartets and will serve as a useful tool for the following theorems.

\begin{Theorem}
\label{3IsoClass}
Let $N$ be a network on $q=\{x_1,x_2,x_3,x_4\}$ with an underlying tree of the form $x_1x_2|x_3x_4$. Then $N$ displays three isomorphism classes of trees if and only if there exists one leaf in $q$ that is HGT-connected to both of the other non-adjacent leaves (e.g. $x_1-x_3$ and $x_1-x_4$).
\end{Theorem}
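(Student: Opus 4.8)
The plan is to reduce the statement to a purely combinatorial fact about which edges of a four-leaf tree are non-adjacent, and then to track how each reticulation arc permutes the quartet displayed. First I would record the structural observation that underpins everything: the unrooted underlying tree $T_N$ has topology $x_1x_2|x_3x_4$, so (whatever the precise rooting) it has exactly five edges, namely the four leaf arcs $e_1,e_2,e_3,e_4$ and a single internal edge $m$. Every leaf arc is adjacent to $m$, and $e_1,e_2$ (resp.\ $e_3,e_4$) are adjacent to each other. Hence the \emph{only} non-adjacent pairs of edges are the four ``cross'' pairs $\{e_1,e_3\},\{e_1,e_4\},\{e_2,e_3\},\{e_2,e_4\}$. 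Combined with Lemma~\ref{AdjacentLemma}, which tells us that arcs between adjacent edges do not change the displayed quartet, this means every HGT arc capable of altering the topology must be a cross leaf-arc arc, i.e.\ a connection of the form $x_i-x_j$ with $x_i\in\{x_1,x_2\}$, $x_j\in\{x_3,x_4\}$.

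Next I would compute the local effect of a single such arc. A short check (robust to the choice of rooting, and independent of the arc's direction) shows that activating only $e_1-e_3$ in an otherwise-underlying display tree reparents $x_3$ to sit beside $x_1$, yielding the quartet $x_1x_3|x_2x_4$; by the symmetry $x_1\leftrightarrow x_2,\ x_3\leftrightarrow x_4$, activating only $e_2-e_4$ also yields $x_1x_3|x_2x_4$, while $e_1-e_4$ and $e_2-e_3$ each yield $x_1x_4|x_2x_3$. I would therefore partition the cross connections into group $A=\{e_1-e_3,\,e_2-e_4\}$ (producing $x_1x_3|x_2x_4$) and group $B=\{e_1-e_4,\,e_2-e_3\}$ (producing $x_1x_4|x_2x_3$).

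For the ``if'' direction, suppose some leaf is HGT-connected to both of its non-adjacent leaves, say $x_1-x_3$ and $x_1-x_4$; these are exactly one group-$A$ arc and one group-$B$ arc. I would then exhibit three display trees by making explicit choices $f\in F_N$: setting every reticulation vertex to its tree parent gives the underlying quartet $x_1x_2|x_3x_4$; activating only the $x_1-x_3$ arc (and every other reticulation to its tree parent) gives $x_1x_3|x_2x_4$; activating only the $x_1-x_4$ arc gives $x_1x_4|x_2x_3$. These are three distinct isomorphism classes, so $N$ displays all three. The only care needed is to note that deactivated arcs contribute nothing, since their reticulation vertices and associated subdivision points are suppressed, so each of these trees really does have the asserted shape.

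For the ``only if'' direction I would argue the contrapositive. Assume no leaf is connected to both of its non-adjacent leaves; then, viewing the cross connections as a bipartite graph between $\{x_1,x_2\}$ and $\{x_3,x_4\}$, this says every vertex has degree at most one, so the connections form a matching. The two perfect matchings of $K_{2,2}$ are precisely group $A$ and group $B$, and every sub-matching (single edge or empty) lies inside one of them; in particular a group-$A$ edge and a group-$B$ edge always share an endpoint, which is forbidden. Thus all cross arcs of $N$ lie in a single group, and it remains to conclude that at most two quartets can appear. This last step is the main obstacle: a display tree may activate \emph{several} arcs simultaneously (cross arcs of the one group together with various adjacent arcs), so I must show that these combinations never manufacture the opposite group's quartet — that is, ``group'' is invariant under arbitrary reticulation choices. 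I would establish this by checking that each activated group-$A$ arc only ever clusters $x_1$ with $x_3$ or $x_2$ with $x_4$, while an adjacent arc never moves a leaf across the internal edge $m$; hence when all cross arcs lie in group $A$, no choice $f$ can separate $\{x_1,x_4\}$ from $\{x_2,x_3\}$, so the quartet $x_1x_4|x_2x_3$ is never displayed (and symmetrically for group $B$). Consequently $N$ displays at most two isomorphism classes, which completes the contrapositive and the proof.
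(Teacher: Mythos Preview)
Your proof is correct and follows essentially the same strategy as the paper: both the ``if'' direction (exhibit three display trees by switching off all arcs, then activating each of the two designated cross arcs in turn) and the ``only if'' direction (reduce to cross arcs between $\{e_1,e_2\}$ and $\{e_3,e_4\}$, observe that the hypothesis forces them all to be of one ``type,'' and check that one type yields at most two quartets) match the paper's argument, with your matching-in-$K_{2,2}$ framing just a repackaging of the paper's case split. One minor quibble: Lemma~\ref{AdjacentLemma} concerns metrizability when \emph{all} arcs are adjacent, not the effect of individual adjacent arcs, so your appeal to it is slightly imprecise---but the underlying claim (an adjacent arc never changes the unrooted quartet) is easy to verify directly, and the paper is equally brief on the corresponding multi-arc verification.
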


\begin{proof}
Let $e_i$ denote the leaf arc of $x_i$ in \textit{the underlying tree}. First suppose there exists one leaf $x_i$ in $q$ so that $e_i$ is HGT-connected to both of the other non-adjacent leaf arcs. Without loss of generality suppose $N$ has two reticulation arcs $a,b$ that connect the leaf $e_1$ to $e_3$ and $e_1$ to $e_4$ respectively. Observe that the display tree obtained by deleting all HGT arcs except for $a$ has the cherry $\widehat{x_1x_3}$, the display tree obtained by deleting all HGT arcs except $b$ has the cherry $\widehat{x_1x_4}$. Thus these display trees are non-isomorphic as unrooted trees, and neither is isomorphic to the underlying tree (which is always displayed as we can just delete all HGT arcs). Hence $N$ displays three isomorphism classes.

Now suppose that $N$ does not have one leaf in $q$ that is HGT-connected to both of the other non-adjacent leaves. We shall consider each possible scenario.

Firstly, suppose there are at least two arcs that connect $\{e_1,e_2\}$ to $\{e_3,e_4\}$, but that no leaf arc is HGT-connected to both of the leaf arcs in the other set. That is, that $e_1-e_3$ and $e_2-e_4$ or $e_1-e_4$ and $e_2-e_3$.  These cases are identical up to relabelling, so it suffices to consider $e_1-e_3$, $e_2-e_4$. It is then easy to observe that in this case we obtain only two display trees namely $x_1x_2|x_3x_4$ and $x_1x_3|x_2x_4$).

Now suppose that there is exactly one leaf arc in $\{e_1,e_2\}$ HGT-connected to a leaf arc in $\{e_3,e_4\}$. Without loss of generality, assume it is $e_1-e_3$. Observe that $N$ therefore cannot display $e_1e_4|e_2e_3$, as this requires $e_2-e_3$ or $e_1-e_4$, and so $N$ does not display all three isomorphism classes.

Finally, suppose that no leaf arc in $\{e_1,e_2\}$ is HGT-connected to a leaf arc in $\{e_3,e_4\}$. Then all reticulation arcs are between adjacent arcs, so by Lemma \ref{AdjacentLemma}, $N$ only displays the underlying tree.
\end{proof}

In order to examine our graftings of networks into trees, we will need to isolate them from the rest of the tree. In order to do this we shall use the concept of a biconnected component.

\begin{Definition}\label{d:biconnected}
A \textit{biconnected component} of an HGT network $N$ is a maximal HGT-connected subgraph $B$ of $N$ for which the removal of any arc of $B$ is a HGT-connected graph.
\end{Definition}

Observe that if $V(B)$ is the collection of vertices contained in a biconnected component, then $LSA(V(B))$ is always contained in $B$. However, a biconnected component is never going to be a binary network: it cannot contain any leaves as the leaf vertices are degree-$1$, and so the removal of a leaf arc will result in a disconnected graph. We therefore must find the smallest sub-network of our network that contains $B$, including all necessary information. For this purpose we will use the concept of induced networks \cite[p.~143]{Huson2010}, but with a small modification. 

\begin{Definition}\label{d:min.supp.net}
Let $B$ be a biconnected component of an HGT network $N$ on $X$. Let $\{v_1,\dots,v_k\}$ be the set of vertices in $B$ that have smaller outdegree in $B$ than in $N$. For each $v_i$ add a vertex $w_i$ and the edge $(v_i,w_i)$. Finally, label all the resulting leaves by a unique descendent of $v_i$, so that all leaves have a distinct label. The resulting phylogenetic network is referred to as a \textit{minimal support network of $B$ in $N$}, denoted $N(B)$, and is unique up to rearrangement of the leaf labels.
\end{Definition}

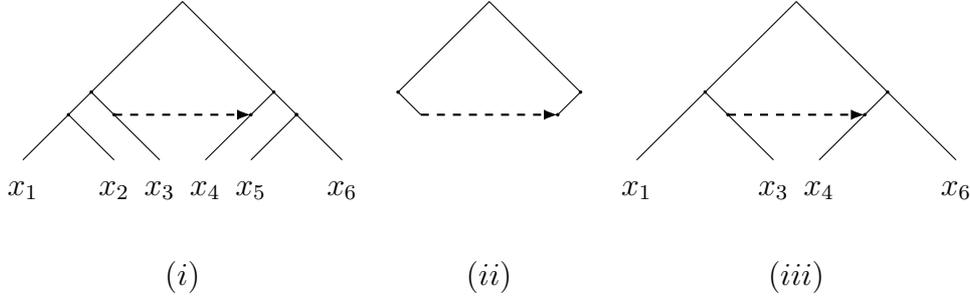
\begin{figure}[H]
\centering
\begin{tikzpicture}[scale=0.3]
\draw (0,0) --(7,7)--(14,0);
\draw (2,2) --(4,0);
\draw (3,3) --(6,0);
\draw (11,3) --(8,0);
\draw (12,2) --(10,0);

\draw [>=latex, ->, dashed, thick] (4,2) --(10,2);

\draw[fill] (2,2) circle [radius=1.5pt];
\draw[fill] (12,2) circle [radius=1.5pt];
\draw[fill] (3,3) circle [radius=1.5pt];
\draw[fill] (11,3) circle [radius=1.5pt];
\draw[fill] (12,2) circle [radius=1.5pt];
\draw[fill] (4,2) circle [radius=1.5pt];
\draw[fill] (10,2) circle [radius=1.5pt];

\node[below] at (0,0) {$\mathstrut x_1$};
\node[below] at (4,0) {$\mathstrut x_2$};
\node[below] at (6,0) {$\mathstrut x_3$};
\node[below] at (8,0) {$\mathstrut x_4$};
\node[below] at (10,0) {$\mathstrut x_5$};
\node[below] at (14,0) {$\mathstrut x_6$};

\node[below] at (7,-4) {$\mathstrut (i)$};
\end{tikzpicture}
\hspace{0.1cm}
\begin{tikzpicture}[scale=0.3]
\draw (3,3) --(7,7)--(11,3);
\draw (3,3) --(4,2);
\draw (11,3) --(10,2);

\draw [>=latex, ->, dashed, thick] (4,2) --(10,2);

\draw[fill] (3,3) circle [radius=1.5pt];
\draw[fill] (11,3) circle [radius=1.5pt];
\draw[fill] (10,2) circle [radius=1.5pt];

\node[below] at (7,-4) {$\mathstrut (ii)$};
\end{tikzpicture}
\hspace{0.1cm}
\begin{tikzpicture}[scale=0.3]
\draw (0,0) --(7,7)--(14,0);
\draw (3,3) --(6,0);
\draw (11,3) --(8,0);

\draw [>=latex, ->, dashed, thick] (4,2) --(10,2);

\draw[fill] (3,3) circle [radius=1.5pt];
\draw[fill] (11,3) circle [radius=1.5pt];
\draw[fill] (4,2) circle [radius=1.5pt];
\draw[fill] (10,2) circle [radius=1.5pt];

\node[below] at (0,0) {$\mathstrut x_1$};
\node[below] at (6,0) {$\mathstrut x_3$};
\node[below] at (8,0) {$\mathstrut x_4$};
\node[below] at (14,0) {$\mathstrut x_6$};

\node[below] at (7,-4) {$\mathstrut (iii)$};
\end{tikzpicture}
\caption{(i) an HGT network $N$ with a single non-trivial biconnected component, $B$; (ii) The non-trivial biconnected component $B$ of $N$; (iii) A minimal support network of $B$.}
\label{BCPic}
\end{figure}

We note that minimal support networks are perhaps most easily understood as the rooted equivalent of $B_N$ from \cite{francis2018a}. We also note here that if a minimal support network of a biconnected component in a network $N$ is not tree-metrizable, it easily follows that $N$ is not tree-metrizable.

\begin{Theorem}
Let $N$ be a level-$2$ tree-metrizable network, containing a biconnected component $B$. Then any minimal support network $N(B)$ in $N$ has a caterpillar underlying tree unless either 

\begin{enumerate}
\item $N(B)$ contains the root of $N$, or 
\item all reticulation arcs in $N(B)$ are between adjacent arcs of $T_{N(B)}$.
\end{enumerate}
\end{Theorem}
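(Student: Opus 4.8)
The plan is to push everything down to the minimal support network $N(B)$ and argue by contradiction on the shape of its underlying tree. As noted just before the statement, if $N(B)$ is not tree-metrizable then neither is $N$, so tree-metrizability passes to $N(B)$; and since every reticulation vertex carries exactly one incoming reticulation arc, level-$2$-ness of $N$ means $B$ has either one or two reticulation arcs. I would dispose of the one-arc case first: its single arc is either between adjacent arcs of $T_{N(B)}$, which is conclusion (2), or it is non-trivial, in which case (by the remark following Lemma \ref{NotTwo}) $N(B)$ has exactly two display-tree isomorphism classes and is not tree-metrizable, contradicting tree-metrizability of $N$. From here I would assume $B$ has exactly two reticulation arcs, that conclusion (1) fails (the root of $N$ is not in $N(B)$) and conclusion (2) fails (at least one arc is non-trivial), and aim to show $T_{N(B)}$ is a caterpillar.

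Suppose for contradiction that $T_{N(B)}$ is not a caterpillar. Then it has at least two cherries, and since distinct cherries have disjoint leaf sets I can choose two of them, $\widehat{x_1x_2}$ and $\widehat{x_3x_4}$, with common ancestor $w$; the quartet $q=\{x_1,x_2,x_3,x_4\}$ then has underlying shape $x_1x_2|x_3x_4$. Because $N(B)$ is tree-metrizable, Lemma \ref{NotTwo} forbids $q$ from inducing exactly two isomorphism classes, so by Theorem \ref{3IsoClass} and its proof the only survivors are: (i) no HGT arc connects $\{e_1,e_2\}$ to $\{e_3,e_4\}$ (one class), or (ii) a single leaf arc, say $e_1$, is HGT-connected to both $e_3$ and $e_4$ (three classes). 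The intermediate placements analysed inside Theorem \ref{3IsoClass} — a lone cross arc $e_1-e_3$, or a parallel pair $e_1-e_3$ and $e_2-e_4$ — each give exactly two classes and are already excluded by Lemma \ref{NotTwo}; it is precisely the level-$2$ cap of two arcs that makes this dichotomy clean.

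I would then break each survivor by manufacturing a forbidden two-class quartet. In case (ii) both arcs run from the single edge $e_1$ into the opposite cherry with their reticulation vertices stacked on $e_1$; this is exactly the balanced two-reticulation configuration of Theorem \ref{QuartetExample}, whose apex is $w$. Since conclusion (1) fails there is a leaf $z$ of $N$ branching strictly above $w$ (either another leaf of $N(B)$, which exists whenever $w$ is not the root of $N(B)$, or a leaf of $N$ above $LSA(V(B))$). Then the quartet $\{x_1,x_2,x_3,z\}$ restricts the two arcs so that $z$ is an outgroup and $x_1$ is pulled only toward the $\{x_3,x_4\}$ side, giving exactly the two classes $x_1x_2|x_3z$ and $x_1x_3|x_2z$ — contradicting Lemma \ref{NotTwo}. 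This is precisely the mechanism by which $N_1$ of Figure \ref{NotTMNetwork} fails, and it shows the balanced configuration can survive only when no such $z$ exists, i.e.\ when $w$ is the root of $N$, which is conclusion (1). In case (i) the chosen cherries are not HGT-linked, but conclusion (2) fails, so some non-trivial arc $f-g$ lives elsewhere; I would re-choose a quartet straddling $f$ and $g$ so that this arc is the unique cross arc, yielding two classes and again contradicting Lemma \ref{NotTwo}, unless the second arc restores a third class — which lands back in the balanced configuration already handled. In every case we reach a contradiction, so $T_{N(B)}$ is a caterpillar.

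The step I expect to be the main obstacle is the completeness of this enumeration together with the outgroup construction: I must translate HGT-connections, which live on edges of $T_{N(B)}$, into statements about cherries and quartets, and verify that with only two arcs — constrained by biconnectivity of $B$ and by the horizontal (equal-time) nature of HGT arcs — every non-caterpillar placement other than the balanced-at-root configuration genuinely exhibits a two-class quartet somewhere. The delicate book-keeping is choosing which added leaf of $N(B)$ (or which leaf of $N$ above $w$) serves as the outgroup, tracking how the second arc acts under restriction, and confirming that the balanced configuration survives exactly when its apex coincides with the root (conclusion (1)); this is where the real work lies.
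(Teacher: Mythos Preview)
Your overall architecture matches the paper's: argue by contradiction that $T_{N(B)}$ has two cherries, dispose of the single-arc case, and ultimately kill the surviving two-arc configuration by adjoining an outgroup leaf above the balanced quartet and invoking Lemma~\ref{NotTwo}. Where your plan diverges, and where it breaks, is in using Theorem~\ref{3IsoClass} as the case-splitter. That theorem is stated for a four-leaf network, and its dichotomy (one class / three classes versus the forbidden two) is phrased in terms of HGT-connections among the \emph{leaf} edges $e_1,\dots,e_4$. But $N(B)$ need not have only four leaves: with two reticulation arcs, the minimal support network can be larger, and then an arc incident to your chosen cherry may have its other end on an edge \emph{outside} $q$ altogether. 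In that situation your trichotomy ``no cross arc / lone cross arc / two cross arcs meeting at one leaf'' is simply not exhaustive, and Theorem~\ref{3IsoClass} does not tell you how many quartet classes $q$ sees. The paper's proof spends most of its length on exactly this: after pinning down (via the minimal-support-network construction, which guarantees each cherry has an HGT endpoint --- a step you omit) that say $e_1$ and $e_3$ carry endpoints, it first assumes one of those arcs leaves $q$, and runs through three sub-cases depending on where it lands, each time manufacturing a bad quartet. Only after that does it reduce to $N_B=N_q$ and finish with the outgroup leaf.

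Your case (i) is where this gap shows most clearly. ``Re-choose a quartet straddling $f$ and $g$, unless the second arc restores a third class, which lands back in the balanced configuration'' is circular: the second arc may restore a third class on the \emph{new} quartet without reproducing the $e_1$--$e_3$, $e_1$--$e_4$ pattern on any pair of cherries, and you then have no cherry-pair to which to apply your case (ii) argument. A smaller point: in case (ii) you assert the reticulation vertices are ``stacked on $e_1$'', but HGT-connected is direction-agnostic; the outgroup argument does survive the various orientations, but you should check them rather than assume one. The fix is to follow the paper's order: first use biconnectivity of $B$ to force HGT endpoints into both cherries, then case on whether those arcs stay inside $q$ or leave it, and only invoke Theorem~\ref{3IsoClass} once you have established $N(B)=N(B)\vert_q$.
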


\begin{proof}
Suppose, seeking a contradiction, that the underlying tree $T_{N(B)}$  of $N(B)$ contains two cherries, since a tree is a caterpillar tree if and only if it has exactly one cherry. Additionally suppose $N(B)$ does not contain the root.  First observe that if $N(B)$ contains exactly one reticulation arc, and it is not between adjacent arcs of $T_{N(B)}$,  $N(B)$ is not tree-metrizable~\cite[Theorem 5]{francis2015tree} and it follows that $N$ is not tree-metrizable. Hence we can assume that there are two reticulation arcs in $N(B)$, and at least one of them is not between adjacent arcs of the underlying tree.

Let $q=x_1x_2|x_3x_4$ be a quartet in $N(B)$ such that $q$ has two cherries of minimal height, that is, there is no leaf $\ell$ that separates $x_1$ from $x_2$, or $x_3$ from $x_4$. Denote the leaf edges of $x_i$ \textit{in the underlying tree of $N(B)$, $T_{N(B)}$}, by $e_i$. Then, by construction, both cherries of $N_q$ contain the source or target of a reticulation arc, since $N(B)$ is a minimal support network and so the tree vertex parent of the underlying tree of $N(B)$ must have at least one reticulation descendant. Without loss of generality, suppose $e_1$ and $e_3$ are the sources or targets of  reticulation arcs. 

We first assume that at least one of the HGT arcs starting or ending on $e_1$ and $e_3$ has the other end on an arc outside of $N|_q$. We will show that this means $N$ is not tree-metrizable, by running through the cases.

Suppose without loss of generality, that $e_1$ is HGT-connected to an arc $A$ outside of $N|_q$, and that $A$ has some descendant $\ell$.

\begin{enumerate}
\item \textit{Either $e_2$ or $e_4$ are HGT-connected to $e_3$:} Suppose first that $e_2$ or $e_4$ is HGT-connected to $e_3$. If $e_2-e_3$, $N(B)$ is not tree-metrizable by considering $N(B)|_{\{x_1,x_2,x_3,\ell \} }$ with Theorem \ref{3IsoClass}. If $e_3-e_4$, then as no other arcs connect to $e_3$ or $e_4$ and $x_3$ forms a cherry with $x_4$, this is a case of an arc connecting siblings immediately below a tree vertex, which can be omitted by Lemma \ref{OmitAdjacent}. This leaves a single arc between non-adjacent arcs, which by Lemma \ref{NotTwo} implies $N$ is not tree-metrizable. We can therefore assume that $e_3$ is not HGT-connected to $e_2$ or $e_4$. 

\item \textit{Neither $e_2$ nor $e_4$ are HGT-connected to $e_3$ and the endpoints of the arcs HGT-connected to $e_1$ and $e_3$ are connected in $N$ by a tree-path:}  First, suppose the arc HGT-connected to $e_1$ is a tree-path descendant of the arc HGT-connected to $e_3$, or vice versa. Then we can select $\ell$ to be a descendant of both and consider the network induced by $\{x_1,x_2,x_3,\ell\}$. It is clear that there exist display trees with $x_1x_2|x_3\ell$ and $x_1x_3|x_2\ell$, but there are none with $x_1 \ell |x_2x_3$, since $x_1$ and $x_2$ either form a cherry or $x_2$ and $\ell$ form a cherry. Thus $N$ cannot be tree-metrizable unless $e_1$ and $e_3$ are HGT-connected to arcs that are not descended from one another, by Lemma~\ref{NotTwo}. 

\item \textit{Neither $e_2$ nor $e_4$ are HGT-connected to $e_3$ and the endpoints of the arcs HGT-connected to $e_1$ and $e_3$ are not connected in $N$ by a tree-path:} Let the arcs HGT-connected to $e_1$ and $e_3$ be denoted $A$ and $B$ respectively. As $A$ and $B$ are not descendants of one another, there exists $\ell$ and $k$ that are each only descended from  $A$ and $B$ respectively. If the underlying tree $T_{N(B)}$ of $N(B)$ restricted to $\{x_1,x_3,k,\ell \}=x_1x_3|k\ell$ or $x_1\ell | x_3 k$, then $N(B)$ is not tree-metrizable by Lemma \ref{3IsoClass}. However, if $T_{N(B)}$ restricted to $\{x_1,x_3,k,\ell \}$ is $x_1k|x_3\ell$, then we can consider the reticulation $a$ between $x_3$ and $\ell$. If $a$ ends on $x_3$, we consider $q_1= \{x_1,x_2,k,\ell\}$, and if $a$ ends on $\ell$, we consider $q_2=\{x_1,x_2,x_3,k\}$. In both cases, the set of $N(B)$'s display trees restricted to $q_i$ has exactly two non-isomorphic elements and so by Lemma \ref{NotTwo}, $N(B)$ --- and thus $N$ --- is not tree-metrizable.
\end{enumerate}

We can therefore assume all reticulation arcs in the biconnected component are between arcs of $N_q$, that is, that $N_B=N_q$.

If there are no arcs between the two cherries then we have the trivial case of all reticulation arcs in $N(B)$ being between adjacent arcs. Hence assume $e_1-e_3$. By Theorem \ref{3IsoClass}, $N_q$ needs either $e_2-e_3 $ or $e_1-e_4$ in order to display all $3$ isomorphism classes and thus be tree-metrizable. 
By symmetry, it suffices to consider the $e_1-e_3, e_2-e_3$ case.

Now, as $N_q$ does not contain the root, there exists a leaf $\ell$ attached to the central arc of $N_q$ (omitting reticulation arcs), as in Figure~\ref{f:leaf.btw.2.cherries}.

\begin{figure}[ht]
\centering
\begin{tikzpicture}[scale=0.4]
\draw (0,0) --(5,5)--(10,0);
\draw (5,5)--(6,6)--(12,0);
\draw (2,2) --(4,0);
\draw (8,2) --(6,0);

\draw[fill] (2,2) circle [radius=1.5pt];
\draw[fill] (8,2) circle [radius=1.5pt];
\draw[fill] (5,5) circle [radius=1.5pt];

\node[below] at (0,0) {$\mathstrut x_1$};
\node[below] at (4,0) {$\mathstrut x_2$};
\node[below] at (6,0) {$\mathstrut x_3$};
\node[below] at (10,0) {$\mathstrut x_4$};
\node[below] at (12,0) {$\mathstrut \ell$};
\end{tikzpicture}
\caption{A leaf $\ell$ attached to the edge between two cherries.}\label{f:leaf.btw.2.cherries}
\end{figure}
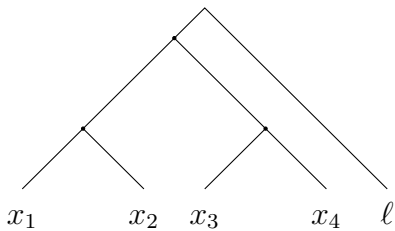

One can quickly observe that the display trees obtained by this configuration display $x_1 \ell |x_2x_4$ and $x_1x_2|x_4 \ell$ but not $x_1x_4|x_2 \ell$, so $N$ is not tree-metrizable by Lemma \ref{NotTwo}.

As all possibilities are exhausted, the theorem follows.
\end{proof}

We will now define a class of networks $\mathcal{C}$ for which all leaf-grafts of $N \in \mathcal{C}$ onto a tree $T$ are tree-metrizable.

\begin{Definition}
Let $N$ be a network obtained by taking a caterpillar network $C$ and adding a reticulation arc from leaf edge $e_{n-1}$ to leaf edge $e_n$, ending above all of the others. Then $N$ is termed an \textit{enhanced caterpillar network}.
\end{Definition}

\begin{Theorem}
\label{RootRel}
Let $N$ be an enhanced caterpillar network. Let $T$ be a tree. Then any network $N'$ formed by leaf-grafting $N$ onto $T$ is tree-metrizable.
\end{Theorem}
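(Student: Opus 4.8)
The plan is to show that, after relocating its root, $N'$ becomes a \emph{plain} caterpillar network with a tree grafted onto one of its leaves, and then to finish with Theorem~\ref{Caterpillar} and Theorem~\ref{t:LeafReplacement}. Write $N'=T\#_\ell N$, and let $\rho$ be the root of the enhanced caterpillar $N$, now identified with the leaf $\ell$ of $T$. In $N'$ the vertex $\rho$ is interior: one of its children begins the long leaf-arc $e_n$ of the special leaf $x_n$, which carries all $n-1$ reticulation vertices $r_1,\dots,r_{n-1}$ (ordered from $x_n$ up to $\rho$, with the enhancing arc $e_{n-1}\to e_n$ ending at the topmost vertex $r_{n-1}$, just below $\rho$), while the other child continues the caterpillar backbone $v_{n-1},\dots,v_2$ down to the cherry $\widehat{x_1x_2}$.

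First I would relocate the root of $N'$ to a new vertex $R^*$ inserted on the arc $\rho\to r_{n-1}$, that is, strictly above every reticulation vertex. The path reversed by this relocation --- the $T$-path from the old root down to $\rho$, together with the short segment $\rho\to R^*$ --- lies entirely above all of the $r_i$; hence none of the reticulation arcs $s_i\to r_i$, nor their partner tree-arcs along $e_n$, is touched, and each $r_i$ survives as a reticulation vertex with exactly the same pair of incoming arcs. In particular the set of reticulation vertices, their incoming-arc pairs, and the reticulation probabilities are unchanged, and every display tree is merely re-rooted. Since $d_{(N^w,\beta)}$ depends only on the \emph{unrooted} weighted display trees and their probabilities $\beta_f$, and re-rooting leaves a tree unchanged as an unrooted tree, this relocation induces a weight- and probability-preserving correspondence under which $N'$ and the relocated network carry identical metrics for every choice of parameters.

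Next I would read off the relocated network. Now $x_n$ is the child of the root across the long reticulation-bearing arc $e_n$; descending the backbone one meets the special leaf $x_n$, then the vertex $\rho$ from which the grafted subtree $T^\uparrow$ hangs (the tree $T$ re-rooted at $\ell$, with leaf set $Y\setminus\{\ell\}$), then $x_{n-1},x_{n-2},\dots,x_3$ and finally the cherry $\widehat{x_1x_2}$. The reticulation arcs run from the leaves at backbone positions $1,\dots,n-1$ (namely $x_1,\dots,x_{n-1}$) into the special leaf at the top, while the pendant $y^*$ at position $n$ (the graft point $\ell$) receives none. This is exactly the reticulation pattern of a plain caterpillar network $\hat C$ on $n+1$ leaves, with the subtree $T^\uparrow$ grafted onto its position-$n$ leaf $y^*$. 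It is precisely here that the enhancement is indispensable: the extra arc $e_{n-1}\to e_n$ supplies the reticulation from position $n-1$, without which positions $n-1$ and $n$ would both be silent and the relocated network would fail to be a genuine caterpillar network --- which is what breaks tree-metrizability for the un-enhanced graft $N_1$ of Figure~\ref{NotTMNetwork}.

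Finally I would conclude. By Theorem~\ref{Caterpillar} the plain caterpillar network $\hat C$ is tree-metrizable, so by Theorem~\ref{t:LeafReplacement} (grafting the tree scion $T^\uparrow$ onto the tree-metrizable network stock $\hat C$) the network $\hat C\#_{y^*}T^\uparrow$ is tree-metrizable; transporting its witnessing weights and probabilities back across the relocation correspondence, and using that the relocation preserves $d$, yields that $N'$ is tree-metrizable. The step I expect to be the main obstacle is making the root-relocation rigorous --- verifying that the relocated object is a bona fide HGT network with the stated caterpillar-plus-graft structure, and that the parameter correspondence really matches every unrooted weighted display tree together with its $\beta_f$ --- after which the closing appeal to the two cited theorems is routine.
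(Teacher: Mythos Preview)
Your proposal is correct and follows essentially the same approach as the paper: relocate the root of $N'$ onto the arc between the grafting vertex $\rho$ and the topmost reticulation vertex $r_{n-1}$, observe that the relocated network is a plain caterpillar network (on $n+1$ leaves) with the re-rooted tree $T^\uparrow$ grafted onto its position-$n$ leaf, and then conclude via Theorem~\ref{Caterpillar} and Theorem~\ref{t:LeafReplacement}, using that root relocation leaves the unrooted weighted display trees and their probabilities---hence the metric $d$---unchanged. Your write-up is simply more explicit than the paper's about the bijection of display trees under relocation, about why the relocated object is a bona fide HGT network (the reversed path avoids all reticulation vertices), and about why the enhancing arc is needed to fill in the position-$(n-1)$ reticulation of the new caterpillar; these are exactly the details the paper leaves to the reader and to Figure~\ref{RRPic}.
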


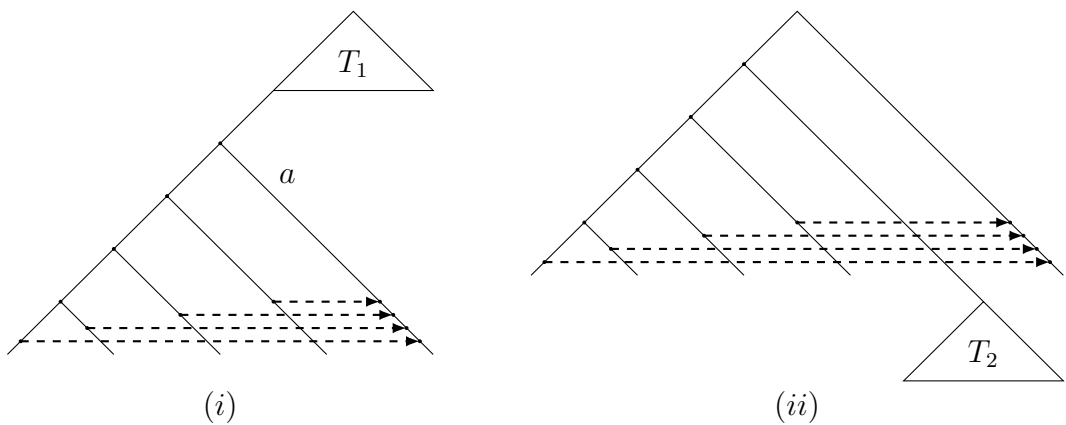
\begin{figure}[ht]
\centering
\begin{tikzpicture}[scale=0.35]
\draw (0,0) --(10,10);
\draw (2,2) --(4,0);
\draw (4,4) --(8,0);
\draw (6,6) --(12,0);
\draw (8,8) --(16,0);
\draw (10,10)--(13,13)--(16,10)--(10,10);

\draw [>=latex, ->, dashed, thick] (0.5,0.5) --(15.5,0.5);
\draw [>=latex, ->, dashed, thick] (3,1) --(15,1);
\draw [>=latex, ->, dashed, thick] (6.5,1.5) --(14.5,1.5);
\draw [>=latex, ->, dashed, thick] (10,2) --(14,2);

\draw[fill] (2,2) circle [radius=1.5pt];
\draw[fill] (4,4) circle [radius=1.5pt];
\draw[fill] (6,6) circle [radius=1.5pt];
\draw[fill] (8,8) circle [radius=1.5pt];
\draw[fill] (0.5,0.5) circle [radius=1.5pt];
\draw[fill] (15.5,0.5) circle [radius=1.5pt];
\draw[fill] (3,1) circle [radius=1.5pt];
\draw[fill] (15,1) circle [radius=1.5pt];
\draw[fill] (6.5,1.5) circle [radius=1.5pt];
\draw[fill] (14.5,1.5) circle [radius=1.5pt];
\draw[fill] (10,2) circle [radius=1.5pt];
\draw[fill] (14,2) circle [radius=1.5pt];

\node at (13,11) {$\mathstrut T_1$};
\node[above right=-0.1cm] at (10,6) {$\mathstrut a$};

\node at (8,-2) {$\mathstrut (i)$};
\end{tikzpicture}
\hspace{1cm}
\begin{tikzpicture}[scale=0.35]
\draw (0,0) --(10,10)--(20,0);
\draw (2,2) --(4,0);
\draw (4,4) --(8,0);
\draw (6,6) --(12,0);
\draw (8,8) --(20,-4)--(14,-4)--(17,-1);

\draw [>=latex, ->, dashed, thick] (0.5,0.5) --(19.5,0.5);
\draw [>=latex, ->, dashed, thick] (3,1) --(19,1);
\draw [>=latex, ->, dashed, thick] (6.5,1.5) --(18.5,1.5);
\draw [>=latex, ->, dashed, thick] (10,2) --(18,2);

\draw[fill] (2,2) circle [radius=1.5pt];
\draw[fill] (4,4) circle [radius=1.5pt];
\draw[fill] (6,6) circle [radius=1.5pt];
\draw[fill] (8,8) circle [radius=1.5pt];
\draw[fill] (0.5,0.5) circle [radius=1.5pt];
\draw[fill] (19.5,0.5) circle [radius=1.5pt];
\draw[fill] (3,1) circle [radius=1.5pt];
\draw[fill] (19,1) circle [radius=1.5pt];
\draw[fill] (6.5,1.5) circle [radius=1.5pt];
\draw[fill] (18.5,1.5) circle [radius=1.5pt];
\draw[fill] (10,2) circle [radius=1.5pt];
\draw[fill] (18,2) circle [radius=1.5pt];

\node at (17,-3) {$\mathstrut T_2$};

\node at (10,-5) {$\mathstrut (ii)$};
\end{tikzpicture}
\caption{(i) A caterpillar network grafted to a tree to form the network $N'$ in Theorem \ref{RootRel}; (ii) The network $N'$ from (i) modified by relocating the root to the arc $a$. In both diagrams a triangle indicates a tree structure.}
\label{RRPic}
\end{figure}

\begin{proof}
Let $N''$ be the network obtained by relocating the root of $N$ to the arc connecting the root of the enhanced caterpillar network to its reticulation descendant, marked $a$ in Figure \ref{RRPic}(i). Then $N''$ is of the form depicted in Figure \ref{RRPic}(ii), and all pairwise distances are retained. This is because for any display tree $\T_i'$ of $N'$, the corresponding display tree $\T_i$ of $N$ is identical when considered as an unrooted tree. It follows that pairwise distances are retained for any pair of leaves $x_i,x_j$ in $N'$.

With the new root location in $N''$ we have a tree-metrizable network --- a caterpillar network --- with a tree leaf-grafted onto it, so we can invoke Theorem \ref{t:LeafReplacement} to observe that $N''$ is tree-metrizable. Since pairwise distances are retained in the transformation from $N'$ to $N''$, it follows that $N'$ is tree-metrizable too.
\end{proof}

\begin{Example}
We now return to considering the examples from Figure \ref{NotTMNetwork}. We can now see that the relevant difference between $N_1$ and $N_2$ is that when we perform the corresponding root relocations, $N_1$ does not become a tree-metrizable network, whereas $N_2$ does.
\end{Example}

\section{Discussion and Future Questions}\label{s:discussion}

In this paper we have explored the observation from~\cite{francis2015tree} that it is possible for a phylogenetic network to carry a tree metric.  This observation means that a tree metric may be consistent with a network (or even many networks), in addition to the unique tree specified by the four point condition (Theorem~\ref{t:4PC}).

Specifically, we have asked which networks could possibly carry a tree metric, and addressed this in two main directions.  
Firstly, we have shown in Section~\ref{s:network.stock} that one can ``grow'' a tree-metrizable network, using leaf-grafting, so that one may construct a tree-metrizable network of any height or number of leaves by grafting a tree onto the leaf of an existing tree-metrizable network.  
Secondly, we have shown that any tree metric at all of height $h$ can be represented on a network with $h-1$ non-trivial reticulations (Corollary~\ref{c:height.h}).  
Furthermore, we have considered the grafting of networks onto trees to obtain tree-metrizable networks, and described a class of networks for which this is always possible (Section~\ref{s:tree.stock}).

There are several possible avenues of research opened by these results. For instance, we showed that there exists tree-metrizable HGT networks on $n$ leaves with $n-1$ non-trivial reticulation arcs. A reasonable question is whether even more complexity can be concealed: do there exist HGT networks with more arcs?

A second question for consideration is whether we can characterise exactly which tree-metrizable HGT networks can graft onto a tree? We have discovered one such class of structures (Theorem~\ref{RootRel}), but are there more? Such a classification would, hopefully, give us insight into what possible structures involving horizontal gene transfer can occur in recent history and still appear tree-like.

We finish the discussion with two conjectures regarding the mathematical structures behind tree-metrizability. Proof in either direction of these conjectures would allow deeper insight into tree-metrizability and may lead to answering our previous questions.

\begin{Conjecture}
Let $N$ be a level-$k$ HGT network with a biconnected component $B$. Then the minimal support network $N(B)$ of $B$ has at most $k-1$ cherries unless $N(B)$ strictly embeds in the top of $N$ or all reticulation arcs in $N(B)$ are between adjacent arcs of $T_{N(B)}$.
\end{Conjecture}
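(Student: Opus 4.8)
The plan is to prove the statement by induction on the level $k$, taking the theorem established above (the level-$2$ case, in which ``caterpillar underlying tree'' is precisely the assertion of a single cherry) as the base case, and assuming throughout that $N$ is tree-metrizable. This last hypothesis is essential and, I believe, implicit in the intended statement: without it a single non-trivial reticulation already produces a minimal support network with two cherries and no root (compare Figure~\ref{BCPic}), violating the bound, even though such a network is never tree-metrizable by the remark following Lemma~\ref{NotTwo}. The main engine of the argument will again be Lemma~\ref{NotTwo}: to bound the number of cherries I would repeatedly exhibit a quartet $q$ on which the display trees of $N(B)$ restrict to exactly two isomorphism classes, so that $N(B)$, and hence $N$, fails to be tree-metrizable.

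First I would reduce to the case in which every reticulation arc of $B$ is non-trivial. If some arc $a$ of $B$ lies between adjacent arcs of $T_{N(B)}$ in the sense of Lemma~\ref{OmitAdjacent}, I delete it and suppress the resulting degree-two vertices; by that lemma all pairwise distances, and hence tree-metrizability, are preserved, while the level drops. The delicate point is that removing $a$ can change the cherry structure of the support network (it may expose a new cherry or collapse an existing one), so the inductive hypothesis must be applied while carefully tracking how the cherry count moves against the drop in level. If no adjacent arc exists then neither exception holds, and the task is to derive a contradiction from the assumption of $k$ or more cherries.

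The core of the argument generalizes the two-cherry analysis of the level-$2$ theorem. Given two cherries of $T_{N(B)}$, say $\widehat{x_1x_2}$ and $\widehat{x_3x_4}$, chosen of minimal height, minimality of the support network forces each cherry to contain an endpoint of a reticulation arc, and Theorem~\ref{3IsoClass} then forces a specific pattern of HGT-connections (one leaf connected to both of its non-adjacent partners) whenever the spanning quartet is to display all three isomorphism classes. Because $N(B)$ does not contain the root, the path in $T_{N(B)}$ joining the two cherries carries a pendant leaf $\ell$; replacing a cherry-leaf by $\ell$ produces, exactly as in Figure~\ref{f:leaf.btw.2.cherries}, a quartet realizing only two of the three resolutions. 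In the level-$2$ setting this finished the proof, but in general the missing resolution can be restored only by spending a further reticulation arc. The heart of the induction is therefore a counting lemma asserting that repairing the forbidden two-class quartets created by $c$ cherries consumes at least $c+1$ reticulations of $B$, yielding $c \le k-1$.

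The step I expect to be the main obstacle is precisely this accounting of how reticulations are shared among the constraints imposed by distinct cherries: a single reticulation arc may participate in several cherry-configurations simultaneously, so a naive charging argument overcounts, and one is forced to analyze the global arrangement of reticulation endpoints along the caterpillar-like backbone rather than one quartet at a time. A secondary difficulty is giving a clean, root-aware definition of ``strictly embeds in the top'' and verifying that it is exactly the configuration in which the separating leaf $\ell$ fails to exist; I would phrase it through the condition that $LSA(V(B))$ coincides with the root of $N$, which guarantees that the two highest cherries abut at the root with no pendant leaf between them and so escape the two-class obstruction.
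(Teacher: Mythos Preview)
The statement you are addressing is a \emph{conjecture} in the paper, not a theorem: the authors list it among open problems in the Discussion section and provide no proof. There is therefore no paper proof to compare your proposal against.

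That said, your observation that tree-metrizability of $N$ must be assumed is correct and important: the conjecture as literally stated is false without it (your counterexample via Figure~\ref{BCPic} is apt), and the authors almost certainly intended this hypothesis, since the level-$2$ theorem immediately preceding the conjecture carries exactly that assumption. Your plan to extend the level-$2$ argument by induction on $k$ is natural, and you correctly identify the central difficulty as the counting lemma asserting that $c$ cherries force at least $c+1$ reticulations. But that lemma is not proved in your proposal, and you yourself flag the sharing of reticulations among multiple cherry-constraints as the main obstacle. This is a genuine gap: the level-$2$ proof proceeds by exhaustive case analysis on where two reticulation arcs can land relative to a single pair of cherries, and nothing in your outline shows how that analysis scales or how the charging argument avoids the overcounting you describe. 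There is also a subtlety in the reduction step: Lemma~\ref{OmitAdjacent} only applies to an HGT arc between siblings with no intervening vertex below their common parent, not to an arbitrary arc between adjacent tree arcs, so ``delete any trivial arc and induct'' requires more care than stated. Your proposal is a reasonable sketch for attacking an open problem, but it is not a proof.
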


\begin{Conjecture}
Let $N(B)$ be a tree-metrizable minimal support network of a biconnected component $B$. Suppose $N(B)$ has $n$ leaves. Then $N(B)$ has at least $n-2$ reticulation arcs.  
\end{Conjecture}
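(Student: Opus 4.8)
The plan is to prove the lower bound by induction on the number of leaves $n$, using the tree-metrizability obstructions (Lemma~\ref{NotTwo} and Theorem~\ref{3IsoClass}) to certify that each leaf must ``cost'' a reticulation arc. Before setting up the induction I would record the relevant arc/vertex count: writing $r$ for the number of reticulation vertices (equivalently, reticulation arcs) of $N(B)$ and applying the degree constraints of Definition~\ref{HGTDef}, a standard handshake computation shows that the number of interior tree vertices is $n+r-2$ and that the first Betti number (cycle rank) of $N(B)$ equals $r$. Since $B$ is biconnected, every arc of the core lies on a cycle, and each independent cycle is created by a reticulation, so the task is precisely to show that tree-metrizability forces at least $n-2$ independent cycles. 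The caterpillar networks of Theorem~\ref{Caterpillar} attain $r=n-2$, which both verifies tightness and suggests that the extremal configuration is a single long ``spine'' carrying all reticulations.

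For the induction, the base cases $n\le 3$ are immediate: a biconnected component cannot be a tree, so $r\ge 1=n-2$ when $n=3$, and the statement is vacuous for $n\le 2$. For the inductive step I would locate a leaf that can be pruned together with exactly one reticulation arc. Since the underlying tree $T_{N(B)}$ has at least one cherry $\widehat{x_ax_b}$, and since in a minimal support network the tree-vertex parent of a lowest cherry must have a reticulation descendant (the observation used in the proof of the preceding theorem on level-$2$ networks), at least one of the leaf arcs $e_a,e_b$ is the source or target of a reticulation arc $A$. I would delete the corresponding leaf together with $A$, suppress the resulting degree-two vertices, and argue that the outcome is again a tree-metrizable minimal support network of a biconnected component, now on $n-1$ leaves and with $r-1$ reticulations; the inductive hypothesis then gives $r-1\ge (n-1)-2$, i.e.\ $r\ge n-2$. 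In either attachment case (whether $A$ ends on $e_a$, so that $x_a$ hangs off a reticulation vertex, or $A$ begins on $e_a$ and ends at a reticulation vertex elsewhere) the deletion removes exactly one reticulation vertex after suppression.

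The subtle point — and what I expect to be the main obstacle — is showing that such a clean pruning always exists: removing the chosen leaf must delete \emph{exactly} one reticulation (not zero and not a cascade of several), and the remaining graph must still be biconnected, still be a minimal support network, and still be tree-metrizable. Biconnectedness is the delicate property, since deleting an arc of a $2$-edge-connected graph can disconnect it; one must choose the cherry-leaf so that the associated reticulation arc is not a bridge of the core after pruning. I would control this by a case analysis on how the two leaves of the cherry attach to $B$ (directly below a reticulation vertex, or as a tree child whose sibling reticulation lies inside versus outside the restricted network $N\vert_q$), mirroring the cases in the level-$2$ theorem but now for arbitrary level. To keep the structure tractable across the induction, I expect it is cleanest to prove this result \emph{simultaneously} with the companion Conjecture bounding the number of cherries: controlling the cherries is exactly what guarantees a safely removable leaf, while each removal should be shown to drop the cherry count by at most one. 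The remaining checks — that tree-metrizability descends to the pruned network (a restriction-of-quartets argument via Lemma~\ref{NotTwo}) and that the minimal-support labelling can be repaired — are routine once biconnectedness of the core is secured.
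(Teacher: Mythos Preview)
The paper does not prove this statement: it is stated in the Discussion section as an open \emph{Conjecture}, explicitly offered as a direction for future work. There is therefore no proof in the paper to compare your proposal against.

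As for the proposal itself, what you have written is a plan rather than a proof, and you have correctly located the real difficulty without resolving it. The inductive strategy stands or falls on the existence of a ``cleanly prunable'' leaf, and none of the three properties you need the pruned object to retain --- biconnectedness of the core, being a minimal support network, and tree-metrizability --- is established. In particular, your claim that tree-metrizability descends ``via Lemma~\ref{NotTwo}'' is not right as stated: Lemma~\ref{NotTwo} is a one-way obstruction (two quartet shapes imply \emph{non}-tree-metrizability), so it certifies failure, not success. To push tree-metrizability down you would need to exhibit actual weights and reticulation probabilities on the pruned network realising a tree metric, and deleting a reticulation arc changes both the display-tree set and the mixing coefficients in ways that do not obviously preserve any given tree-metrizing assignment. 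Likewise, the appeal to the level-$2$ case analysis does not transfer: that argument exploits that there are at most two reticulations in $B$, whereas here the level is unbounded and several reticulation arcs may meet the same leaf edge, so removing one need not leave a biconnected core. Until these steps are filled in, the induction does not close, which is consistent with the paper treating the statement as open.
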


\appendix

\section{Proofs of Lemma \ref{InequalityLemma} and Lemma~\ref{OmitAdjacent}}

 \begin{Lemma}[Lemma \ref{InequalityLemma}]
 Let $N$ be a four-leaf network with leaves $\{x_1,x_2,x_3,x_4\}$. Suppose $| \T_N|=3,$, so $\T_N = \{T_r, T_s, T_t \}$. Let $T_w(N)=\{T_1^{w_1},\dots,T_k^{w_k} \}$, and suppose that $\alpha_j$ is the probability assigned to $T_j$, and $p_j$ is the length of the internal arc of $T_j$. Then $N$ is tree-metrizable on $T_r$ if and only if
 \begin{center}
 $\displaystyle \sum_{T_j \cong T_r} p_j \alpha_j > \sum_{T_j \cong T_s} p_j \alpha_j = \sum_{T_j \cong T_t} p_j \alpha_j$.
 \end{center}
 \end{Lemma}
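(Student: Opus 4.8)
The plan is to reduce the tree-metrizability of the four-leaf network $N$ to a single application of the four-point condition (Theorem \ref{t:4PC}), since a four-element leaf set admits only one quartet, namely $q=\{x_1,x_2,x_3,x_4\}$ itself. Thus $N$ is tree-metrizable precisely when $d=d_{(N^w,\beta)}$ satisfies the four-point inequality on $q$, and it is $T_r$-metrizable exactly when the split realised by $T_r$ is the one that is strictly separated from the other two.

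First I would record the three antipodal sums for a single weighted four-leaf tree. If $T_j^U$ has shape $x_1x_2|x_3x_4$, pendant arc lengths $a_1,\dots,a_4$, and internal arc length $p_j$, then writing $S_j=a_1+a_2+a_3+a_4$ a direct computation gives
\begin{align*}
d_{T_j}(x_1,x_2)+d_{T_j}(x_3,x_4) &= S_j, \\
d_{T_j}(x_1,x_3)+d_{T_j}(x_2,x_4) &= S_j+2p_j, \\
d_{T_j}(x_1,x_4)+d_{T_j}(x_2,x_3) &= S_j+2p_j.
\end{align*}
The key structural point is that the pendant contribution $S_j$ is common to all three sums, so that the split actually displayed by $T_j$ is exactly the one whose sum omits the $2p_j$ term, while the two ``wrong'' sums are forced to be equal to each other.

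Next I would take the convex combination $d=\sum_j\alpha_j d_{T_j}$ and assemble the three split-sums $\sigma_r,\sigma_s,\sigma_t$ of $d$ on $q$ (indexed so that $\sigma_u$ is the sum whose split matches the shape of $T_u$). Setting $\bar S=\sum_j\alpha_j S_j$ and $P_u=\sum_{T_j\cong T_u}\alpha_j p_j$ for $u\in\{r,s,t\}$, linearity of \eqref{eq:dist.on.N} gives
\begin{align*}
\sigma_r &= \bar S+2P_s+2P_t, \\
\sigma_s &= \bar S+2P_r+2P_t, \\
\sigma_t &= \bar S+2P_r+2P_s,
\end{align*}
since a display tree of shape $u$ contributes its $2p_j$ to precisely the two sums other than $\sigma_u$. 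The common term $\bar S$ therefore cancels in every pairwise difference, so the comparison of the three sums depends only on $P_r,P_s,P_t$.

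Finally I would translate the four-point condition. For $d$ to be the metric of a binary tree of shape $T_r$, one needs $\sigma_r$ to be the strict minimum with $\sigma_s=\sigma_t$; the equality $\sigma_s=\sigma_t$ reduces to $P_s=P_t$, and $\sigma_r<\sigma_s$ reduces to $P_s<P_r$, yielding $P_r>P_s=P_t$, which is the claimed inequality. Conversely, this inequality gives $\sigma_r<\sigma_s=\sigma_t$, so $d$ obeys the four-point condition and the realising tree has shape $T_r$ with strictly positive internal length $(\sigma_s-\sigma_r)/2=P_r-P_s$, and strictly positive pendant lengths since all weights are positive; hence $N$ is $T_r$-metrizable. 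The point requiring the most care is the strictness: I must argue that weak inequality throughout (equality of all three sums) corresponds to a degenerate star metric, which is not the metric of a binary tree and so is excluded by $T_r$-metrizability, and that the pendant totals $S_j$, although they genuinely vary between display trees, enter all three sums identically and thus drop out of the comparison.
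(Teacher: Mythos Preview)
Your proposal is correct and follows essentially the same route as the paper: both reduce the question to the single quartet $\{x_1,x_2,x_3,x_4\}$, assemble the three four-point-condition sums for the network distance $d$, and observe that the pendant contributions are common to all three sums so that only the weighted internal-arc totals $P_r,P_s,P_t$ govern the comparison. Your presentation is slightly more streamlined---you compute $\sigma_u=\bar S+2P_v+2P_w$ explicitly and read off the condition---whereas the paper groups the display trees by shape, forms the differences $S_2-S_3$ and $S_2-S_1$, and uses $b_j-c_j=\pm 2p_j$ to arrive at the same conclusion; you also treat the strictness (star-metric) issue more carefully than the paper does.
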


 \begin{proof}
 Label the four leaves $\{x_1,x_2,x_3,x_4\}$. 
 Let 
 \begin{center}
 $\displaystyle d_i = \sum_{T_j \cong T_i} \alpha_j d_{(T_j^{w_j})}$
 \end{center}
 for $i=r,s,t$, and let $d=d_1+d_2+d_3$. Let $a_i=d_{(T_i^{w_i})}(x_1,x_2)+d_{(T_i^{w_i})}(x_3,x_4)$, $b_i=d_{(T_i^{w_i})}(x_1,x_3)+d_{(T_i^{w_i})}(x_2,x_4)$ and $c_i=d_{(T_i^{w_i})}(x_1,x_4)+d_{(T_i^{w_i})}(x_2,x_3)$. Define
\begin{align*}
 \setlength\itemsep{0em}
 A_i & = d_i(x_1,x_2)+d_i(x_3,x_4)= \sum_{T_j \cong T_i} a_j \alpha_j \\
 B_i & = d_i(x_1,x_3)+d_i(x_2,x_4)= \sum_{T_j \cong T_i} b_j \alpha_j  \\
 C_i & = d_i(x_1,x_4)+d_i(x_2,x_3)= \sum_{T_j \cong T_i} c_j \alpha_j .
 \end{align*}
Without loss of generality, suppose $T_r$, $T_s$ and $T_t$ are the quartets $x_1x_2|x_3x_4$, $x_1x_3|x_2x_4$ and $x_1x_4|x_2x_3$ respectively. It follows that $A_r < B_r = C_r, B_s < A_s = C_s$, and $C_t < A_t = B_t$. 

 We shall now find a probability distribution so that $d$ is a tree metric on $x_1x_2|x_3x_4$. Let $S_1 = d(x_1,x_2)+d(x_3,x_4), S_2 = d(x_1,x_3)+d(x_2,x_4)$, and $S_3 = d(x_1,x_4)+d(x_2,x_3)$. Then $d$ is a tree metric iff $S_1 < S_2 = S_3$. Considering $S_2=S_3$ first,
\begin{align*}
 &&d(x_1,x_3)+d(x_2,x_4) & = d(x_1,x_4)+d(x_2,x_3) \\
 &\implies &B_r+B_s+B_t & = C_r + C_s + C_t \\
 &\implies &B_t - C_t & = C_s -B_s \\
 &\implies &\sum_{T_j \cong T_t} b_j \alpha_j - \sum_{T_j \cong T_t} c_j \alpha_j & = \sum_{T_j \cong T_s} c_j \alpha_j - \sum_{T_j \cong T_s} b_j \alpha_j \\
 &\implies &\sum_{T_j \cong T_t} (b_j - c_j) \alpha_j & = \sum_{T_j \cong T_s} (c_j - b_j) \alpha_j,
 \end{align*}
which, noting that $b_j-c_j = 2p_j$ for $T_j \cong T_t$ and $c_j-b_j= 2p_j$ for $T_j \cong T_s$ implies that
 \begin{center}
 $\displaystyle \sum_{T_j \cong T_t} p_j \alpha_j = \sum_{T_j \cong T_s} p_j \alpha_j$,
 \end{center}
 which is the desired equality for this lemma.

 Now considering the requirement that $S_1 < S_2$, we have
 \begin{align*}
&& d(x_1,x_2)+d(x_3,x_4) & < d(x_1,x_3) + d(x_2,x_4) \\
&\implies & A_r + A_s + A_t & < B_r + B_s + B_t \\
&\implies & A_s - B_s & < B_r - A_r \\
&\implies & \sum_{T_j \cong T_s} a_j \alpha_j - \sum_{T_j \cong T_s} b_j \alpha_j & < \sum_{T_j \cong T_r} b_j \alpha_j - \sum_{T_j \cong T_r} a_j \alpha_j \\
&\implies & \sum_{T_j \cong T_s} (a_j - b_j) \alpha_j & < \sum_{T_j \cong T_r} (b_j - a_j) \alpha_j
 \end{align*}
 which, noting that $a_j-b_j = 2p_j$ for $T_j \cong T_s$ and $b_j-a_j= 2p_j$ for $T_j \cong T_r$ implies that
\[ \sum_{T_j \cong T_s} p_j \alpha_j < \sum_{T_j \cong T_r} p_j \alpha_j,\]
which is the inequality required for this lemma. The remaining cases (e.g. when $T_r=x_1x_3|x_2x_4, T_s=x_1x_2|x_3x_4$ and $T_t=x_1x_4|x_2x_3$, etc.) are proved similarly.
 \end{proof}

\begin{Lemma}[Lemma \ref{OmitAdjacent}]
Let $N^w$ be an HGT network with reticulation probabilities $\beta$, and an HGT arc $h$ between a pair of siblings such that there is no other vertex between the ends of $a$ and their common parent vertex. Let $N_1$ be the HGT network obtained by deleting $a$ from $N$  and suppressing its end vertices. Then there exist arc weights $w_1$ and reticulation probabilities $\beta_1$ on $N_1$ so that
\begin{center}
$d_{(N_1^{w_1},\beta_1)} = d_{(N^w,\beta)}$.
\end{center}
\end{Lemma}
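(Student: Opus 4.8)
The plan is to track exactly how the reticulation on $a$ perturbs pairwise distances and to show that its entire effect can be pushed into three local arc weights. First I would fix notation for the local structure. By hypothesis the two ends of $a$ are children $p_1,p_2$ of a common parent $v$, with the arcs $(v,p_1)$ and $(v,p_2)$ direct; say $a$ runs from $p_1$ to $p_2$, so $p_2$ is the reticulation vertex and $a'=(v,p_2)$ is its other incoming arc (the other orientation is symmetric). Write $g$ for the parent of $v$, so $(g,v)$ is a tree arc, and give the five relevant tree arcs $(g,v),(v,p_1),(v,p_2)$ and the down-arcs of $p_1,p_2$ the weights $\gamma,s_1,s_2,t_1,t_2$. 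Let $L_1,L_2$ be the leaves below $p_1,p_2$ and $R$ the rest. The key structural remark is that $p_1$ already has out-degree two and $v$ is not a reticulation vertex, so no other reticulation arc can attach at $v$ or $p_1$; consequently, among these local tree arcs the only one ever deleted in forming a display tree is $a'$, deleted precisely when the choice at $p_2$ selects $a$. Hence the local topology is governed \emph{solely} by the choice at $p_2$, independently of the choices at the other reticulations.

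Next I would record what the two resolutions of $p_2$ do. Resolving with $a'$ (probability $1-\alpha$, where $\alpha=\alpha(a)$) branches at $v$, giving side-arcs of weight $s_1+t_1$ and $s_2+t_2$ and an arc above of weight $\gamma$; resolving with $a$ (probability $\alpha$) suppresses $v$, branches at $p_1$, and gives side-arcs of weight $t_1$ and $t_2$ (since $a$ carries weight $0$) and an arc above of weight $\gamma+s_1$. As in Lemma~\ref{AdjacentLemma}, these two weighted trees are \emph{identical as unrooted topologies}, differing only in the weights of these three arcs.

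The crux is then a linearity argument. For a tree whose topology is fixed and which differs from another only on a fixed set of arcs, the convex combination of the two induced distance functions is again the distance function of that topology with the arcs carrying the averaged weights. Applying this for each fixed choice $f'$ on the remaining reticulations, the $\alpha$-weighted average of the two distance functions at $p_2$ is realised by the single weighted tree whose three local arcs carry
\[
\widehat{e}_1 = t_1+(1-\alpha)s_1,\qquad \widehat{e}_2 = t_2+(1-\alpha)s_2,\qquad \widehat{\gamma}=\gamma+\alpha s_1,
\]
and whose every other arc keeps its original weight. Crucially these three values are independent of $f'$. I would therefore define $\widehat{w}$ to equal $w$ off the local region and to take the values above on the three affected arcs, and define $\widehat{\beta}$ to be the product distribution on $\widehat{N}$ obtained from $\beta$ by dropping the factor $\alpha(f(p_2))$ at $p_2$. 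Summing \eqref{eq:dist.on.N} over $f'$ and using that $\beta_f$ factors as in \eqref{eq:beta.def}, the per-$f'$ averages assemble to $d_{(\widehat{N}^{\widehat{w}},\widehat{\beta})}=d_{(N^w,\beta)}$. Since $\widehat{e}_1,\widehat{e}_2,\widehat{\gamma}$ are each a positive term plus a nonnegative term they are strictly positive, so $\widehat{w}$ is a legitimate weight function, and $\widehat{\beta}$ is strictly positive as a product of the surviving $\alpha$-values.

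The main obstacle is the bookkeeping in the third paragraph: making fully rigorous that for each fixed $f'$ the convex combination of two weighted trees agreeing off a fixed arc-set is the weighted tree with averaged weights, and that the resulting weights are the \emph{same} for every $f'$, so that the outer sum over $f'$ collapses to a single weighted network $\widehat{N}^{\widehat{w}}$. Once the topology-preservation and the $f'$-independence of $\widehat{e}_1,\widehat{e}_2,\widehat{\gamma}$ are pinned down, positivity and the distance identity are routine.
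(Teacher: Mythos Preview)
Your proposal is correct and follows essentially the same strategy as the paper: isolate the three local tree arcs around the sibling HGT arc, observe that the two resolutions at $p_2$ yield isomorphic trees differing only in those three weights, and absorb the $\alpha$-average into new weights on $\widehat{N}$. Your averaged values $\widehat{e}_1=t_1+(1-\alpha)s_1$, $\widehat{e}_2=t_2+(1-\alpha)s_2$, $\widehat{\gamma}=\gamma+\alpha s_1$ coincide exactly with the paper's $A_2,A_3,A_1$ (in their notation $a_1=\gamma$, $a_2=s_1$, $a_3=t_1$, $a_4=s_2$, $a_5=t_2$); the only cosmetic difference is that the paper obtains these by solving three pairwise-distance equations among $v_1,v_2,v_3$, whereas you invoke linearity of tree distances in arc weights directly.
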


\begin{proof}
Suppose the weighted display trees of $N^w$ are $\T_N^w=\{T_1^{w_1},\dots,T_{2^r}^{w_{2^r}} \}$ with respective probabilities $\beta_i$ and the weighted display trees of $\widehat{N}$ are $\T_{\widehat{N}}^{\widehat{w}}= \{ \widehat{T}_1^{\widehat{w}_1},\dots,\widehat{T}_{2^{r-1}}^{\widehat{w}_{2^{r-1}}} \}$ with respective probabilities $\widehat{\beta}_i$. 

Consider $\T_{\widehat{N}}^{\widehat{w}}$. We can associate each weighted display tree $\widehat{T}_i^{\widehat{w}_i}$ of $\widehat{N}^{\widehat{w}}$ in a natural way to a pair of weighted display trees of $N^w$, by considering those trees made with the same selection of arcs as $N$ plus either keeping or deleting $h$. Rearrange the indexing of $\T_{N}^{w}$ if necessary so that $\widehat{T}_i^{\widehat{w}_i}$ is associated with $T_{2i-1}^{w_{2i-1}}, T_{2i}^{w_{2i}}$. If we can ensure that
\begin{center}
$\displaystyle \beta_{2i-1} d_{(T_{2i-1}^{w_{2i-1}})} + \beta_{2i} d_{(T_{2i}^{w_{2i}})}= \widehat{\beta}_i d_{(\widehat{T}_i^{\widehat{w}_i})}$
\end{center}
then the lemma is proven.

To this end, set all arc weights and reticulation probabilities to be identical in $N^w$ and $\widehat{N}^{\widehat{w}}$ except for those depicted in Figure \ref{AdjacentRetic}. Label $v_1,v_2,v_3$ as in Figure \ref{AdjacentRetic} such that there are no vertices between them that are not shown in the diagram. Then we label the arc weights and reticulations between $v_1, v_2$ and $v_3$ in $N^w$ and $\widehat{N}^{\widehat{w}}$ as in Figure \ref{AdjacentRetic}.

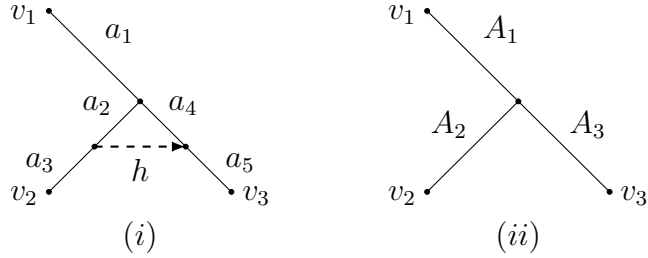
\begin{figure}[H]
\centering
\begin{tikzpicture}[scale=0.6]
\draw (0,0) --(2,2)--(4,0);
\draw (2,2) --(0,4);

\draw [>=latex, ->, dashed, thick] (1,1) --(3,1);

\draw[fill] (0,0) circle [radius=1.5pt];
\draw[fill] (2,2) circle [radius=1.5pt];
\draw[fill] (4,0) circle [radius=1.5pt];
\draw[fill] (0,4) circle [radius=1.5pt];
\draw[fill] (1,1) circle [radius=1.5pt];
\draw[fill] (3,1) circle [radius=1.5pt];

\node[above right=-0.01cm] at (1,3) {$\mathstrut a_1$};
\node[above left=-0.1cm] at (1.5,1.5) {$\mathstrut a_2$};
\node[above left=-0.1cm] at (0.25,0.25) {$\mathstrut a_3$};
\node[above right=-0.1cm] at (2.5,1.5) {$\mathstrut a_4$};
\node[above right=-0.1cm] at (3.75,0.25) {$\mathstrut a_5$};

\node[below] at (2,1) {$\mathstrut h$};

\node[left] at (0,4) {$\mathstrut v_1$};
\node[left] at (0,0) {$\mathstrut v_2$};
\node[right] at (4,0) {$\mathstrut v_3$};

\node at (2,-1) {$\mathstrut (i)$};

\end{tikzpicture}
\hspace{1cm}
\begin{tikzpicture}[scale=0.6]
\draw (0,0) --(2,2)--(4,0);
\draw (2,2) --(0,4);

\draw[fill] (0,0) circle [radius=1.5pt];
\draw[fill] (2,2) circle [radius=1.5pt];
\draw[fill] (4,0) circle [radius=1.5pt];
\draw[fill] (0,4) circle [radius=1.5pt];

\node[above right=-0.01cm] at (1,3) {$\mathstrut A_1$};
\node[above left=-0.1cm] at (1,1) {$\mathstrut A_2$};
\node[above right=-0.1cm] at (3,1) {$\mathstrut A_3$};

\node[left] at (0,4) {$\mathstrut v_1$};
\node[left] at (0,0) {$\mathstrut v_2$};
\node[right] at (4,0) {$\mathstrut v_3$};

\node at (2,-1) {$\mathstrut (ii)$};

\end{tikzpicture}
\caption{(i) The section between vertices $v_1,v_2$ and $v_3$ in $N^w$; (ii) The corresponding section of $\widehat{N}^{\widehat{w}}$.}
\label{AdjacentRetic}
\end{figure}

We now consider the resulting weighted display trees. Of course, $v_2$ and $v_3$ may not appear in $T_{2i-1}$ and $T_{2i}$, but this will occur if and only if those same arcs/vertices do not appear in $\widehat{T}_i$. The corresponding arc weights of the weighted display trees obtained by keeping/deleting $h$ (in the case that $v_2$ and $v_3$ are in the display tree) are shown in Figure \ref{wdt1}.

\begin{figure}[H]
\centering
\begin{tikzpicture}[scale=0.6]
\draw (0,0) --(2,2)--(4,0);
\draw (2,2) --(0,4);

\draw[fill] (0,0) circle [radius=1.5pt];
\draw[fill] (2,2) circle [radius=1.5pt];
\draw[fill] (4,0) circle [radius=1.5pt];
\draw[fill] (0,4) circle [radius=1.5pt];

\node[above right=-0.01cm] at (1,3) {$\mathstrut a_1+a_2$};
\node[above left=-0.1cm] at (1,1) {$\mathstrut a_3$};
\node[above right=-0.1cm] at (3,1) {$\mathstrut a_5$};

\node[left] at (0,4) {$\mathstrut v_1$};
\node[left] at (0,0) {$\mathstrut v_2$};
\node[right] at (4,0) {$\mathstrut v_3$};

\node at (2,-1) {$\mathstrut (i)$};

\end{tikzpicture}
\hspace{1cm}
\begin{tikzpicture}[scale=0.6]
\draw (0,0) --(2,2)--(4,0);
\draw (2,2) --(0,4);

\draw[fill] (0,0) circle [radius=1.5pt];
\draw[fill] (2,2) circle [radius=1.5pt];
\draw[fill] (4,0) circle [radius=1.5pt];
\draw[fill] (0,4) circle [radius=1.5pt];

\node[above right=-0.01cm] at (1,3) {$\mathstrut a_1$};
\node[above left=-0.1cm] at (1,1) {$\mathstrut a_2+a_3$};
\node[above right=-0.1cm] at (3,1) {$\mathstrut a_4+a_5$};

\node[left] at (0,4) {$\mathstrut v_1$};
\node[left] at (0,0) {$\mathstrut v_2$};
\node[right] at (4,0) {$\mathstrut v_3$};

\node at (2,-1) {$\mathstrut (ii)$};

\end{tikzpicture}

\caption{(i) The relevant section of a display tree of $N$ that contains vertices $v_1,v_2,v_3$ when keeping $\alpha$; (ii) The corresponding section with deletion of $\alpha$.}
\label{wdt1}
\end{figure}
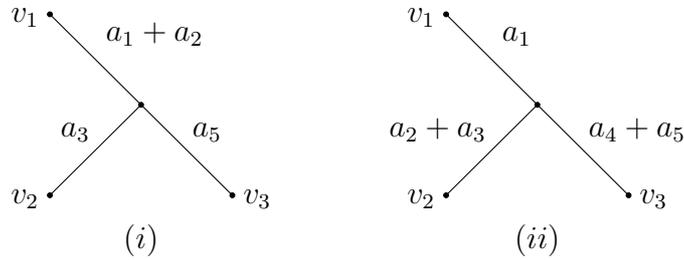

If we can set values for $A_1,A_2,A_3$ so that
\[ d_{(\widehat{T}_i^{\widehat{w}_i},\widehat{\beta}_i)}(s,t) = d_{(T_{2i-1}^{w_{2i-1}},\beta_{2i-1})}(s,t) + d_{(T_{2i}^{w_{2i}},\beta_{2i})}(s,t)
\]
for $(s,t)=(v_1,v_2),(v_1,v_3)$ and $(v_2,v_3)$, then we achieve the required result. In particular, this would follow if these equalities hold:
\begin{align*}
A_1+A_2 & = \alpha(a_1+a_2+a_3) + (1-\alpha)(a_1+a_2+a_3) & (\text{distance }v_1\text{ to }v_2)\\
& = a_1+a_2+a_3 \\
A_1+A_3 & = a_1+a_5+\alpha a_2 + (1-\alpha)a_4 & (\text{distance }v_1\text{ to }v_3)\\
A_2+A_3 & = a_3+a_5+(1-\alpha)(a_2+a_4). & (\text{distance }v_2\text{ to }v_3)
\end{align*}
Setting $A_1=a_1+\alpha a_2$, $A_2=a_3+(1-\alpha)a_2$ and $A_3=a_5+(1-\alpha)a_4$ satisfies these criteria, and all of these values are positive. This completes the result.
\end{proof}

\end{document}